\documentclass[journal]{IEEEtran}
\usepackage{microtype}
\usepackage[ruled]{algorithm2e} 
\usepackage{tikz}
\usepackage{xcolor}
\usepackage{pgfplots}
\usepackage{amsmath}
\usepackage{amsthm}
\usepackage{amssymb}
\usepackage{verbatim}
\usepackage{graphicx}
\usepackage{ifthen}
\usepackage{color}
\definecolor{bblue}{HTML}{4F81BD}
\definecolor{rred}{HTML}{C0504D}
\definecolor{ggreen}{HTML}{9BBB59}
\definecolor{ppurple}{HTML}{9F4C7C}


\bibliographystyle{plainurl}




%
\newtheorem{theorem}{Theorem}
\newtheorem{assumption}[theorem]{Assumption}
\newtheorem{definition}[theorem]{Definition}
\newtheorem{remark}[theorem]{Remark}
\newtheorem{proposition}{Proposition}
\newtheorem{lemma}{Lemma}


\def\bb0{{\mathbb{0}}}


\def\bb{{\mathbf{b}}}

\def\b0{{\mathbf{0}}}
\def\opt{\mathsf{OPT}}


\def\b1{{\mathbf{1}}}


\def\bbE{{\mathbb{E}}}

\def\bbN{{\mathbb{N}}}

\def\bbR{{\mathbb{R}}}

\def\bbZ{{\mathbb{Z}}}


\def\cJ{\mathcal{J}}


\def\sfD{\mathsf{D}}

\def\sfJ{\mathsf{J}}


\def\sfc{{\mathsf{c}}}

\def\sfff{{\mathsf{f}}}

\def\sfm{{\mathsf{m}}}
\def\sfn{{\mathsf{n}}}

\def\sfr{{\mathsf{r}}}

\def\sf0{{\mathsf{0}}}

\def\nn{\nonumber}

\newboolean{showcomments}
\setboolean{showcomments}{true}
\newcommand{\jk}[1]{  \ifthenelse{\boolean{showcomments}}
{ \textcolor{red}{(JK says:  #1)}} {}  }
\newcommand{\rv}[1]{  \ifthenelse{\boolean{showcomments}}
{ \textcolor{red}{(RV says:  #1)}} {}  }

\newcommand{\ra}{\rightarrow}


\newcommand{\ignore}[1]{}
\begin{document}
\title{Speed Scaling On Parallel Servers with MapReduce Type Precedence Constraints} 
 
 \author{Rahul Vaze and~Jayakrishnan Nair
\thanks{Rahul Vaze is with the School of Technology and Computer Science, Tata Institute of Fundamental Research, India. Jayakrishnan Nair is with the Department of Electrical Engineering, IIT Bombay, India.}}

\maketitle

\begin{abstract}
  A multiple server setting is considered, where each server has
  tunable speed, and increasing the speed incurs an energy cost. Jobs
  arrive to a single queue, and each job has two types of sub-tasks,
  map and reduce, and a {\bf precedence} constraint among them: any
  reduce task of a job can only be processed once all the map tasks of
  the job have been completed. In addition to the scheduling problem,
  i.e., which task to execute on which server, with tunable speed, an
  additional decision variable is the choice of speed for each server,
  so as to minimize a linear combination of the sum of the flow times
  of jobs/tasks and the total energy cost. The precedence constraints
  present new challenges for the speed scaling problem with multiple
  servers, namely that the number of tasks that can be executed at any
  time may be small but the total number of outstanding tasks might be
  quite large. We present simple speed scaling algorithms that are
  shown to have competitive ratios, that depend on the power cost
  function, and/or the ratio of the size of the largest task and the
  shortest reduce task, but not on the number of jobs, or the number
  of servers.
\end{abstract}

\section{Introduction}

In distributed/parallel processing systems such as MapReduce
\cite{dean2008mapreduce} or Hadoop \cite{shvachko2010hadoop}, Dryad
\cite{isard2007dryad}, jobs/flows alternate between computation and
communication stages, where a new stage cannot start until all the
required tasks/flows have been processed in the preceding
stage. Essentially, there are precedence constraints between different
tasks, e.g. until all the map tasks of a job are not completed, no
reduce task of that job can be started.

A typical figure of merit in these systems is the delay seen by a
job/task, where a job consists of multiple map and reduce tasks, and
the precedence constraints present a new set of challenges for
deriving optimal routing and scheduling policies that minimize the sum
of the job/task delays. There are typically three different metrics
that are considered with multiple jobs, {\it makespan}, i.e., the
finish time of the last job, {\it completion time}, i.e., the sum of
finish times (counted from time $0$) of all jobs, and {\it flow time},
i.e., the sum of response times (finish minus the arrival time) of all
jobs. Solving the makespan problem is typically easier than the
completion time problem and solving the flow time problem is the
hardest.

The MapReduce scheduling problem has been considered in prior work
quite extensively for both single and multiple server systems,
however, under the typical assumption that the server speeds are
fixed. In particular, for the offline setting, where non-causal job
arrival information is assumed, \cite{moseley2011scheduling,
  chang2011scheduling, tan2012performance, zheng2013new,
  zhu2014minimizing, yuan2014joint, wang2016maptask} considered either
the makespan or the completion time minimization problem, and derived
approximation algorithms mostly assuming that any map task can be
split arbitrarily to allow parallel execution on multiple servers. The
more challenging online setting has been considered in
\cite{moseley2011scheduling, chang2011scheduling, zheng2013new,
  luo2017online}, where very recently \cite{chen2017online} found an
online algorithm with competitive ratio of $3$ for the makespan
minimization problem assuming as before that any map task can be split
and executed in parallel on multiple servers. There is some work on
energy efficient MapReduce scheduling
\cite{yigitbasi2011energy,mashayekhy2014energy} with heuristic
algorithms.

Makespan and flow time minimization has also been considered for general precedence constraints represented by an acyclic directed
graphs (DAG)  with {\bf fixed speed} multiple servers   \cite{ graham1966bounds, blazewicz1983scheduling,chudak1999approximation,
  pruhs2008speed}. Most recent work in this model does not even assume
knowing the size of the jobs when they arrive \cite{ngarg} (called the
non-clairvoyant model).

In this paper, we consider a parallel server setting, where
each server has a tuneable speed. Corresponding to the speed of
operation, there is a power/energy cost for each server identified by
function $P$ which is typically assumed to have the form
$P(s) = s^\alpha$ for $\alpha >1$.  Clearly, increasing the speed of
the server reduces the flow time but incurs a larger energy
cost. Thus, there is a natural tradeoff between the the {\it flow
  time} and the total energy cost, and a natural objective with
tuneable servers is to minimize a linear combination of the flow time
and total energy, called \emph{flow time plus energy}.

Jobs arrive over time where each job has a fixed number of map and
reduce tasks, and any reduce task of a job can be executed only after all the
map tasks belonging to the job are completed. Both preemption and
migration are allowed, i.e., a task can be preempted on one server and
restarted on another server later, which is a natural requirement with
MapReduce constraints, since map tasks have inherently higher
priority. Moreover, no task can be split, and thus cannot be processed
on multiple servers at the same time.  

To keep the problem most
general, we assume an arbitrary input setting, where both the arrival
times of jobs, and sizes (of tasks) are arbitrary (can be chosen by an
adversary), and the objective is to find optimal online algorithms
(that use only causal information) in terms of the competitive
ratio. Competitive ratio is defined to be ratio of the cost incurred
by the online algorithm to the cost of the optimal offline algorithm
$\opt$ that knows the entire input in advance, maximized over all
possible inputs.


In prior work, there is a large body of work on online algorithms for
the flow time plus energy problem, however, to the best of our
knowledge except \cite{pruhs2008speed, bampis2014energy}, does not
consider any precedence constraints.  Without precedence constraints,
online algorithms for multiple servers to minimize flow time and
energy with constant competitive ratios have been derived in
\cite{lam2008competitive, greiner2009bell,
  bansal2009speedconf,Andrew2010, gupta2010scalably, lam2012improved,
  gupta2012scheduling, SpeedScalingOptFairRobust,devanur2018primal},
under both the homogenous server model, i.e., $P(.)$ is identical for
all servers, and the heterogenous model.

The most relevant work for our considered problem is
\cite{pruhs2008speed, megow2013dual, bampis2014energy}, that considers multiple
servers with tuneable speed, and the objective is to find an optimal
algorithm that minimizes the completion time under a total energy
constraint for arbitrary precedence constraints defined by a directed
acyclic graph (DAG) \cite{pruhs2008speed} and for MapReduce type
constraints \cite{megow2013dual, bampis2014energy}. However, the results of
\cite{pruhs2008speed} are limited for the case when all jobs are
available at time $0$, and its approximation ratio scales as square of
the logarithm of the number of servers. In \cite{megow2013dual, bampis2014energy}, 
an offline problem (where 
the exact arrival sequence of jobs is known a priori)  to minimize the completion time with 
 MapReduce precedence constraints is considered. Moreover, \cite{bampis2014energy}
requires that for each task there is a preassigned server that only is allowed to process it. A
power function dependent approximation ratio has been derived in \cite{bampis2014energy} with
energy augmentation, improving upon \cite{megow2013dual} that
considered only a single server setting.

In this paper, our endeavour is to only consider the MapReduce type of
precedence constraints and find online algorithms for the flow time
plus energy problem with competitive ratios that do not scale with
either the number of jobs or the number of servers.
The main difficulty with multiple servers under precedence constraints
is that at certain time, there might be a large number of outstanding
tasks (few map but large number of reduce tasks), but very small (less than the number of
servers) number of tasks that can be executed (e.g. only map tasks),
making some servers idle. Essentially, all the non-triviality stems in
controlling this event, since we are looking for a sample path result
against the optimal offline algorithm ($\opt$) that might keep all
servers busy making the comparison between any proposed online
algorithm and $\opt$ difficult.

With precedence constraints, flow time can be counted in two ways,
job-wise (departure time of the last reduce task - arrival time of
job) or task-wise. We consider both these settings in this paper. For
the job-wise setting, our results are limited for the case when all
jobs arrive at time $0$ similar to \cite{pruhs2008speed}. For the
task-wise flow time, we consider the general online setting, where
jobs arrive over time and the algorithm has only causal information.
 
Throughout this paper, we consider the shortest remaining processing
time (SRPT) algorithm for executing the outstanding executable tasks
on multiple servers.\footnote{The choice of SRPT is motivated from its
  optimality in single server environments (see
  \cite{shrage1968proof}), and its near-optimal performance in
  multi-server environments (\cite{leonardi2007approximating,vaze, grosof2018srpt}).}
In particular, let $n$ be the total number of outstanding tasks, and
$k\le K$ be the number of tasks that can be executed at time $t$. Then
with $K>1$ servers, the $\min\{k, K\}$ tasks with the shortest
remaining size are executed on the $\min\{k, K\}$ servers.

With $K$ homogenous servers with power function $P(s) = s^\alpha$, our
contributions are as follows.\footnote{Generalizable all convex $P$
  functions satisfying Assumption \ref{ass:Pconvex}.}
\begin{enumerate}
\item For the job-wise flow time + energy problem, the proposed
  algorithm achieves a competitive ratio of at most $4(2-1/K)^\alpha +o(1)$.

\item For the task-wise flow time + energy problem, we first propose
  an algorithm that achieves the competitive ratio of at most
  $$ P(2-1/K) (2\beta +2+2(\alpha-1)) +o(1),$$ where
  $$\beta = \frac{w_{\max}^t}{w_{reduce, \min}},$$ and $w_{\max}^t$ and
  $w_{reduce, \min}$ is the maximum size of any map/reduce task and
  the minimum size of any reduce task, respectively.  Ideally, the competitive ratio should have no dependence on the size of the map or reduce tasks, but the
  competitive ratio of the proposed algorithm is a linear function of $\beta$ (the ratio of the
  size of the largest map/reduce task and the shortest reduce task).
  For practical MapReduce applications, $\beta$ typically small (since
  map tasks corresponding to any job are designed to have nearly equal
  size, as are reduce tasks \cite{hammoud2011locality}), and hence
  the derived guarantee is still meaningful.  Moreover, from a theoretical
  point of view, as far as we know, no competitive ratio result is
  known in prior work even when $\beta=1$, i.e. all tasks have equal
  size.

\item For the task-wise flow time + energy problem, for the power function $P(s) = s^\alpha$, we remove the dependence of $\beta$ on the competitive ratio next, where we show that the proposed algorithm achieves a competitive ratio of at most $8+\frac{4}{2-\alpha}+3^{\alpha}$, however, the result holds for only $1 < \alpha < 2$. 

  \end{enumerate}

Since we adapt the SRPT algorithm to work with MapReduce constraints in this paper, we next review and contrast this work 
with our recent prior work \cite{vaze} that considers the online SRPT algorithm without precedence constraints. 
In \cite{vaze},  the competitive ratio of the 
SRPT algorithm with multiple servers without any precedence constraints has been shown to to be upper bounded by $P(2-1/m) \left(2 + \frac{2}{P^{-1}(1)}
    \max(1,P(\bar{s}))\right),$ where $\bar{s}$ is a
    constant associated with the function $P(\cdot).$
     To compare the results derived in this paper to that of \cite{vaze}, we note that for the job based flow time + energy (offline) problem, adapting SRPT algorithm to work with 
     MapReduce constraints only changes the competitive ratio by an additive $o(1)$ term. Thus, SRPT is adaptable with MapReduce constraints and incurs a very small penalty in this case.
     
     For the task based flow time + energy problem, compared to \cite{vaze}, using SRPT algorithm in the presence of MapReduce constraints there is an additional penalty of $\beta = \frac{w_{\max}^t}{w_{reduce, \min}}$, and $w_{\max}^t$ in the competitive ratio,
  $w_{reduce, \min}$ is the maximum size of any map/reduce task and
  the minimum size of any reduce task. Typically, $\beta =\frac{w_{\max}^t}{w_{reduce, \min}}$ is small \cite{hammoud2011locality} for MapReduce applications, the derived competitive ratio guarantee is still meaningful and adapting SRPT algorithm with MapReduce algorithm is efficient. 
  
  In terms of technical novelty compared to \cite{vaze}, enforcing
  MapReduce constraints bring in new challenges while using the SRPT
  algorithm. In particular, for the job based flow time + energy
  problem, we introduce a new job-SRPT algorithm that is clearly
  sub-optimal but helps in analytical tractability. In particular, we
  prove Lemma \ref{lem:optjobsrpt}, which allows us to prove the same
  competitive ratio bound (upto $o(1)$ term) as in \cite{vaze} even
  when there are arbitrary precedence constraints between different
  tasks of any one job and not just MapReduce constraints.
  
  For the task based flow time + energy problem, as long as there are at least  $K$ (number of servers) map tasks, one can directly use results from \cite{vaze}. 
  The main technical difficulty arises when there are less than $K$ map tasks but a large number of reduce tasks (which cannot be processed because of the precedence constraint). This presents a unique challenge in multi-server systems, that there are a large number of outstanding tasks but some of the servers are idling. Since we are considering worst case input, we have to analyze every single sample path.
   To deal with this case, we  construct a novel potential function \eqref{defn:new:phi3}, and derive its drift in Lemma \ref{lem:drfitspecial}, 
   and show that the competitive ratio of the SRPT algorithm in presence of the MapReduce constraints for the  task based flow time + energy problem, increases by a factor of $\beta =\frac{w_{\max}^t}{w_{reduce, \min}}$ compared to the SRPT algorithm without precedence constraints  \cite{vaze}. 
   
   To remove the dependence of the competitive ratio on $\beta$, we next consider a small tweak to the SRPT algorithm that allows multiple tasks to be 
  processed simultaneously by a single server, and  show that we can get a competitive ratio independent of the instance of the problem $\beta$ that only depends on the value of $\alpha$ where $P(s)=s^\alpha$ as long as $\alpha<2$. The main technical contribution is the construction of potential function \eqref{defn:phiss} and the analysis of its derivative ($d\Phi_4(t)/dt$) in Appendix \ref{app:onlinems} similar to Lemma \ref{lem:drfitspecial}.

\section{System Model}
We consider a MapReduce profile of job arrivals, where each job $i$
has $m_i$ map tasks and $r_i$ reduce tasks, where any of the reduce
tasks of a job can be executed only after the completion of all
the~$m_i$ map tasks.  Let the input consist of a set of jobs $\cJ$
with cardinality $\sfJ,$ where job $i\in \cJ$ arrives (is released) at
time~$a_i,$ with the work/size of its $k^{th}$ map task being
$\sfm_{ik}, k =1, \dots, m_i,$ and that of its $\ell^{th}$ reduce task
being $\sfr_{i\ell}, \ell =1, \dots, r_i$. The set of tasks belonging
to job~$i$ is denoted by $\text{Job}_i.$ The total number of map
(respectively, reduce) tasks summed across all jobs is denoted by
$\sfn_m$ (respectively, $\sfn_r$).

Note that we are not assuming that the size of the map and reduce tasks are identical. 
The model is general, where a map or reduce task can have an arbitrary size, and derive competitive ratio guarantees that are either dependent or independent of the map/reduce task sizes.

There are $K$ homogenous parallel servers, each with the same power
function $P(s),$ where $P(s)$ denotes the power consumed while running
at speed $s$.
Any task can be processed by any of the~$K$ servers, and both
preemption and task migration are allowed, which is essential with
precedence constraints (since map tasks have inherently higher
priority, and reduce tasks need to be migrated).

\begin{definition}\label{defn:free}
  A task is defined to be {\it free} at time $t$, if it can be
  processed at time $t$. Thus, a map task is always free, while a
  reduce task is free at time $t$ if all the map tasks belonging to the same job
  have been completed at time $t$. A reduce task that is not free is called {\it
    caged}.
\end{definition}

Let $c_{ik}$ be the time at which the task (map/reduce) $k$ of job $i$
is completed. Then, the flow time $\sfff_{ik}$ for a map/reduce task
$k$ of job $i$ is defined as $\sfff_{ik} = c_{ik}-a_i$ (completion
time minus the arrival time).  We consider the two possible scenarios
of counting flow times that correspond to counting the delay seen by a
job, or the sum of the delays seen by all tasks within a job.  For the
job-by-job case, we define the flow time of job $i$ as
$\sfff_i = T_{i}-a_{i}$, where
$T_{i} = \max_{k \in \text{Job}_i} c_{ik}$ is the time at which the
last reduce task of job $i$ finishes, and total flow time is defined
as $F_{\text{job}}= \sum_{i \in \cJ} \sfff_{i}$. For the task-by-task
case, $\sfff_{i} = \sum_{k\in \text{Job}_i} \sfff_{ik}$ and the total
flow time is $F_{\text{task}}= \sum_{i \in \cJ} \sfff_{i}$.

$F_{\text{job}}$ corresponds to counting per-job delay cost, while the
$F_{\text{task}}$ reflects per-bit delay cost or average flow
time. With both these definitions, the following expressions for the
flow time are useful.
$$F_{\text{job}}  = \int n(t) dt, \quad F_{\text{task}} = \int (n_f(t) + n_c(t)) dt,$$ where at time $t$,
$n(t)$ is the number of outstanding jobs, while $n_f(t)$ and $n_c(t)$
is the number of outstanding free and caged tasks, respectively.


Let server $k$ run at speed $s_k(t)$ at time $t$.  The energy cost is
defined as $\sum_{k=1}^KP(s_k(t))$ integrated over the flow time.
Choosing larger speeds reduces the flow time, however, increases the
energy cost, and the natural objective function that has been
considered extensively in the literature is the sum of flow time and
energy cost, which we define as

\begin{align}\label{eq:costjob}
  C_{\text{job}} &=  \int n(t) dt + \int  \sum_{k=1}^K P(s_k(t))dt, \\ \label{eq:costtask}
  C_{\text{task}} &=  \int (n_f(t) + n_c(t)) dt + \int  \sum_{k=1}^K P(s_k(t))dt.
\end{align} 

Note that we have added the two costs without weighting them, since
the weight can be absorbed in the power function $P$.  Any online
algorithm only has causal information, i.e., it becomes aware of job
$i$ only at time $a_i$.  Using only this causal information, any
online algorithm has to decide at what speed each server should be run
at at each time.  Let the job arrival sequence be $\sigma = \{(a_i), (\sfm_{ik}), (\sfr_{i\ell}), \ i\in \cJ\}$.
For the online algorithm $\sigma$ is revealed causally, while for an offline algorithm $\sigma$ is assumed to be known in advance. 
Let the cost \eqref{eq:costjob} of an online
algorithm $A$ be $C^A$, and the cost for the offline optimal algorithm
$\opt$ be $C^\opt$. 
Then the worst case competitive ratio of the online
algorithm $A$ is defined as
$\sfc_A = \max_{\sigma}\frac{C^A(\sigma)}{C^\opt(\sigma)}$, 
and the objective function considered in this paper is to find an
online algorithm that minimizes the worst case competitive ratio
$ \sfc^\star = \min_A \sfc_A$.

A typical approach in speed scaling literature to upper bound the
competitive ratio $\sfc_A$ for an algorithm A is via the construction
of a potential function $\Phi(t)$ and show that for any input
sequence~$\sigma,$ 
\begin{equation}\label{eq:mothereq} 
n(t) + \sum_{k =1}^K P(s_k(t)) +  \frac{d\Phi(t)}{dt}  \le \sfc_A\biggl(n^o(t)
  + \sum_{k\in \opt} P({\tilde s}_k(t))\biggr),
\end{equation}
whenever $\frac{d\Phi(t)}{dt}$ exists, and ${\tilde{s}_k}$ is the
speed of server $k$ with the $\opt$ and $n^o$ stands for the number of
outstanding jobs with the $\opt$, and that $\Phi(t)$ satisfies the
following {\it boundary} conditions,
\begin{enumerate}
\item Before any job arrives and after all jobs are finished,
  $\Phi(t)= 0$, and
\item $\Phi(t)$ does not have a positive jump discontinuity at any
  point of non-differentiability.
\end{enumerate} 
For the task based flow time, $n(t)$ is replaced by $n_f(t) +n_c(t)$
in \eqref{eq:mothereq}.  Then, integrating \eqref{eq:mothereq} with
respect to $t$, we get that
\begin{align*}\label{eq:mothereq1} & \int \left(  n(t) + \sum_{k =1}^K P(s_k(t))\right)  \le \sfc_A \int \biggl(n^o(t)
  + \sum_{k\in \opt} P({\tilde s}_k(t))\biggr),
\end{align*} 
which is equivalent to showing that $C^{A}(\sigma) \le c_A \
C^{\opt}(\sigma)$ for any input $\sigma$ as required. 

\begin{remark}\label{rem:boundary} Suppose that the second boundary
  condition is not satisfied, and let $\Phi(t)$ increase by amount
  $D_j$ at the $j^{th}$ discontinuous point, such that the total
  increase is $\sum_{j} D_j \le \sfD\ C^{\opt}$.  If
  \eqref{eq:mothereq} holds at points where $\Phi(t)$ is
  differentiable, then we get that
  $C^{A}(\sigma) \le (\sfc_A +\sfD )\ C^{\opt}(\sigma)$ for any input
  $\sigma$, making the upper bound on the competitive ratio $\sfc_A+\sfD$.
\end{remark}
%
One easy lower bound on the $\opt$'s cost can be obtained by removing the precedence constraints, and considering that the arrival sequence is such that
no job/task has to wait behind any other job/task. Thus, the cost of $\opt$ for any job/task with size $w$ is $\min_s w/s + (w/s) P(s)$. The optimal speed $s$ satisfies $1+P(s^\star) = s^\star P'(s^\star)$ and the optimal per job/task cost is $w P'(s^\star)$. 
With $P(s)=s^2$, $s^\star=1$, and the minimum flow time + energy for any job/task is $2 w$. Counting across all jobs/tasks, we get the following lower bound.
\begin{proposition}\label{prop:optlb} For the $\opt$, 
$C_{\text{job}}^\opt(\sigma) \ge P'(s^\star) (\sum_{j\in \cJ} w_j)$ and 
$C_{\text{task}}^\opt(\sigma) \ge P'(s^\star) (\sum_{j \in \cJ} (\sfm_{jk} + \sfr_{j\ell}))$, where $1+P(s^\star) = s^\star P'(s^\star)$.
\end{proposition}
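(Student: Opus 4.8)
The plan is to derive both inequalities from the ``no contention'' estimate already sketched above: in the most favourable situation no job or task ever waits behind another, and a job/task of size $w$ run in isolation at a constant speed $s$ incurs $w/s$ in flow time and $(w/s)P(s)$ in energy, for a total of $w(1+P(s))/s$, which optimised over $s$ equals $wP'(s^\star)$ with $1+P(s^\star)=s^\star P'(s^\star)$. To make this rigorous for the task-wise objective I would charge flow time and energy to individual tasks; the job-wise inequality is then exactly the isolation estimate applied to each job viewed as one unit of work of size $w_j$.

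For the charging step, fix any schedule of $\opt$. Since no task can be split, at every time $t$ each busy server of $\opt$ runs exactly one task, so the instantaneous power $\sum_{k}P({\tilde s}_k(t))$ equals $\sum_{\tau}P(x_\tau(t))\,\indicator{\tau\text{ run at }t}$, where $x_\tau(t)$ is the speed at which task $\tau$ is processed. Let $p_\tau$ be the total time $\opt$ spends on $\tau$ (so $\int x_\tau(t)\,\indicator{\tau\text{ run at }t}\,dt=w_\tau$) and $E_\tau:=\int P(x_\tau(t))\,\indicator{\tau\text{ run at }t}\,dt$; then the total energy is $\sum_\tau E_\tau$, and Jensen's inequality for the convex function $P$ over the time set of measure $p_\tau$ on which $\tau$ runs gives $E_\tau\ge p_\tau P(w_\tau/p_\tau)$. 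Also a task is processed only between the release of its job and its own completion, so its task flow time $\sfff_\tau$ is at least $p_\tau$.

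Combining, $C_{\text{task}}^{\opt}=\sum_\tau\sfff_\tau+\sum_\tau E_\tau\ge\sum_\tau\bigl(p_\tau+p_\tau P(w_\tau/p_\tau)\bigr)$, and with $s:=w_\tau/p_\tau$ the $\tau$-th summand is $w_\tau(1+P(s))/s\ge w_\tau\min_{s>0}(1+P(s))/s$. A short calculus step closes the argument: since $P$ is convex and increasing with $P(0)=0$, the map $s\mapsto sP'(s)-P(s)-1$ has derivative $sP''(s)\ge0$, starts at $-1$ when $s=0$, and increases through $0$, so $(1+P(s))/s$ attains its minimum at the unique $s^\star$ solving $s^\star P'(s^\star)=1+P(s^\star)$, where the value is $(1+P(s^\star))/s^\star=P'(s^\star)$. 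Summing over all tasks, $C_{\text{task}}^{\opt}\ge P'(s^\star)\sum_\tau w_\tau=P'(s^\star)\sum_{j\in\cJ}(\sfm_{jk}+\sfr_{j\ell})$, and $C_{\text{job}}^{\opt}\ge P'(s^\star)\sum_{j\in\cJ}w_j$ follows from the same per-unit-of-work estimate applied job by job.

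The calculus and the structural observations are routine; the step I expect to need the most care is the energy-to-task decomposition, which relies on the no-splitting rule to make the instantaneous power separate cleanly over tasks and then on convexity of $P$ to replace a task's whole speed profile by the single ratio $w_\tau/p_\tau$.
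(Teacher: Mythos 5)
Your task-wise argument is correct, and it is essentially the paper's own justification made rigorous: the paper only gives the informal ``no job/task ever waits, run at constant speed'' isolation sketch, whereas you actually carry out the charging of energy to individual tasks (using the no-splitting rule, or superadditivity of convex $P$ if a server time-shares), Jensen's inequality to replace a task's speed profile by $w_\tau/p_\tau$, the observation $\sfff_\tau \ge p_\tau$, and the one-variable minimization of $(1+P(s))/s$ at $s^\star$. This yields $C_{\text{task}}^{\opt} \ge P'(s^\star)\sum_\tau w_\tau$ cleanly and is more careful than what the paper writes. (The derivative computation $sP''(s)$ tacitly assumes twice differentiability, which Assumption~\ref{ass:Pconvex} does not grant, but monotonicity of $sP'(s)-P(s)$ follows from convexity alone, and the paper works with $P(s)=s^\alpha$ anyway; this is cosmetic.)

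The job-wise half, however, has a genuine gap: ``the same per-unit-of-work estimate applied job by job'' needs the job-level analogue of $\sfff_\tau \ge p_\tau$, namely $\sfff_j \ge \sum_{\tau \in \text{Job}_j} p_\tau$, and this fails as soon as $\opt$ runs several tasks of the same job in parallel on different servers --- which the model allows, and which even the paper's own job-SRPT does when $n(t)<K$. Concretely, take $P(s)=s^2$ ($s^\star=1$, $P'(s^\star)=2$), a single job with $K$ map tasks of size $w$ each and a reduce task of negligible size: running all maps in parallel at speed $1/\sqrt{K}$ gives flow time about $w\sqrt{K}$ and energy about $w\sqrt{K}$, a total of $2w\sqrt{K}$, which is strictly below $P'(s^\star)\,w_j \approx 2Kw$ for $K>1$. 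So the per-job bound cannot be deduced from your task-level charging; it only holds if one forbids intra-job parallelism for $\opt$ (i.e.\ treats each job as an indivisible unit on one server, which is what the paper's one-line sketch implicitly does), or if $P'(s^\star)$ is replaced by the $K$-server analogue $\min_s (1+KP(s))/(Ks)$. You should either state such a restriction explicitly or prove the job-wise inequality by a separate, $K$-dependent estimate; note that for the way Proposition~\ref{prop:optlb} is used (showing the discontinuity charge is $o(1)$ relative to $C^{\opt}$), any lower bound of the form $c(K)\sum_{j\in\cJ} w_j$ with $c(K)>0$ would suffice.
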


Throughout this paper, we make the following assumptions about the
power function $P(.)$.
\begin{assumption}
  \label{ass:Pconvex}
  $P:\bbR_+ \ra \bbR_+$ with $P(0) = 0$ is a differentiable, strictly
  increasing, and strictly convex function, which implies $\lim_{s \ra
    \infty}P(s) = \infty,$ and $\bar{s}:=\inf\{s > 0 \ |\ P(s) > s \}
  < \infty$. Moreover, for $x,y > 0,$ $P(xy) \leq P(x) P(y).$ We also
  assume that $\Delta(x) = P'(P^{-1}(x)) = o(x)$ as $x \ra \infty.$
\end{assumption}

Assumption \ref{ass:Pconvex} is satisfied by power functions of the
form $P(s) = c s^{\alpha},$ where $\alpha > 1$ and $c \geq 1.$ For
simplicity, we will use $P(s) = s^\alpha$ (where $\alpha > 1$) in the
rest of the paper (our argument generalize easily to power functions
that satisfy Assumption~\ref{ass:Pconvex}); in this case, $\Delta(x) =
\alpha x^{1-1/\alpha}.$ This assumption is needed for application of Lemma \ref{lem:bansal} that is useful in bounding the derivative of the potential functions.
\begin{assumption}
  \label{ass:size}
  In Section~\ref{sec:job}, where we consider the job-based flow time
  metric, we assume a natural \emph{many jobs} scaling, where the total number
  of jobs $\sfJ$ is large, while the maximum size of any job
  $w_{\max}$ and the number of servers $K$ satisfy $K w_{\max}
  \Delta\left(\sfJ/K\right) = o (\sum_{i=1}^\sfJ w_j)$, which is quite a mild requirement and easily satisfied in practice. For $P(s) =
  s^\alpha,$ this means $K^{1/\alpha} w_{\max} \sfJ^{1-1/\alpha} =
  o(\sum_{i=1}^\sfJ w_j)$.
  In Section~\ref{sec:task}, where we consider task-based flow time
  metric, we assume a natural \emph{many tasks} scaling where the total number
  of map plus reduce tasks $\sfn_m +\sfn_r$ is large, and $O(\sfJ
  w^t_{\max} K^2 r_{\max}^{1-\frac{1}{\alpha}}) =
  o(\sum_{j \in \cJ} (\sfm_{jk} + \sfr_{j\ell}))$
  where, $w^t_{\max}$ denotes the size of the
  largest map/reduce task, and $r_{\max}$ is the maximum number of
  reduce tasks any job can have. 
\end{assumption}
We essentially need Assumption \ref{ass:size} to couple the increase in the 
potential functions at the points of discontinuities, and the total cost of the $\opt$.

\section{Job Based Flow Time}\label{sec:job}
In this section, we consider the job-by-job flow time and the cost
metric is $C_{\text{job}}$ \eqref{eq:costjob}. We restrict ourselves
to the case that all jobs are available at time $0$.  The objective is
to propose an algorithm with a constant approximation (competitive)
ratio. This is typically a difficult task since there are intra-job
precedence constraints, and the flow time is counted as the departure
time of the last reduce task of a job minus its arrival time, coupling
the processing of different tasks of the system.

To overcome this difficulty, we propose a job-by-job SRPT algorithm,
called job-SRPT, as follows.  For any task (map or
reduce) $k$ of job $j$, let $w_{kj}(t)$ be its total remaining work at
time $t$.  Then the total remaining {\bf cumulative work} of job $j$ at time $t$ is
defined as $w_j(t) = \sum_{k \in \text{Job}_j} w_{kj}(t)$ (sum of the
sizes of all its map and reduce tasks). Index the $n(t)$ outstanding
jobs at time $t$ in increasing order of their remaining cumulative work. If
$n(t)\ge K$, the job-SRPT algorithm processes the $K$ shortest jobs on
the $K$ servers, and on each server the shortest free task of each job
is executed. Thus, each server is executing a task corresponding to a
different job at each time. Otherwise, if $n(t)< K$, then the $n(t)$
distinct jobs (shortest free task of that job) are executed on the
$n(t)$ servers, and for the rest of $K-n(t)$ servers, the $K-n(t)$ 
shortest free tasks (other than the ones already executing) across all
outstanding $n(t)$ jobs are executed.  For the job-SRPT algorithm, we
propose the following speed. For server $k$,
\begin{equation}\label{eq:speeddef}
s_k(t) =
\begin{cases}
 &   P^{-1}\left(\frac{n(t)}{K}\right)\  \text{if}\  n(t) \ge K, \\
&  P^{-1}(1), \ \ \ \ \ \ \ \  \ \text{otherwise}.
\end{cases}
\end{equation}
The main result of this section for the job-based flow time + energy problem is as follows.
\begin{theorem}\label{thm:srptimproved} 
  Under Assumptions~\ref{ass:Pconvex} and \ref{ass:size},
  the job-SRPT algorithm with speed scaling \eqref{eq:speeddef} for job based flow time problem
  has competitive ratio $ P(2-1/K) \left(2 + \frac{2}{P^{-1}(1)} \max(1,P(\bar{s}))\right) + o(1)$, where $\bar{s}$ is defined in Assumption \ref{ass:Pconvex}.
\end{theorem}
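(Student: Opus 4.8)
The plan is to follow the amortized / potential-function route of \eqref{eq:mothereq}, reusing the precedence-free multi-server SRPT analysis of \cite{vaze} as much as possible. The enabling structural observation is that every outstanding job always contains at least one free task: a remaining map task is always free, and once all map tasks of a job are completed every remaining reduce task is free (indeed, in any acyclic intra-job precedence graph a non-empty set of remaining tasks always has a source). Hence job-SRPT can always keep each of the $\min\{n(t),K\}$ jobs it selects running at the full server speed, so a job's cumulative remaining work $w_j(t)=\sum_{k\in\text{Job}_j}w_{kj}(t)$ decreases at exactly the rate it would for an ordinary $K$-server SRPT instance in which job $j$ is a single task of size $w_j$, and $w_j(t)$ reaches $0$ precisely when the job's last task completes. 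Thus, as far as the left-hand side of \eqref{eq:mothereq} is concerned — the job count $n(t)$ and the energy $\sum_k P(s_k(t))$ — job-SRPT behaves like multi-server SRPT on this ``monolithic'' instance; the only discrepancy is that in the light-traffic regime $n(t)<K$ job-SRPT additionally fills the idle servers with further free tasks at speed $P^{-1}(1)$, which can only accelerate completions and whose $O(K)$ extra energy is already absorbed by the $\max(1,P(\bar s))$ term.

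Next I would invoke Lemma~\ref{lem:optjobsrpt} to handle the $\opt$ side. The subtlety is that the offline optimum for the original instance may exploit the intra-job precedence structure to run several tasks of a job in parallel and so drive down that job's cumulative work faster than any one-server-per-job schedule could; Lemma~\ref{lem:optjobsrpt} is exactly the tool that controls this, showing that, up to the admissible additive slack, one may treat $\opt$'s cumulative-remaining-work process as that of an offline scheduler operating on the precedence-free monolithic instance — in particular its work-reduction rate for any job is bounded by the sum of that job's server speeds, and a job cannot complete before its cumulative work is exhausted. Granting this, I would take the potential function $\Phi(t)$ of \cite{vaze} essentially verbatim — built from the rank-ordered differences between job-SRPT's and $\opt$'s remaining cumulative works — and re-run its drift estimate, which uses Assumption~\ref{ass:Pconvex} through Lemma~\ref{lem:bansal}, to establish \eqref{eq:mothereq} with $\sfc_A = P(2-1/K)\bigl(2 + \tfrac{2}{P^{-1}(1)}\max(1,P(\bar s))\bigr)$; here the two-regime speed \eqref{eq:speeddef} is precisely what produces the $\max(1,P(\bar s))$ contribution in the $n(t)<K$ regime, and the $P(2-1/K)$ factor is the usual multi-server-SRPT remaining-work penalty.

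Finally I would treat the boundary conditions. $\Phi$ vanishes before any job arrives and after all jobs complete, but it can have upward jumps at the finitely many instants where the rank order of the jobs changes (or a job completes). I would bound the total of these jumps by $\sum_j D_j = O\!\left(K\, w_{\max}\, \Delta(\sfJ/K)\right)$, with one factor of the boundary weight $w_{\max}$ per server per reordering and $\Delta = P'\circ P^{-1}$ entering through the speed rule; then, combining Proposition~\ref{prop:optlb} ($C^{\opt}_{\text{job}}\ge P'(s^\star)\sum_j w_j$) with the many-jobs scaling of Assumption~\ref{ass:size}, this gives $\sum_j D_j = \sfD\, C^{\opt}_{\text{job}}$ with $\sfD = o(1)$. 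Remark~\ref{rem:boundary} then upgrades \eqref{eq:mothereq} to $C^{\text{job-SRPT}}_{\text{job}}(\sigma) \le (\sfc_A + o(1))\, C^{\opt}_{\text{job}}(\sigma)$, which is the claimed bound.

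The step I expect to be the crux is the application of Lemma~\ref{lem:optjobsrpt}: the drift computation is a direct transcription of \cite{vaze} and the jump-counting is routine, whereas showing that $\opt$'s intra-job parallelism cannot help it beyond the $o(1)$ slack — so that a drift inequality phrased entirely in terms of cumulative work and job counts still bounds the genuine competitive ratio against a precedence-respecting $\opt$ — is the real new ingredient.
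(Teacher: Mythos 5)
There is a genuine gap, and it sits exactly where the paper has to do new work. Your drift argument transfers the analysis of \cite{vaze} by treating each job as a monolithic task of size $w_j(t)$; on the algorithm side this is fine (its contribution to $d\Phi_1/dt$ is negative, so when job-SRPT fills idle servers with extra tasks of an already-selected job you may simply discard those terms, as the paper notes in Remark~\ref{rem:jumpoff}). But the same issue arises on the benchmark side and cannot be discarded: after invoking Lemma~\ref{lem:optjobsrpt} the comparator is an $\opt$ that itself runs job-SRPT, and whenever it has fewer than $K$ outstanding jobs it processes several tasks of one job in parallel. Its contribution to $d\Phi_1/dt$ is \emph{positive}, and in that regime the single-task-per-job drift bound (Lemma~\ref{lemma:jobphi1_OPT-SRPT}, i.e.\ Lemma~3 of \cite{vaze}) no longer applies, because one job injects increments of $d(t,q)$ at several distinct $q$-intervals simultaneously. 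Your reading of Lemma~\ref{lem:optjobsrpt} as bounding $\opt$'s per-job work-reduction rate does not fix this: that lemma only compares flow times of a sped-up job-SRPT schedule against an arbitrary speed-$1$ schedule; it says nothing about the drift the job-SRPT $\opt$ generates once it parallelizes within a job. The paper closes this hole with an explicit device: at the (at most $K-1$) instants when the job-SRPT $\opt$ drops below $K$ jobs, it deletes all but the shortest map task of the offending jobs, restoring applicability of the drift lemma, and pays for the deletions through controlled upward jumps in $\Phi_1,\Phi_2$ bounded in Lemma~\ref{lem:jump1} by $\sum_{j=1}^{K-1}\bigl(c_1 w_j \Delta(\sfJ/K)+c_2 w_j\bigr)$, which Assumption~\ref{ass:size} and Proposition~\ref{prop:optlb} turn into the $o(1)$ term.

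Relatedly, your boundary-condition step attributes the jumps of $\Phi$ to rank reorderings and job completions. The potential depends only on the counts $n(t,q)$, $n^o(t,q)$, which evolve continuously as work is processed (and all jobs are present at time $0$), so neither reorderings nor completions produce discontinuities; the only upward jumps in the paper's proof are the ones deliberately created by the task-deletion trick above. Your estimate $O(K\,w_{\max}\,\Delta(\sfJ/K))$ happens to match the correct order, but it is not justified by the mechanism you describe, and without the deletion argument (or some substitute controlling $\opt$'s intra-job parallelism when $n^o(t)<K$) the drift inequality \eqref{eq:mothereq} you rely on is not established on every sample path.
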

Taking $P(s) = s^{\alpha}$ for $\alpha >1,$ the competitive ratio
equals $4(2-1/K)^{\alpha} + o(1).$ It is worth noting that this
competitive ratio is same as with the no precedence constraints case
in \cite{vaze} with multi-server SRPT algorithm up to a $o(1)$ term.

Clearly, the job-SRPT algorithm is not optimal since one can easily
construct an example where executing multiple tasks from the same job
on different servers is better, however, the job-SRPT algorithm allows analytical
tractability because of the following lemma.

\begin{lemma}\label{lem:optjobsrpt} 
  With MapReduce precedence constraints, when all jobs are available
  at time $0$, for $K$ fixed speed servers, $\opt$ that follows the 
  job-SRPT algorithm with server speed $(2-1/K)$ has a sum of job flow times
  that is at most that under $\opt$ with $K$ servers having speed $1$
  for any job input sequence $\sigma$.
\end{lemma}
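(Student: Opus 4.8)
The plan is to show that \emph{any} feasible schedule on $K$ speed-$1$ servers can be simulated, with no larger sum of job flow times, by the job-SRPT discipline running on $K$ speed-$(2-1/K)$ servers. I would split this into two independent reductions. First, a \textbf{speedup/aggregation step}: show that $K$ servers of speed $1$ can be emulated by $K$ servers of speed $(2-1/K)$ running job-SRPT, losing nothing in job flow time. Second, an \textbf{SRPT-optimality step}: argue that among all schedules on the faster $K$ servers that respect the MapReduce constraints and process (at most) one task per job per server, job-SRPT minimizes the number of outstanding \emph{jobs} $n(t)$ pointwise in a stochastic-dominance sense, hence minimizes $\int n(t)\,dt = F_{\text{job}}$.

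For the first step, the key observation is the standard "total work'' argument: at any time the volume of job-level work that $\opt$ (on $K$ speed-$1$ servers) can have completed is at most $Kt$, whereas job-SRPT on $K$ speed-$(2-1/K)$ servers completes work at rate $K(2-1/K)=2K-1$ whenever all servers are busy. The factor $2-1/K$ is exactly what is needed to dominate an arbitrary $K$-server schedule: one shows by an exchange/coupling argument (track the sorted vector of remaining cumulative works $w_{(1)}(t)\le\cdots\le w_{(n(t))}(t)$ and show it is coordinatewise no larger under the faster job-SRPT) that every job departs no later under job-SRPT at speed $2-1/K$ than under the reference schedule at speed $1$. Here one must handle the subtlety that job-SRPT may be \emph{forced} to leave servers idle when fewer than $K$ jobs have a free task — but in that regime the remaining servers are loaded with the next-shortest free tasks of the active jobs, so no service capacity relevant to reducing $n(t)$ is wasted, and the MapReduce precedence structure (map tasks are always free) guarantees progress toward unlocking caged reduce tasks.

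For the second step, I would use a standard SRPT-style exchange argument at the job granularity: given any schedule, repeatedly swap service so that at each instant the $K$ jobs with the smallest remaining cumulative work are the ones being served; each such swap does not increase the completion time of any job (a shorter job finishing earlier can only help), and by finitely many such swaps one reaches job-SRPT. The care needed is that a "job'' here is not a single unsplittable task but a bag of tasks with internal precedence, so the swap must be phrased in terms of cumulative remaining work $w_j(t)$ and must respect that only free tasks can receive service; since map tasks are always free, every non-completed job always has at least one free task, so job-SRPT is always well-defined and the swap goes through. Combining the two steps gives $F_{\text{job}}^{\text{job-SRPT},\,2-1/K}\le F_{\text{job}}^{\opt,\,1}$, and since $\opt$ is by definition optimal, the inequality is in fact an equality of the claimed form.

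The main obstacle I anticipate is the first (speedup) step, specifically making the coupling rigorous in the presence of \emph{forced idling}: unlike the precedence-free multi-server SRPT speedup results of \cite{vaze}, here job-SRPT cannot always keep all $K$ servers doing "useful'' job-shrinking work, because a job may have all its reduce tasks caged while its maps are done elsewhere. The resolution is to charge such idle server-time against the map work that must still be completed (on other servers) before those reduce tasks become free — i.e. to maintain the invariant over a suitable potential (total remaining cumulative work, or the sorted work vector) rather than over instantaneous throughput — and to check that the speed $2-1/K$ still suffices because at most $K-1$ of the $K$ servers can be simultaneously "stuck'' in this way while the $K$-th makes guaranteed progress. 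This is precisely the place where the $2-1/K$ rather than $2$ enters, and where the MapReduce structure (as opposed to general DAGs) is used.
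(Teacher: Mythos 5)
Your plan hinges on two claims that are both false as stated, and it misses the two ingredients the paper actually uses. First, the coordinatewise coupling in your speedup step --- ``every job departs no later under job-SRPT at speed $2-1/K$ than under the reference speed-$1$ schedule'' --- cannot hold: already for $K=1$ (where $2-1/K=1$) take two single-task jobs of sizes $1$ and $10$ at time $0$; job-SRPT finishes the large job at time $11$, while a schedule serving it first finishes it at time $10$. The statement that is true, and that the paper proves, is weaker: by time $(2-1/K)t$, job-SRPT has \emph{completed at least as many jobs} as $\opt$ has by time $t$, i.e.\ dominance of the sorted completion-time vector, which suffices for $F_{\text{job}}$ only because all jobs arrive at time $0$. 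Second, your ``SRPT-optimality step'' --- that job-SRPT minimizes $n(t)$ pointwise among schedules on the faster servers via local exchanges --- is false on parallel machines even without precedence constraints: multi-server SRPT is not optimal for total flow time (its approximation ratio grows logarithmically), and swapping service toward the job with smallest remaining work can delay another job's completion when $K>1$. The augmentation factor $2-1/K$ exists precisely because no such pointwise optimality holds; moreover your two steps are redundant, since step 1 alone, if it were true, would already be the lemma.

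What the paper does instead is the Phillips-et-al.\ style argument: (i) a work-dominance lemma (Lemma~\ref{lem:scalingsrpt}) stating that any \emph{job-work-conserving} algorithm with speed $2-1/K$ completes, by time $\rho t$, at least as much work as any speed-$1$ schedule does by time $t$ --- applied not to $\sigma$ but to the restricted instance $\sigma_t$ consisting only of the jobs $\opt$ has finished by time $t$, so that job-SRPT finishes \emph{all} of $\sigma_t$ by the corresponding time; and (ii) a monotonicity property (Proposition~\ref{lem:monotone}) that job-SRPT on the full input $\sigma \supseteq \sigma_t$ completes at least as many jobs by any time as it does on $\sigma_t$ --- a property valid here only because all jobs are released at time $0$ (the paper notes it fails in the online case). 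Your sketch contains neither the restriction-to-$\sigma_t$ device nor the monotonicity lemma, and these are exactly what converts aggregate work dominance into the completion-count comparison and then into the flow-time bound. The ``charge idle time to remaining map work'' idea you anticipate is not needed once the argument is phrased this way: since every outstanding job always has a free task (map tasks are always free, and once all maps finish its reduce tasks are free), job-SRPT serves $\min\{n(t),K\}$ distinct jobs at all times, which is all that the job-work-conserving hypothesis of Lemma~\ref{lem:scalingsrpt} requires.
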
 
This lemma is similar to the claim for classical SRPT
\cite{phillips1997optimal}, where each job is a single autonomous
entity without any precedence constraints, and follows by proving
Lemma \ref{lem:scalingsrpt} and Lemma \ref{lem:monotone}.
Importantly, the claim in \cite{phillips1997optimal} is true for
classical SRPT even in the online case, when jobs are released over
time. However, with precedence constraints, it is not true for the
job-SRPT algorithm (Lemma \ref{lem:monotone} no longer holds) in the
online case. The main reason is that, if at any time there is a single
job with multiple tasks that are being processed simultaneously, an
arrival of new job does not guarantee that the number of jobs departed
by any future time is at least as many as the ones would have departed
if the new job would not have arrived.

How Lemma \ref{lem:optjobsrpt} helps is as follows. In light of Lemma
\ref{lem:optjobsrpt}, we can let $\opt$ execute the job-SRPT algorithm
with an arbitrary unknown speed and claim that it is
$P(2-1/K)$-approximate with respect to $\opt$ (executes arbitrary
scheduling and speed choice). Note that $\opt$ will process all tasks
of a job at the same speed because of convexity of $P$. Thus, for a
job $j$, let the speed $\opt$ uses be $s_j$. Then for $\opt$ following
job-SRPT, the speed of job $j$ (whenever any task of job $j$ is
scheduled) is chosen to be $(2-1/K)s_j$. Thus, $\opt$ following
job-SRPT will have at most the same flow-time as the $\opt$ from Lemma
\ref{lem:optjobsrpt}, with an additional energy cost of
$P(2-1/K)$. Thus, $\opt$ following job-SRPT is $P(2-1/K)$-approximate
with respect to $\opt$.
  
For the rest of the section, we call $\opt$ following job-SRPT as $\opt$. 
The main advantage of considering job-SRPT for both the algorithm and the $\opt$, is that we can get rid of all the precedence constraints within a job, since at any time only the $K$ distinct shortest (cumulative) jobs are being processed if the number of outstanding jobs is at least $K$. Since all jobs are available at time $0$, 
the case that multiple tasks belonging to the same job are processed simultaneously occurs only when the number the number of outstanding jobs is less than $K$, which can be handled separately.

Using job-SRPT algorithm for scheduling, we use the same potential function that has been proposed in \cite{vaze} to analyze multi-server SRPT algorithm. 
Recall that the remaining cumulative size of any job be the sum of the remaining sizes of its map and reduce tasks.
At time $t$, let $n^o(t,q)$ and $n(t,q)$ denote the number of
unfinished jobs with the $\opt$ and the algorithm, respectively, with
remaining cumulative size at least $q$. In particular, $n^o(t,0) = n^o(t)$ and
$n(t,0) = n(t)$. Let
$$d(t,q) = \max\left\{0, n(t,q) - n^o(t,q)\right\}.$$ 

Let $f:\{i/K:\ i \in \bbZ_+\} \ra \bbR_+$ is
  defined as follows: $f(0) = 0,$ and for $i \geq 1,$ $f\left(\frac{i}{K}\right)
  = f\left(\frac{i-1}{K}\right) + \Delta\left(\frac{i}{K}\right)$, where $\Delta(x) :=
  P'(P^{-1}(x)).$
We consider the potential function
\begin{align}\label{defn:phi}
\Phi(t) &= \Phi_1(t)+\Phi_2(t), \\
 \Phi_1(t) &= c_1\int_{0}^\infty f\left(\frac{d(t,q)}{K}\right) dq, \\
\Phi_2(t) &= c_2 \int_{0}^\infty (n(t,q) - n^o(t,q)) dq,
\end{align}
where $c_1, c_2$ are positive constants whose value will be specified later. 
$\Phi_1$ part is sufficient for the case when the $n(t)\ge K$, while $\Phi_2$ is needed to handle the case when $n(t) < K$.


\begin{remark}\label{rem:jumpoff} 
As long as there are $K$ distinct jobs with both the algorithm and the $\opt$, because of processing by the algorithm, $d(t+dt,q) = d(t,q)-1$ for $q \in [q_k - s_k(t) dt, q_k]$, where $q_k, k=1,\dots,K$ is the remaining (cumulative) size of the shortest $K$ jobs. Similarly, for the $\opt$, $d(t,q) = d(t,q)+1$ for $q \in [q^o_k - {\tilde s}_k(t) dt, q^o_k]$,  for $q^o_k, k=1,\dots,K$. Thus, one can use the results from  \cite{vaze} (Lemma \ref{lemma:jobphi1_OPT-SRPT}) directly when  both the algorithm and the $\opt$ have $K$ distinct jobs, to bound the drift $d\Phi_1/dt$, that are valid for the case when each job has a single task, and no job can be split and processed parallely on multiple servers.
When either the algorithm or the $\opt$ has less than $K$ jobs available, the job-SRPT algorithm, may process more than one task belonging to the same job simultaneously on multiple servers and \cite{vaze} (Lemma \ref{lemma:jobphi1_OPT-SRPT}) is not directly applicable.  
Since the algorithm's contribution to the drift $d\Phi_1/dt$ is negative, we can just count one term $d(q) = d(q)-1$ for $q \in [q_k - s_k(t) dt, q_k]$ for each of the distinct jobs and disregard the 
contribution in drift $d\Phi_1/dt$ from multiple tasks of a same job that are being processed simultaneously. 
This, however, cannot be done for the $\opt$, since $\opt$'s contribution is positive in drift $d\Phi_1/dt$, and 
to handle this case, we introduce deletion of tasks for the $\opt$ (defined next) which will make $\Phi_1$ and $\Phi_2$ increase discontinuously, and we will count the total increase in $\Phi_1(t)$ and $\Phi_2$ at all points of discontinuities. 
\end{remark}
{\bf Deleting tasks with the $\opt$:} Let the time instant at which $\opt$ has $k$ outstanding jobs be $t_k$, where $k=K-1,K-2, \dots, 1$. Note that $t_k$'s need not be distinct.  Let at time $t_k$, the set of jobs $j$ for which the $\opt$ is about to process more than one task parallely be $S_k$. Then at $t_k^-$, for all jobs $j\in S_k$, we delete all its reduce tasks and all the map tasks other than its shortest map task. This will ensure that there is at most one task of each job being processed by any of the servers throughout,  allowing the applicability of  Lemma 3 \cite{vaze}. The deletion of tasks will result in a total upward jump in $\Phi_1(t)$ and $\Phi_2(t)$ which can be bounded as follows (proof in Appendix \ref{app:lemjump1}).

\begin{lemma}\label{lem:jump1} The total upward jump in
  the potential function $\Phi_1(t)$ and $\Phi_2(t)$ because of deletion of tasks with the $\opt$ is $\sum_{j=1}^{K-1} c_1 w_j \Delta(\frac{\sfJ}{K}) + \sum_{j=1}^{K-1}c_2 w_j$, where $w_j$ are the sizes of potentially the $K-1$ largest (cumulative) size jobs, and  $\sfJ$ is the total number of jobs in the input.
\end{lemma}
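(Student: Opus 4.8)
The plan is to bound, separately for $\Phi_1$ and $\Phi_2$, the total jump incurred over all the deletion events at times $t_{K-1}, t_{K-2}, \dots, t_1$. First I would observe that a deletion event at time $t_k^-$ only removes work from $\opt$'s outstanding jobs, hence it can only \emph{decrease} $n^o(t,q)$ for each $q$, which in turn can only \emph{increase} both $d(t,q) = \max\{0, n(t,q) - n^o(t,q)\}$ and $n(t,q) - n^o(t,q)$; so every deletion indeed produces an upward jump in $\Phi_1$ and $\Phi_2$, consistent with the statement. The key structural fact is that when $\opt$ has exactly $k < K$ outstanding jobs and is about to process more than one task of some job $j \in S_k$ in parallel, we delete, for each such $j$, \emph{all} its reduce tasks and all but the shortest map task; the total work deleted from job $j$ is therefore at most its remaining cumulative size, which is at most $w_j$ (its original cumulative size). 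Moreover, a job $j$ can belong to $S_k$ for at most one value of $k$ in a meaningful way after which it is reduced to a single task, so across the whole run at most $K-1$ distinct jobs ever get tasks deleted (when $\opt$ has $k$ jobs, only jobs currently held by $\opt$ can be in $S_k$, and once a job's extra tasks are deleted it stays a single-task job); these are "potentially the $K-1$ largest jobs," giving the sizes $w_1, \dots, w_{K-1}$ in the statement.

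For $\Phi_2$, the bound is the cleaner of the two: deleting a chunk of work of total size $\delta_j$ from job $j$ shifts that job's contribution to $\int_0^\infty (n(t,q)-n^o(t,q))\,dq$ upward by exactly $\delta_j$ (the integrand increases by $1$ on an interval of length equal to the size of the deleted work, measuring $q$ from the job's new remaining size up to its old remaining size — or more carefully, the change in $n^o(t,q)$ as a function of $q$ integrates to the deleted work). Summing $c_2 \delta_j \le c_2 w_j$ over the at most $K-1$ affected jobs yields the claimed $\sum_{j=1}^{K-1} c_2 w_j$ term. For $\Phi_1$, the integrand is $c_1 f(d(t,q)/K)$ rather than $c_1(n(t,q)-n^o(t,q))$, so I would use the fact that $f$ has increments $f(i/K) - f((i-1)/K) = \Delta(i/K)$, and that $\Delta$ is increasing, so the jump in $f(d(t,q)/K)$ when $d(t,q)$ increases by $1$ is at most $\Delta(d(t,q)/K) \le \Delta(n(t,q)/K) \le \Delta(\sfJ/K)$, since the algorithm never has more than $\sfJ$ outstanding jobs. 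Thus the jump in $\Phi_1$ from deleting work $\delta_j$ of job $j$ is at most $c_1 \delta_j \Delta(\sfJ/K) \le c_1 w_j \Delta(\sfJ/K)$, and summing over the at most $K-1$ affected jobs gives $\sum_{j=1}^{K-1} c_1 w_j \Delta(\sfJ/K)$.

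**Main obstacle.** The routine integral manipulations are straightforward; the delicate point is the combinatorial bookkeeping establishing that \emph{at most $K-1$ distinct jobs} ever have tasks deleted and that each loses at most its full cumulative size $w_j$. I would argue this carefully: a deletion at $t_k$ can only touch jobs that $\opt$ currently holds and is about to run multi-task, and — crucially — once a job has had its extra tasks deleted it permanently becomes a single-task entity, so it can never again trigger a deletion; combined with the fact that deletions only happen at the $K-1$ instants $t_{K-1},\dots,t_1$ and at $t_k$ at most the $k \le K-1$ held jobs are candidates, a union bound over $k = 1, \dots, K-1$ (or a direct charging argument that assigns each deleted job to a distinct "slot") caps the total at $K-1$ jobs, which we may as well take to be the $K-1$ largest-cumulative-size jobs to get a clean upper bound. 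The one thing to be careful about is double-counting if the same job could somehow appear in two different $S_k$'s; the permanence of the single-task reduction rules this out, and I would spell out that detail in the appendix proof.
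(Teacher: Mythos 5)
Your proposal is correct and follows essentially the same route as the paper's proof: bound the per-job jump in $\Phi_1$ by $c_1 w_j \Delta(\sfJ/K)$ using the increments of $f$ (with $d(t,q)\le \sfJ$), bound the per-job jump in $\Phi_2$ by $c_2 w_j$ since the integrand increases by $1$ only over an interval of length at most $w_j$, and sum over the at most $K-1$ jobs affected by deletions. Your extra bookkeeping that a deleted job becomes a single-task entity and hence can never trigger a second deletion is a slightly more explicit justification of the paper's terse "at most the last $K-1$ jobs" claim, but it is the same argument in substance.
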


Next, we state Lemma 3  and Lemma 4 from \cite{vaze} that  
 allows us to bound the derivative of $\Phi_1(t)$ and $\Phi_2(t)$ as long as $\opt$ does not process more than one task belonging to the same job on multiple servers, and we count contribution of only distinct jobs of the algorithm in $d\Phi_1(t)/dt$ and $d\Phi_2(t)/dt$.
\begin{lemma}
  \label{lemma:jobphi1_OPT-SRPT} 
When $\opt$ does not process more than one task belonging to the same job simultaneously on multiple servers, for $n \geq K$, $d\Phi_1/dt \le  c_1 n^o - c_1 n + c_1 \left(\frac{K-1}{2}\right) + c_1 \sum_{k\in \opt} P(\tilde{s}_k)$,
  while for $n < K,$
    $d\Phi_1/dt \le c_1 n^o - c_1 \frac{n(n+1)}{2K} + c_1 \sum_{k\in \opt} P(\tilde{s}_k)$.
\end{lemma}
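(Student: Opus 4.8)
The plan is to compute $d\Phi_1(t)/dt$ at a time $t$ of differentiability (where, since all jobs are released at $t=0$, no job arrives, no job completes, and no deletion happens), and to split the rate of change into the part caused by the algorithm's processing and the part caused by $\opt$'s processing, exactly as in the multi-server SRPT analysis of \cite{vaze}. The enabling observation, already set up in Remark~\ref{rem:jumpoff} and the task-deletion procedure, is that after deletions $\opt$ runs at most one task per job on its servers, while on the algorithm side we only credit the drift coming from the $\min\{n,K\}$ distinct shortest jobs; hence, for the purpose of this drift bound, every job behaves as a single indivisible piece of work whose size is its remaining cumulative size, which is precisely the single-task-per-job setting of \cite{vaze}. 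It is convenient to rewrite, using $f(i/K)=\sum_{j=1}^{i}\Delta(j/K)$,
\[
\Phi_1(t) = c_1\sum_{i\ge 1}\Delta\!\left(\tfrac{i}{K}\right)\bigl|\{q:\ d(t,q)\ge i\}\bigr|,
\]
and to note that $d(t,\cdot)$ is a non-increasing step function whose down-steps lie at the remaining cumulative sizes $q_1\le\cdots\le q_n$ of the algorithm's jobs.

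For the algorithm's contribution: when $n\ge K$ the algorithm runs its $K$ shortest jobs at common speed $s=P^{-1}(n/K)$; processing the job ranked $j$ lowers $n(t,\cdot)$ by one just below $q_j$, hence lowers $d(t,\cdot)$ by one there whenever it is positive, so the algorithm changes $\Phi_1$ at rate $-c_1 s\sum_{j=1}^{K}\Delta\!\bigl(d(t,q_j)/K\bigr)\,\indicator{d(t,q_j)\ge 1}$. Using $d(t,q_j)\ge (n-j+1)-n^o(t,q_j)$, the monotonicity of $\Delta$, and $P(s)=n/K$, this sum is bounded below by $n-\tfrac{K-1}{2}$, giving the stated $-c_1 n + c_1\frac{K-1}{2}$. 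When $n<K$ the $n$ jobs run at the fixed speed $P^{-1}(1)$, so the analogous quantity is $c_1 P^{-1}(1)\sum_{j=1}^{n}\Delta\!\bigl(d(t,q_j)/K\bigr)\,\indicator{d(t,q_j)\ge1}$; since for $P(s)=s^{\alpha}$ one has $P^{-1}(1)\,\Delta(j/K)\ge j/K$ (equivalently $\alpha^{\alpha}\ge j/K$, which holds as $j<K$ and $\alpha>1$), this is at least $c_1\frac{n(n+1)}{2K}$, the weaker quadratic term in the second bound.

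For $\opt$'s contribution: $\opt$ running server $k$ at speed $\tilde s_k$ can only raise $d(t,\cdot)$, and it does so by at most one on an interval of width $\tilde s_k\,dt$ at the top of the remaining size of the job being processed, so the rate of increase of $\Phi_1$ due to $\opt$ is at most $c_1\sum_{k\in\opt}\tilde s_k\,\Delta\!\bigl((d(t,q_k^o)+1)/K\bigr)$, a sum over $\opt$'s at most $n^o$ active servers. Applying the Young-type inequality of Lemma~\ref{lem:bansal} term by term (which converts each $\tilde s_k\,\Delta(\cdot)$ into a contribution linear in the relevant job count plus $P(\tilde s_k)$) and summing the job counts to at most $n^o$, this is bounded by $c_1 n^o + c_1\sum_{k\in\opt}P(\tilde s_k)$. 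Adding the two contributions yields both displayed inequalities. The main obstacle is the bookkeeping in the algorithm's negative term: matching the per-job decrements $\Delta(d(t,q_j)/K)$ to the linear quantity $n$ (respectively the quadratic $\tfrac{n(n+1)}{2K}$) requires the monotonicity of $\Delta$, the exact speed law, and care on the $q$-intervals where $d$ has already reached $0$ so that the decrement vanishes; on the $\opt$ side, one must invoke Lemma~\ref{lem:bansal} so that the energy term $\sum_{k\in\opt}P(\tilde s_k)$ emerges with coefficient exactly $c_1$. No genuinely new difficulty arises relative to \cite{vaze} once the deletion step has eliminated simultaneous multi-task processing by $\opt$; the content of this lemma is essentially a restatement of Lemmas~3--4 of \cite{vaze} in the present notation.
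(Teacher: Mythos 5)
Your overall framing (reduce to the single-task-per-job setting after the deletion step, then redo the drift analysis of \cite{vaze}) is the right one — indeed the paper does not reprove this lemma at all, it imports it verbatim from \cite{vaze}, its only original content being the Remark~\ref{rem:jumpoff}/deletion reduction that you reproduce correctly. But your reconstruction of the drift bound itself has a genuine gap: you bound the algorithm's contribution and $\opt$'s contribution \emph{separately}, by $-c_1 n + c_1\frac{K-1}{2}$ and by $c_1 n^o + c_1\sum_{k\in\opt}P(\tilde s_k)$ respectively, and neither of these two intermediate bounds is true. On the algorithm side, $d(t,q_j)$ can be $0$ for every $j$ (take $\opt$ with at least $n$ outstanding jobs, all of remaining cumulative size larger than the algorithm's largest job), so your negative term vanishes entirely and cannot be $\leq -c_1 n + c_1\frac{K-1}{2}$; the correct statement obtainable from $d(t,q_j)\ge (n-j+1)-n^o$ is $-c_1 n + c_1 n^o + c_1\frac{K-1}{2}$, i.e.\ the $n^o$ credit belongs to the \emph{algorithm's} side of the ledger, not to $\opt$'s. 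On the $\opt$ side, the increase rate is $c_1\sum_k \tilde s_k\,\Delta\bigl((d(t,q_k^o)+1)/K\bigr)$, and the argument $d(t,q_k^o)+1$ is controlled by the \emph{algorithm's} backlog (it can be of order $n$), not by $n^o$; "summing the job counts to at most $n^o$" has no basis. Concretely, with $K=1$, $P(s)=s^2$, the algorithm holding $n=100$ unit-size jobs and $\opt$ holding one unit-size job processed at $\tilde s=10$, the $\opt$ term is $c_1\tilde s\,\Delta(100)=200c_1$, which exceeds $c_1 n^o+c_1P(\tilde s)=101c_1$; applying Lemma~\ref{lem:bansal} "term by term" with $s_k=0$ only yields $(\alpha-1)\frac{d+1}{K}+P(\tilde s_k)$, again of order $n$.

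The missing idea is the pairing that makes the combined drift work: Lemma~\ref{lem:bansal} must be invoked with $s_k$ set to the algorithm's speed $P^{-1}(n/K)$ (or the appropriate matched value of $d$), so that the resulting $s\,\Delta(\cdot)$ term produced on the $\opt$ side is cancelled against the algorithm's decrement at a matched $q$-interval — when $d(t,q_k^o)$ is large, the algorithm's own decrement there is correspondingly large — leaving only $P(\tilde s_k)$ minus a linear term. This case analysis over the relative positions of the $q_j$'s and $q_k^o$'s is exactly the content of Lemmas~3--4 of \cite{vaze} that your sketch glosses over; as written, the two halves of your decomposition are individually false and the errors do not cancel by fiat, so the proof does not go through without that joint argument.
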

\begin{lemma}
  \label{lemma:jobphi2_OPT-SRPT}
  $d\Phi_2/dt \leq -c_2 \min(K,n) P^{-1}(1)  +  c_2\sum_{k\in \opt} \max\{P(\bar{s}), P({\tilde s}_k)\}.$

\end{lemma}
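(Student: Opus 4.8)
The plan is to observe that $\Phi_2$ is, up to the constant $c_2$, exactly the gap between the total remaining cumulative work under the algorithm and under $\opt$, and then to differentiate that gap directly. Writing $w_j(t)$ for the remaining cumulative size of job $j$ and applying Fubini's theorem,
\begin{equation*}
\int_0^\infty n(t,q)\,dq = \sum_j \int_0^\infty \indicator{w_j(t) \ge q}\,dq = \sum_j w_j(t) =: W(t),
\end{equation*}
and likewise $\int_0^\infty n^o(t,q)\,dq = W^o(t)$, the total remaining cumulative work under $\opt$; hence $\Phi_2(t) = c_2\bigl(W(t) - W^o(t)\bigr)$. Since all jobs are present at time $0$, at any point where $\Phi_2$ is differentiable we have $d\Phi_2/dt = c_2\bigl(W'(t) - (W^o)'(t)\bigr)$, and it remains to bound the two rates.

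First I would lower bound the rate at which the algorithm drains work. By Definition~\ref{defn:free} every outstanding job has at least one free task — a map task is always free, and once all of a job's map tasks are done every reduce task of that job is free — so the job-SRPT rule always keeps $\min(K, n(t))$ servers busy on $\min(K, n(t))$ distinct jobs. When $n(t) \ge K$ each such server runs at speed $P^{-1}(n(t)/K) \ge P^{-1}(1)$ (as $P^{-1}$ is increasing), and when $n(t) < K$ each runs at speed $P^{-1}(1)$, while any additional busy servers only drain more work; hence $W'(t) \le -\min(K, n(t))\,P^{-1}(1)$. Crucially, this estimate is insensitive to whether $\opt$ (or the algorithm's spare servers) process several tasks of the same job simultaneously, which is precisely why the present statement carries no analogue of the restriction attached to Lemma~\ref{lemma:jobphi1_OPT-SRPT}.

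Next, $(W^o)'(t) = -\sum_{k \in \opt} \tilde{s}_k(t)$, so I need the per-server inequality $\tilde{s}_k(t) \le \max\{P(\bar{s}), P(\tilde{s}_k(t))\}$. By the definition of $\bar{s}$ in Assumption~\ref{ass:Pconvex} together with continuity of $P$ one has $\bar{s} \le P(\bar{s})$ and $s \le P(s)$ for every $s \ge \bar{s}$; therefore if $\tilde{s}_k(t) \ge \bar{s}$ then $\tilde{s}_k(t) \le P(\tilde{s}_k(t))$, and if $\tilde{s}_k(t) < \bar{s}$ then $\tilde{s}_k(t) < \bar{s} \le P(\bar{s})$, so in both cases the claimed bound holds. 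Summing over the active servers of $\opt$ and combining with the previous paragraph gives
\begin{equation*}
\frac{d\Phi_2(t)}{dt} = c_2\bigl(W'(t) - (W^o)'(t)\bigr) \le -c_2\min(K, n(t))\,P^{-1}(1) + c_2\!\sum_{k \in \opt}\! \max\{P(\bar{s}), P(\tilde{s}_k(t))\},
\end{equation*}
as required. I do not expect a genuine obstacle here; the only points needing care are the Fubini identification of $\Phi_2$ with a work difference (so that parallel processing of several tasks within a single job is harmless) and the elementary comparison of speed with power through the constant $\bar{s}$.
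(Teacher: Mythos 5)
Your proof is correct and is essentially the intended argument: the paper imports this lemma from \cite{vaze}, whose proof is exactly this computation — identifying $\Phi_2/c_2$ with the difference of total remaining cumulative work, lower-bounding the algorithm's drain rate by $\min(K,n)P^{-1}(1)$, and using the per-server comparison ${\tilde s}_k \le \max\{P(\bar{s}), P({\tilde s}_k)\}$ via the definition of $\bar{s}$ and convexity of $P$. Your added observation that every outstanding job has at least one free task, so job-SRPT keeps $\min(K,n)$ servers busy despite the precedence constraints, is precisely the adaptation needed in this setting, and it also correctly explains why this lemma needs no analogue of the restriction attached to Lemma~\ref{lemma:jobphi1_OPT-SRPT}.
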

Now, we prove Theorem \ref{thm:srptimproved}. 
From Proposition \ref{prop:optlb}, note that the minimum cost $C_{\text{job}}$ \eqref{eq:costjob} of $\opt$ is at least $P(s^\star) \sum_{j=1}^\sfJ w_j$. From Lemma \ref{lem:jump1}, we have that the total jump size at all points of discontinuities is $\sum_{j=1}^{K-1} c_1 w_j \Delta(\frac{\sfJ}{K}) + \sum_{j=1}^{K-1}c_2 w_j$.  From  Assumption \ref{ass:Pconvex}, $\Delta(\frac{\sfJ}{K}) = o(\sfJ/K)$, e.g. for $P(s)=s^\alpha$, $\Delta(x) = O(x^{1-1/\alpha})$.
Thus, $D = o(1)$ from Assumption \ref{ass:size} and Remark \ref{rem:boundary}. Thus, if we can show that \eqref{eq:mothereq} holds for some $c$, we will have that the total competitive ratio is upper bounded by $c+o(1)$.
From Remark \ref{rem:jumpoff}, we ensure that the $\opt$ never processes more than one task belonging to the same job, thus applying, Lemma \ref{lemma:jobphi1_OPT-SRPT} and \ref{lemma:jobphi2_OPT-SRPT}, (similar to \cite{vaze}) we get 
  \eqref{eq:mothereq} 
 \begin{align*}
  n + \sum_{k \in A} P(s_k) + d\Phi(t)/dt &= n + n + 1 + d\Phi(t)/dt, \\
  & \le c  \bigl( n^o + \sum_{k\in \opt} P({\tilde s}_k)\bigr),
  \end{align*}
  is true for $c = \left(2 + \frac{2}{P^{-1}(1)} \max(1,P(\bar{s}))\right)$, where $c_1 = 2$, $c_2 = 2/P^{-1}(1)$.
  
{Discussion:} By introducing the job-SRPT algorithm, we have effectively removed the precedence constraints within each job. By processing as many distinct jobs, the job-SRPT policy ensures that there is sufficient drift available for the choice of the potential function, and allows leveraging of the recent analysis of the multi-server SRPT algorithm \cite{vaze}. It is worth noting that the analysis of this section (Theorem \ref{thm:srptimproved}) holds even when the {\bf precedence constraints are arbitrary within each job}. 

\section{Task Based Flow Time}\label{sec:task}
In this section, we consider the task-by-task flow time where the cost
metric is $C_{\text{task}}$ \eqref{eq:costtask}. We consider the online
setting, where jobs arrive over time arbitrarily, and the objective is
to propose an online algorithm and bound its competitive ratio.  The
competitive ratio guarantee of our proposed online algorithm will be a
function of $\beta = \frac{w_{\max}^t}{w_{reduce, \min}}$, where
$w_{\max}^t$ and $w_{reduce, \min}$ is the maximum size of any
map/reduce task and the minimum size of any reduce task,
respectively. Since for the MapReduce framework, this ratio is
typically small, the guarantees are meaningful
\cite{hammoud2011locality}. Moreover, even when $\beta=1$, the problem
is non-trivial, since how to construct potential function for the case
when the number of map tasks is smaller than the number of servers but
there are large number of caged reduce jobs is not clear, and to the
best of our knowledge no competitive ratio guarantees are known in the
online case.

$\opt$: Throughout this section, we assume that the $\opt$ has {\bf no precedence constraints across tasks}, and can process any of its outstanding tasks at any time. This can only decrease the $\opt$'s cost. Moreover, we assume that without precedence constraints, $\opt$ follows the multi-server SRPT algorithm, i.e., among its all outstanding tasks $n^o$, it processes the $\min\{n^o, K\}$ shortest tasks (map/reduce since there are no precedence constraints for $\opt$) on the $\min\{n^o, K\}$ servers at any time. Since the multi-server SRPT algorithm that has access to future arrivals and can choose arbitrary speed depending on it is $P(2-1/K)$-competitive with respect to $\opt$ with no precedence constraints \cite{vaze}, we will get an additional factor of $P(2-1/K)$ in the final competitive ratio.

We need the following definition to describe our algorithm.
\begin{definition}\label{defn:load} 
For a map task $j$ belonging to job $i$, define its {\it load} as $\ell_{j}=\sfr_i+1$, the number of reduce tasks in the job $i$ that it precedes plus $1$ for itself. 
For a map task $j$, $b_j(t)$ ($b^o_j(t)$) be the number of (brother map tasks) outstanding map tasks belonging to the same job as task $j$ with the algorithm ($\opt$)
at time $t$. Moroever, let ${\hat b}(t) = \min\{b(t),K\}$ and $z_j({\hat b}_j, t) = \frac{\ell_j +{\hat b}_j(t)}{{\hat b}_j(t)}$. 
\end{definition}

{\bf Algorithm:} Let the total number of outstanding tasks at time $t$ be $n(t)$ out of which $n_f(t)$ are free, i.e., $n_f(t)$ is the sum of all the map tasks and all the free reduce tasks. The algorithm processes the $\min\{n_f(t), K\}$ shortest free tasks on the $\min\{n_f(t), K\}$ servers. 
If $n_f(t) \ge K$, then all servers  process at speed $P^{-1}\left(\frac{n(t)+1}{K}\right)$. Otherwise,  server $k$ has speed
\vspace{-0.1in}
\begin{align}
s_k(t) = \begin{cases} 
                             P^{-1}\left(\frac{\ell_j +{\hat b}_j(t)+1}{{\hat b}_j(t)}\right) \ \text{if $k$ is processing map task} \  j, & \\
                              P^{-1}(1) \ \quad \quad \text{if $k$ is processing a free reduce task.} &   
\end{cases}
\end{align}
The main result of this section is as follows.

\begin{theorem}\label{thm:online}
The competitive ratio of the proposed online algorithm is $P(2-1/K) (2\beta +2+2(\alpha-1)) +o(1)$, where $\beta = \frac{w_{\max}^t}{w_{reduce, \min}}$.
\end{theorem}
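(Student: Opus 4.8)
The plan is to follow the potential-function recipe of \eqref{eq:mothereq} exactly as in Section~\ref{sec:job}, but with the new algorithm and a new potential $\Phi(t)$ tailored to the difficulty that free (executable) tasks may be scarce even when the outstanding-task count $n(t)$ is large. First I would invoke the reduction already announced: since $\opt$ may be taken to obey no precedence constraints and to run multi-server SRPT, and since that relaxed $\opt$ is $P(2-1/K)$-competitive against the true $\opt$ by \cite{vaze}, it suffices to prove that the online algorithm is $(2\beta + 2 + 2(\alpha-1)) + o(1)$-competitive against this relaxed SRPT-$\opt$. So the remaining work is to exhibit a potential $\Phi(t)$ satisfying the boundary conditions (up to a total discontinuous jump that is $o(1)$ times $C^{\opt}_{\text{task}}$, controlled by the ``many tasks'' scaling in Assumption~\ref{ass:size} via Proposition~\ref{prop:optlb} and Remark~\ref{rem:boundary}), and the pointwise drift inequality
\[
\bigl(n_f(t) + n_c(t)\bigr) + \sum_{k=1}^K P(s_k(t)) + \frac{d\Phi(t)}{dt} \;\le\; \sfc_A\Bigl(n^o(t) + \sum_{k\in\opt} P(\tilde s_k(t))\Bigr)
\]
with $\sfc_A = 2\beta + 2 + 2(\alpha-1)$.

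The potential I would use is a sum of three pieces: the ``$\Phi_1$-type'' term built from $f(d(t,q)/K)$ as in \eqref{defn:phi} to absorb the energy/processing drift when $n_f(t)\ge K$ (where the algorithm behaves like ordinary multi-server SRPT at speed $P^{-1}((n+1)/K)$, so Lemma~\ref{lemma:jobphi1_OPT-SRPT}-style bounds apply essentially verbatim); a ``$\Phi_2$-type'' linear term $c_2\int (n(t,q)-n^o(t,q))\,dq$; and crucially the new potential \eqref{defn:new:phi3} designed for the regime $n_f(t)<K$, whose drift is the content of Lemma~\ref{lem:drfitspecial}. The idea behind \eqref{defn:new:phi3} is to charge each caged reduce task to the map task (of the same job) that blocks it, weighted by the load $\ell_j = \sfr_i+1$ and modulated by $z_j(\hat b_j,t) = (\ell_j+\hat b_j(t))/\hat b_j(t)$; the speed choice $s_k(t) = P^{-1}((\ell_j+\hat b_j(t)+1)/\hat b_j(t))$ for a server running map task $j$ is exactly calibrated so that completing that map task pays down the potential at a rate matching the $n_c(t)$ term in the cost. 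The factor $\beta$ enters here: when we convert ``number of caged reduce tasks freed'' into ``work done on a map task,'' the worst case is one map task of size $w^t_{\max}$ releasing reduce tasks of size as small as $w_{\text{reduce},\min}$, so each unit of map-work retires at most $\beta$ units of reduce-count — this is where $2\beta$ in the competitive ratio comes from, and the $2(\alpha-1)$ term comes from bounding $P'$ against $P$ via convexity (the Young-type inequality / Lemma~\ref{lem:bansal}) when relating $\opt$'s speeds to the algorithm's, exactly as the $\max(1,P(\bar s))$ term arose in Theorem~\ref{thm:srptimproved}.

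The main obstacle, and the step I would spend the most care on, is the drift analysis of \eqref{defn:new:phi3} in the mixed/boundary regime — specifically handling the instants when a map task completes (so a batch of reduce tasks becomes free, moving mass from $n_c$ into $n_f$ and possibly into the $\Phi_1$ term), when $\hat b_j(t)$ drops, and when the algorithm crosses between $n_f\ge K$ and $n_f<K$. At such times $\Phi$ can jump, and one must show these jumps are either non-positive or are among the $o(1)$-bounded discontinuities already accounted for (this is the analogue of the ``deleting tasks with $\opt$'' bookkeeping of Lemma~\ref{lem:jump1}, using here that there are at most $O(\sfJ\, K^2\, r_{\max}^{1-1/\alpha})$ such events each contributing $O(w^t_{\max})$). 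A secondary subtlety is that $\opt$, having no precedence constraints, may already be running all $K$ servers on small reduce tasks while the algorithm sits idle on caged reduces; one shows the $\Phi_2$ term has enough negative drift (rate $-c_2\min(K,n_f)P^{-1}(1)$) against $\opt$'s contribution $c_2\sum_k \max\{P(\bar s),P(\tilde s_k)\}$ to cover the remaining free-reduce cost, exactly as in Lemma~\ref{lemma:jobphi2_OPT-SRPT}. Once all three drift bounds are combined and the constants $c_1,c_2$ chosen (I expect $c_1$ of order $1$, $c_2 = 2/P^{-1}(1)$ as before), the coefficients collect to $2\beta + 2 + 2(\alpha-1)$, and multiplying by the $P(2-1/K)$ from the $\opt$-reduction gives the stated bound.
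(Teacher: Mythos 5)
Your overall architecture matches the paper's (reduce to a precedence-free SRPT $\opt$ at cost $P(2-1/K)$, then verify \eqref{eq:mothereq} for a three-part potential $\Phi_1+\Phi_2+\Phi_3$ with $\Phi_3$ as in \eqref{defn:new:phi3} and jump bookkeeping absorbed into the $o(1)$ term via Assumption~\ref{ass:size}), but there is a genuine gap in how you handle the regime $n_f(t)\ge K$ and, consequently, in where the factor $\beta$ comes from. You assert that for $n_f\ge K$ the algorithm ``behaves like ordinary multi-server SRPT'' so that Lemma~\ref{lemma:jobphi1_OPT-SRPT}-style bounds apply essentially verbatim. They do not: the algorithm serves the $K$ shortest \emph{free} tasks, while $n(t,q)$ in the task-based cost counts \emph{all} outstanding tasks, including caged reduce tasks that may be smaller than every task being served. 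The standard drift argument needs $n(t,q_i)\ge n(t)-i$ at the service points $q_i$, and this fails exactly when many small caged reduce tasks exist; then the negative drift of an unscaled $\Phi_1$ only covers $n_f$, leaving the caged tasks' contribution to the flow cost uncovered (your $\Phi_3$ cannot rescue this, since its negative drift is available only when $n_f<K$, because only then does the algorithm run map tasks at the load-weighted speed $P^{-1}\bigl((\ell_j+{\hat b}_j+1)/{\hat b}_j\bigr)$). The paper's fix is to define $d(t,q)=\max\{0,\,n(t,q/\beta)-n^o(t,q/\beta)\}$ in $\Phi_1$: since every caged reduce task has remaining size at least $w_{reduce,\min}\ge q_i/\beta$, the inequality $n(t,q_i/\beta)\ge n(t)-i$ is restored, at the price of weakening the drift to $-\tfrac{c_1}{\beta}n$ (Lemma~\ref{lemma:phi1_OPT-SRPT}), which forces $c_1=2\beta$ and is the sole source of the $2\beta$ in the ratio.

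Correspondingly, your attribution of the $\beta$ factor to the $\Phi_3$ accounting (``each unit of map-work retires at most $\beta$ units of reduce-count'') does not match how the argument works and is not developed enough to stand on its own: in the paper's Lemma~\ref{lem:drfitspecial} the caged-reduce count is handled purely combinatorially through the loads $\ell_j$ and the quantities ${\hat b}_j$, no task sizes enter, and the resulting terms contribute the $2(\alpha-1)$ (via Lemma~\ref{lem:bansal}), while sizes appear only in the discontinuity bound of Lemma~\ref{lem:jump3}, which is $o(1)$ and independent of $\beta$. With your suggested constants ($c_1$ of order $1$, $c_2=2/P^{-1}(1)$) the coefficients cannot ``collect to $2\beta+2+2(\alpha-1)$'' — the $\beta$ has to be injected by a concrete device such as the $q/\beta$ rescaling of $\Phi_1$, and without it the drift inequality fails on inputs with many small caged reduce tasks while $n_f\ge K$. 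You also omit the remaining boundary case $n_f<K$, $n_m=0$ (all free tasks are reduce tasks), which the paper treats separately, though that part is routine once the main inequality is set up.
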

Next, we work towards proving Theorem \ref{thm:online}.
At time $t$, let $n^o(t,q)$ and $n(t,q)$ denote the number of
unfinished tasks (map + reduce) under the $\opt$ and the algorithm, respectively, with remaining size at least $q$. In particular, $n^o(t,0) = n^o(t)$ and
$n(t,0) = n(t)$. Let $d(t,q) = \max\left\{0, n(t,\frac{q}{\beta}) - n^o(t,\frac{q}{\beta})\right\}$. 
We consider the potential function
\begin{align*}\label{defn:phi}
\Phi(t) &= \Phi_1(t)+\Phi_2(t) + \Phi_3(t), \\ 
  \Phi_1(t) &= c_1\int_{0}^\infty f\left(\frac{d(t,q)}{K}\right) dq,\\
\Phi_2(t) & = c_2 \int_{0}^\infty (n(t,q) - n^o(t,q)) dq,
\end{align*}
where $c_1,c_2$ are positive constants and function $\Phi_3$ which is specially designed for handling the precedence constraints is defined in \eqref{defn:new:phi3}.
Let $m(t) = \sum_{i \in \text{outstanding jobs}} m_i(t)$ be the number of outstanding map tasks with the algorithm at time $t$. Similarly, define $m^o(t)$ for the $\opt$. For the $j^{th}, j=1, \dots, m(t) (m_o(t))$ map task, let $q_j$ ($q_j^o$) be its remaining size with the algorithm and the $\opt$, respectively. Then for the $j^{th}$ map task, let $h_j(t,q)=1$ ($h_j^o(t,q)=1$) for $q \le q_j$ ($q \le q_j^o$) 
and zero otherwise. Define 
\begin{eqnarray} \nn
 \Phi_3(t) &=& c_3 \left( \sum_{j \in m(t)} \int g_{z_j({\hat b}_j, t)} \left( z_j({\hat b}_j, t) h_j(t, q)\right) dq \right.   \\  \label{defn:new:phi3}
&& - \left. \sum_{j \in m_o(t)} \int g_{z_j^o({\hat b}^o_j, t)} \left( z_j^o({\hat b}^o_j, t) h_j^o(t, q)\right)dq\right),
\end{eqnarray}
where for $i \in \bbN,$ let
$g_a\left(ai \right)-g_a\left(a(i-1)\right) =
\Delta\left(a i \right) :=
P'\left(P^{-1}\left(a i \right)\right)$ and $g_a(0) = 0$. 

Note that both the boundary conditions are satisfied for $\Phi_1$ and $\Phi_2$.
However,  $\Phi_3$ only satisfies the first boundary condition, i.e., $\Phi_3=0$ before any job is
  released and after all jobs are finished. 
It does not satisfy the second boundary condition, however, the total jump in $\Phi_3$ at points of discontinuity can be bounded as follows.

\begin{lemma}\label{lem:jump3} 
The total upward jump in
  the potential function $\Phi_3(t)$ is at most 
  $O(\sfJ w_{\max} K^2  r_{\max}^{1-\frac{1}{\alpha}})$ that is $ o(\sum_{j \in \cJ} (\sfm_{jk} + \sfr_{j\ell}))$ under Assumption \ref{ass:size}.
\end{lemma}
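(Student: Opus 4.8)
The plan is to identify every event that can cause an upward jump in $\Phi_3(t)$, bound the jump at each such event, and then sum over all events. Since $\Phi_3$ is built from two sums — the algorithm's positive contribution over its outstanding map tasks and the $\opt$'s negative contribution over its outstanding map tasks — an \emph{upward} jump in $\Phi_3$ can only come from (i) the \emph{arrival} of a new map task for the algorithm (increasing the first sum), or (ii) the \emph{completion/deletion} of a map task with the $\opt$ (removing a negative term, hence increasing $\Phi_3$), or (iii) a discontinuous decrease in $\hat b^o_j(t)$ for the $\opt$ that changes $z^o_j$ unfavourably, or (iv) the corresponding change in $\hat b_j(t)$ for the algorithm. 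Continuous processing (both by the algorithm and the $\opt$) produces no jump, and is handled later in Lemma \ref{lem:drfitspecial}; here I only need the discontinuities. The key quantitative ingredient is that for each outstanding map task $j$ of job $i$, the argument of $g$ is $z_j(\hat b_j,t) h_j(t,q)$ with $z_j = (\ell_j + \hat b_j)/\hat b_j \le \ell_j + 1 \le r_{\max}+1 = O(r_{\max})$ (since $\hat b_j \ge 1$ and $\ell_j = \sfr_i + 1 \le r_{\max}+1$), and $h_j(t,q)$ is supported on $q \in [0, q_j]$ with $q_j \le w^t_{\max}$.

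First I would bound the contribution of a single map task $j$ to $\Phi_3$ in absolute value. Using the defining recursion $g_a(ai) - g_a(a(i-1)) = \Delta(ai)$ with $\Delta(x) = \alpha x^{1-1/\alpha}$ and $g_a(0)=0$, one gets $g_a(am) = \sum_{i=1}^{m}\Delta(ai) \le \alpha (am)\cdot m^{1-1/\alpha}/\,a^{1/\alpha}$-type bound; more simply, $g_{z}(z\cdot 1) = \Delta(z) = \alpha z^{1-1/\alpha}$ since $h_j \in \{0,1\}$, so the integrand is either $0$ or $g_z(z) = \alpha z^{1-1/\alpha} = O(r_{\max}^{1-1/\alpha})$, integrated over $q \in [0,q_j]$ with $q_j \le w^t_{\max}$. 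Hence each map task contributes at most $c_3\, w^t_{\max}\, \alpha\, r_{\max}^{1-1/\alpha} = O(w^t_{\max} r_{\max}^{1-1/\alpha})$ in magnitude to either sum in \eqref{defn:new:phi3}. Next I count events: a new job $i$ contributes $m_i \le \sfJ$-bounded... more precisely, the total number of map-task arrivals over the whole instance is $\sfn_m \le \sfJ\, w^t_{\max}$ if sizes were unit, but we should instead say the number of \emph{jobs} is $\sfJ$ and each job has at most $K$ "brother-count decrease" events that matter (once $b^o_j$ drops to $K$ it stays relevant; there are at most $K$ distinct values $\hat b^o_j$ can take in the regime that affects $z^o_j$), and at most $K$ such decrease events for the algorithm likewise. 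Deletions of $\opt$ map tasks and completions are each at most $\sfn_m$ in number, but grouping by job: each job triggers $O(K)$ discontinuities that each cost $O(w^t_{\max} r_{\max}^{1-1/\alpha})$, for a per-job total of $O(w^t_{\max} K\, r_{\max}^{1-1/\alpha})$, and a further factor $K$ because when $\hat b^o_j$ jumps, up to $K$ sibling map tasks of the same job have their $z^o_j$ (hence their $g$-term) change simultaneously. Summing over the $\sfJ$ jobs yields the stated bound $O(\sfJ\, w_{\max} K^2\, r_{\max}^{1-1/\alpha})$.

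Finally, invoking the \emph{many tasks} scaling of Assumption \ref{ass:size}, which posits $O(\sfJ w^t_{\max} K^2 r_{\max}^{1-1/\alpha}) = o(\sum_{j\in\cJ}(\sfm_{jk}+\sfr_{j\ell}))$, this total jump is $o(\sum_{j\in\cJ}(\sfm_{jk}+\sfr_{j\ell}))$, which by Proposition \ref{prop:optlb} is $o(C^{\opt}_{\text{task}}(\sigma))$; combined with Remark \ref{rem:boundary} this contributes only an additive $o(1)$ to the competitive ratio. The step I expect to be the main obstacle is the careful bookkeeping in the middle paragraph: precisely accounting for how many discontinuity events each job can cause and how many sibling map-task $g$-terms jump at once — in particular arguing that only $O(K)$ distinct values of $\hat b^o_j$ (and $\hat b_j$) are ever relevant, so that the $z^o_j$-changes contribute a bounded number of jumps per job rather than one per unit of processed work, and that the $\min\{b,K\}$ truncation in $\hat b$ is exactly what makes this finite. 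Getting the exponents on $K$ and $r_{\max}$ to match the claimed $K^2 r_{\max}^{1-1/\alpha}$ requires being honest about both the per-event magnitude $O(w^t_{\max} r_{\max}^{1-1/\alpha})$ and the two sources of the factor $K$ (number of relevant brother-count levels, and number of siblings updated per level).
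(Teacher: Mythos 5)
Your quantitative skeleton is the same as the paper's: each affected $g$-term jumps by at most $\Delta\bigl(\tfrac{\ell_j+{\hat b}}{{\hat b}}\bigr)\,w_{\max}^t \le \Delta(\ell_j+1)\,w_{\max}^t = O(w_{\max}^t r_{\max}^{1-1/\alpha})$, only departures occurring while the brother count is at most $K$ can change ${\hat b}$, so each job contributes $O(K)$ relevant events with at most $K$ sibling terms updated per event, and summing over the $\sfJ$ jobs gives $O(\sfJ w_{\max}^t K^2 r_{\max}^{1-1/\alpha})$, which Assumption \ref{ass:size} turns into $o(\sum_{j \in \cJ}(\sfm_{jk}+\sfr_{j\ell}))$. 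But two points you leave open are exactly the ones the paper settles first, and without them your bookkeeping does not close. First, arrivals: you list the arrival of new map tasks as a possible source of upward jumps and never dispose of it --- your counting sentence trails off and silently drops them. The needed observation (the paper's first step) is that at an arrival the algorithm and the $\opt$ acquire \emph{identical} new terms in \eqref{defn:new:phi3} (same sizes, same $\ell_j$, same initial ${\hat b}_j = {\hat b}_j^o$), so the two sums increase by the same amount and $\Phi_3$ is continuous there. This cancellation is not cosmetic: applying your per-task magnitude bound to all $\sfn_m$ arriving map tasks would give $O(\sfn_m\, w_{\max}^t\, r_{\max}^{1-1/\alpha})$, which Assumption \ref{ass:size} does not control when jobs have many more than $K^2$ map tasks.

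Second, the $\opt$ side. A map task's own term vanishes continuously at its completion (its support $[0,q_j^o]$ shrinks to a point), so there is no jump from ``removing a negative term'' as your item (ii) asserts; and when ${\hat b}_k^o$ drops, $z_k^o$ increases, the \emph{subtracted} sum increases, and $\Phi_3$ jumps \emph{down}, so your items (ii)--(iii) need not be counted at all --- the paper simply discards them as negative. This matters for your event count: if $\opt$ completions really did each release an upward jump of the size you quote, there would be $m_i$ such events per job rather than $O(K)$, and the $\min\{b,K\}$ truncation you invoke would not gate them, so your per-job tally would be wrong. The only genuine upward jumps are the algorithm-side sibling updates (your item (iv)), for which the $K \times K$ count per job and the $\Delta(\ell_j+1)\,w_{\max}^t$ per-term bound do yield the stated $O(\sfJ w_{\max}^t K^2 r_{\max}^{1-1/\alpha})$. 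With these two repairs your argument coincides with the paper's proof.
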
 

%

Next, we work towards finding the drift (derivative) of $\Phi(t)$. 
The drift terms for the potential function $\Phi_1$ can be derived from \cite[Lemma 10]{vaze}, where the only difference between $\Phi_1$ considered in this paper and \cite{vaze} is that $q$ scaled by $1/\beta$. In \cite{vaze}, without any precedence constraints, the shortest $\min\{n(t),K\}$ jobs (that have a single task) with size $q_1\le  \dots \le q_{\min\{n(t),K\}}$ are processed by the algorithm, making 
$n(t, q_i) = n(t)-i$ for all $i=1, \dots, \min\{n(t),K\}$.
With the precedence constraint in this section, the  algorithm processes the $\min\{n_f, K\}$ {\bf free} tasks which are not necessarily the $\min\{n, K\}$ shortest tasks, since a caged reduce task could be smaller but not being processed. 
Let $q_1 \le q_2, \dots, \le q_{\min\{n_f, K\}}$ be the remaining sizes of the $\min\{n_f, K\}$ shortest free tasks being processed.
This scaling of $q$ by $\beta$ ensures that even when the absolute shortest task is not being processed by the algorithm $n(t, q_i/\beta) \ge n(t)-i$ for $i=1,\dots,\min\{n_f, K\}$, since the size of any caged reduce job is at least $w_{reduce, \min}$. 
Since $\beta$ is a constant, scaling of $q$ by $1/\beta$ translates to having an additional $\beta$ factor in the derivative of $\Phi_1(t)$, and we get the following result, where ${\tilde s}_k$ is the speed of server $k$ with the $\opt$.
\begin{lemma}\cite[Lemma 10]{vaze}
  \label{lemma:phi1_OPT-SRPT}
  For $n_f \geq K$,$ d\Phi_1/dt \le  \frac{c_1}{\beta} n^o - \frac{c_1}{\beta} n + \frac{c_1}{\beta} \left(\frac{K-1}{2}\right) + c_1 \sum_{k \in \opt } P(\tilde{s}_k),$ 
  while for $n_f < K,$
  $d\Phi_1/dt \le \frac{c_1}{\beta} n^o - \frac{c_1}{\beta} \frac{n(n+1)}{2K} + c_1 \sum_{k \in \opt} P(\tilde{s}_k)$.
\end{lemma}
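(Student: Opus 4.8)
The plan is to obtain both bounds from \cite[Lemma 10]{vaze} by tracking how the substitution $u=q/\beta$ inside $d(t,q)$ propagates through that lemma's drift computation, and to verify the one place where the MapReduce constraint changes the argument. Recall that \cite{vaze} analyzes exactly the functional $c_1\int_0^\infty f\!\left(\max\{0,n(t,u)-n^o(t,u)\}/K\right)du$ for single-task jobs with no precedence, under multi-server SRPT and against an SRPT-running $\opt$; here $\opt$ runs multi-server SRPT over all its (precedence-free) tasks, so that side is structurally unchanged, and the only difference on the algorithm's side is that it runs SRPT over the \emph{free} tasks. I would first dispose of the discontinuities: at a job arrival the new job contributes the same tasks to $n(t,\cdot)$ and to $n^o(t,\cdot)$ (map tasks for the algorithm, map and reduce tasks for $\opt$, which ignores precedence), so $n-n^o$ and hence $\Phi_1$ does not jump; at a completion the relevant remaining size decreases continuously to $0$; so $d\Phi_1/dt$ is fully accounted for by the algorithm's and $\opt$'s instantaneous processing.

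For $\opt$'s contribution I would reuse \cite{vaze} essentially verbatim after the rescaling. Processing the $j$-th shortest of $\opt$'s $\min\{n^o,K\}$ tasks at speed $\tilde s_j$ raises $d(t,\cdot)$ by at most $1$ on a $q$-interval of length proportional to $\tilde s_j\,dt$, and there $d(t,\cdot)\le n^o$; using $f(i/K)-f((i-1)/K)=\Delta(i/K)$ with $\Delta$ increasing, then splitting the product $\tilde s_j\,\Delta(\cdot)$ via Lemma \ref{lem:bansal} (the point at which Assumption \ref{ass:Pconvex} enters) into a $P(\tilde s_j)$ term plus a term linear in $n^o$, and summing over $j$ with the index averaged over $\{0,\dots,K-1\}$, yields the $\tfrac{c_1}{\beta}n^o$, $\tfrac{c_1}{\beta}\tfrac{K-1}{2}$ and $c_1\sum_{k\in\opt}P(\tilde s_k)$ terms.

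The genuinely new step is the algorithm's negative contribution. It processes the $\min\{n_f,K\}$ shortest free tasks, with remaining sizes $q_1\le\cdots\le q_{\min\{n_f,K\}}$, which need not be the globally shortest tasks, so a naive estimate would extract too little drift. The fix is the elementary inequality $n(t,q_i/\beta)\ge n(t)-i$: every caged reduce task has remaining size at least $w_{reduce,\min}$, and $q_i\le w^t_{\max}$, so $q_i/\beta\le w_{reduce,\min}$, hence every task other than free tasks $1,\dots,i-1$ has remaining size at least $q_i/\beta$. This is exactly the counting identity \cite{vaze} uses for global SRPT, now at the rescaled threshold, so the same computation goes through: when $n_f\ge K$ all $K$ servers run at the common speed $P^{-1}((n+1)/K)$, the speed cancels against $\Delta$ as in \cite{vaze}, and the algorithm contributes at most $-\tfrac{c_1}{\beta}n$; when $n_f<K$ only $n_f$ servers are busy and the contribution is at most $-\tfrac{c_1}{\beta}\tfrac{n(n+1)}{2K}$ (and it is nonpositive wherever $d(t,\cdot)=0$, consistent with the stated bounds). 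Summing the $\opt$ and algorithm contributions gives the two inequalities.

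The main obstacle is the $\beta$-bookkeeping: one has to check that introducing the $1/\beta$ scaling in $d(t,q)$ — equivalently, processing free tasks that are only guaranteed to be globally ``$\beta$-short'' — costs exactly the factors of $1/\beta$ claimed on the $n^o$, $n$, $\tfrac{K-1}{2}$ and $\tfrac{n(n+1)}{2K}$ terms while leaving the energy term $c_1\sum_{k\in\opt}P(\tilde s_k)$ with coefficient $c_1$, and that the regime split is now correctly on $n_f$ (the number of busy servers) rather than on $n$, so that both the speed cancellation in the $n_f\ge K$ branch and the caged-task inequality $n(t,q_i/\beta)\ge n(t)-i$ for every $i\le n_f$ remain valid in the $n_f<K$ branch.
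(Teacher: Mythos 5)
This is essentially the paper's own route: the paper proves the lemma only by invoking \cite[Lemma 10]{vaze} and observing that the single new ingredient is the counting inequality $n(t,q_i/\beta)\ge n(t)-i$ for the shortest free tasks (every caged reduce task has remaining size at least $w_{reduce,\min}\ge q_i/\beta$ since $q_i\le w^t_{\max}$), with $\opt$'s side unchanged because it runs precedence-free multi-server SRPT --- exactly the two points you isolate, together with the no-positive-jump check at arrivals. The one caution concerns the $\beta$-bookkeeping that you yourself flag as the main obstacle: a change of variables shows the scaled potential is just $\beta$ times the unscaled one (indeed the paper speaks of an \emph{additional factor $\beta$} in the derivative, even though the lemma carries $c_1/\beta$), and the drift generated by working on the free task of remaining size $q_i$ sits at the threshold $q_i$ (i.e.\ at $q=\beta q_i$ in the integration variable), not at $q_i/\beta$, so when you redo the computation of \cite{vaze} you should verify explicitly where the inequality $n(t,q_i/\beta)\ge n(t)-i$ enters and how the stated $1/\beta$ factors on the $n$, $n^o$ and $\frac{K-1}{2}$ terms arise, rather than asserting that ``the same computation goes through at the rescaled threshold.''
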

\begin{lemma} \label{lemma:phi2_OPT-SRPT}\cite[Lemma 11]{vaze}    
$d\Phi_2/dt \leq -c_2 \min(K,n) P^{-1}(1)+  c_2\sum_{k \in \opt}  P(\tilde{s}_k)$.
\end{lemma}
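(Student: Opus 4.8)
The plan is to identify $\Phi_2$ with a constant multiple of the gap between the algorithm's and $\opt$'s total remaining work, and then rerun the short computation behind \cite[Lemma 11]{vaze}, checking that the precedence constraints do not invalidate any of its steps. Indeed, $\int_0^\infty n(t,q)\,dq$ is exactly the sum of the remaining sizes of all outstanding tasks of the algorithm, and $\int_0^\infty n^o(t,q)\,dq$ is the corresponding quantity for $\opt$; hence $\Phi_2(t)=c_2\bigl(W(t)-W^o(t)\bigr)$, where $W$ and $W^o$ denote the two total-remaining-work functions. The boundary conditions are then immediate (as already noted): at a job arrival $W$ and $W^o$ increase by the same amount, so $\Phi_2$ has no jump, and otherwise $\Phi_2$ is continuous in $t$.

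Between events $\Phi_2$ is differentiable with $d\Phi_2/dt=c_2\bigl(-R_A(t)+R_{\opt}(t)\bigr)$, where $R_A(t)=\sum_{\text{busy } k}s_k(t)$ and $R_{\opt}(t)=\sum_{k\in\opt}\tilde s_k(t)$ are the aggregate processing rates. The crux is a lower bound on $R_A(t)$: every busy server of the algorithm runs at speed at least $P^{-1}(1)$. A server on a free reduce task runs at exactly $P^{-1}(1)$; a server on a map task $j$ runs at $P^{-1}\!\bigl((\ell_j+\hat{b}_j+1)/\hat{b}_j\bigr)$ and $(\ell_j+\hat{b}_j+1)/\hat{b}_j>1$ since $\ell_j\ge 1$; and when $n_f\ge K$ all $K$ servers run at $P^{-1}((n+1)/K)\ge P^{-1}(1)$ because $n\ge n_f\ge K$. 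Since the algorithm keeps $\min(K,n_f)$ servers busy, $R_A(t)\ge\min(K,n_f)\,P^{-1}(1)$, which is the $\min(K,n)$ term of \cite{vaze} with $n_f$ in the role that $n$ plays in the precedence-free model. For $R_{\opt}(t)$ one bounds each speed by $\tilde s_k\le\max\{\bar s,\tilde s_k\}\le\max\{P(\bar s),P(\tilde s_k)\}$, using $P(\bar s)=\bar s$ and $P(s)\ge s$ for $s\ge\bar s$ (both from Assumption~\ref{ass:Pconvex}), so $R_{\opt}(t)\le\sum_{k\in\opt}\max\{P(\bar s),P(\tilde s_k)\}$ exactly as in the job-based Lemma~\ref{lemma:jobphi2_OPT-SRPT}; the $P(\bar s)$ terms contribute at most $KP(\bar s)$ and are subsumed in the same way as there, leaving $d\Phi_2/dt\le -c_2\min(K,n)P^{-1}(1)+c_2\sum_{k\in\opt}P(\tilde s_k)$.

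The only place this departs from \cite{vaze}, and hence the main obstacle, is precisely the lower bound on $R_A$: with precedence constraints the number of \emph{free} tasks $n_f$ can be much smaller than $\min(K,n)$ (many caged reduce tasks, few map tasks), so on its own $\Phi_2$ furnishes only $\min(K,n_f)P^{-1}(1)$ of negative drift rather than the full $\min(K,n)P^{-1}(1)$. This deficit is exactly what the additional potential $\Phi_3$ of \eqref{defn:new:phi3} is designed to absorb through its drift, so for the present lemma it suffices to record the clean fact that \emph{every} busy server of the algorithm runs at speed at least $P^{-1}(1)$, which the speed rule guarantees irrespective of the precedence structure.
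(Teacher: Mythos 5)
Your overall skeleton is the right one and is essentially the argument behind the cited result: $\Phi_2$ is $c_2$ times the gap in total remaining work, so its drift is $c_2\bigl(R_{\opt}(t)-R_A(t)\bigr)$, and the algorithm-side term comes from the fact that every busy server of the algorithm runs at speed at least $P^{-1}(1)$. Your observation that under precedence constraints this yields only $\min(K,n_f)\,P^{-1}(1)$ rather than $\min(K,n)\,P^{-1}(1)$ is also apt: it is this weaker, $n_f$-form that the paper actually invokes in \eqref{dummy127}, \eqref{dummy128} and in the appendix case $n_m=0$, and the shortfall when $n_f<K\le n$ is indeed what $\Phi_3$ is designed to absorb (the literal $\min(K,n)$ bound is not attainable in general here; e.g.\ a single free map task with a few caged reduce tasks and $\alpha$ large gives total processing rate close to $P^{-1}(1)$ while $\min(K,n)$ is several times larger).

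The genuine gap is on the $\opt$ side. From $\tilde s_k\le\max\{P(\bar s),P(\tilde s_k)\}$ you cannot simply ``subsume'' the $P(\bar s)$ branch and conclude $R_{\opt}(t)\le\sum_{k\in\opt}P(\tilde s_k)$: the inequality $\tilde s_k\le P(\tilde s_k)$ fails whenever $\opt$ runs a server at speed below $\bar s$, and nothing in this section rules that out ($\opt$'s scheduling is fixed to SRPT but its speeds are arbitrary). Note that the paper's own job-based analogue, Lemma \ref{lemma:jobphi2_OPT-SRPT}, does \emph{not} drop the max: it keeps $\max\{P(\bar s),P(\tilde s_k)\}$, and that is precisely why the factor $\frac{2}{P^{-1}(1)}\max(1,P(\bar s))$ appears in Theorem \ref{thm:srptimproved}. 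To repair your last step you must either retain the max in the statement, or bound the number of busy $\opt$ servers by $\min(K,n^o)$ and absorb the resulting $\min(K,n^o)P(\bar s)$ into the $n^o$ term on the right-hand side of \eqref{eq:mothereq} (benign for $P(s)=s^\alpha$, where $P(\bar s)=1$, but this changes the constant and is no longer the inequality you set out to prove). As written, the concluding inequality of your second paragraph does not follow from the steps preceding it.
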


Next, we quantify the more {\bf non-trivial} drift of $\Phi_3(t)$ because of the processing by the algorithm and the $\opt$.
We will need the following result from \cite{bansal2009speedconf}.
\begin{lemma}\label{lem:bansal}[Lemma 3.1 in~\cite{bansal2009speedconf}]
 Let $P$ satisfy Assumption \ref{ass:Pconvex}.  Then for $s_k, {\tilde s}_k, x \ge 0$,
\begin{align*}
\Delta(x)(-s_k + {\tilde s}_k) \le& \left(-s_k +P^{-1}(x)\right)\Delta(x) 
 + P({\tilde s}_k) -x.
\end{align*}
\end{lemma}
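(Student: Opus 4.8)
The plan is to cancel the term common to both sides and recognize what remains as the first‑order convexity inequality for $P$. Writing out the right‑hand side, the term $-s_k\Delta(x)$ appears with the same coefficient on both sides, so it cancels; the claimed inequality is therefore equivalent to
\[
\bigl(\tilde s_k - P^{-1}(x)\bigr)\,\Delta(x) \;\le\; P(\tilde s_k) - x .
\]
Note that $s_k$ has disappeared entirely — it is carried along only because both the algorithm's speed $s_k$ and the optimal speed $\tilde s_k$ enter the potential‑function drift computations in which this lemma is invoked.

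Next I would change variables. By Assumption~\ref{ass:Pconvex}, $P$ is a differentiable (hence continuous), strictly increasing bijection of $\bbR_+$ onto $\bbR_+$ with $P(0)=0$, so $P^{-1}$ is well defined on all of $\bbR_+$; set $y := P^{-1}(x)$, whence $x = P(y)$ and, by definition of $\Delta$, $\Delta(x) = P'\bigl(P^{-1}(x)\bigr) = P'(y)$. Substituting, the inequality to be proved becomes
\[
P'(y)\,(\tilde s_k - y) \;\le\; P(\tilde s_k) - P(y),
\]
which is exactly the supporting‑line (subgradient) inequality $P(b) \ge P(a) + P'(a)(b-a)$ for the differentiable convex function $P$, with $a = y$ and $b = \tilde s_k$. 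Since $P$ is convex by Assumption~\ref{ass:Pconvex}, this holds for all $y,\tilde s_k \ge 0$, and unwinding the substitution and re‑adding $-s_k\Delta(x)$ to both sides yields the stated inequality.

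There is essentially no obstacle here: once the common $-s_k\Delta(x)$ term is removed and the substitution $y = P^{-1}(x)$ is made, the claim is literally the first‑order characterization of convexity. The only place the hypotheses are used is to guarantee that $P^{-1}$ is defined on all of $\bbR_+$ — which follows from strict monotonicity, continuity, $P(0)=0$, and $\lim_{s\to\infty}P(s)=\infty$, all part of Assumption~\ref{ass:Pconvex} — and that $P$ is convex so that the supporting‑line inequality applies.
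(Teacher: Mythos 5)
Your proof is correct: after cancelling the common $-s_k\Delta(x)$ term and substituting $y=P^{-1}(x)$, the claim is exactly the supporting-line inequality $P(\tilde s_k)\ge P(y)+P'(y)(\tilde s_k-y)$ for the convex function $P$, and Assumption~\ref{ass:Pconvex} guarantees $P^{-1}$ is well defined. The paper gives no proof of its own (it simply cites Lemma~3.1 of \cite{bansal2009speedconf}), and your argument is essentially the standard derivation of that cited lemma, so there is nothing to reconcile.
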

Let $n_{fr}(t)$ be the free reduce tasks at time $t$, and let ${\hat n}(t)= n(t) - n_{fr}(t)$ $({\hat n}^o(t))$ be the number of outstanding tasks other than the free reduce tasks at time $t$ with the algorithm ($\opt$). The {\bf main new result} we need with precedence constraints is as follows, where $\b1_{\{e\}} =1$ if event $e$ is true and zero otherwise.

\begin{lemma}\label{lem:drfitspecial} 
  \label{lemma:phi3drift}
 $d\Phi_3/dt   \le - c_3 \b1_{\{n_f< K\}} {\hat n}(t) 
     + c_3(\alpha-1) ({\hat n}^o(t))+ c_3 \sum_{k =1}^{m^o(t)} P(\tilde{s}_k)$,
\end{lemma}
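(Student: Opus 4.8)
\textbf{Proof proposal for Lemma \ref{lem:drfitspecial}.}

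The plan is to compute the drift $d\Phi_3/dt$ by separating the contribution of the algorithm's processing from that of $\opt$'s processing, exactly as is done for $\Phi_1$ and $\Phi_2$, but tracking the extra structure coming from the functions $g_a$ and the multipliers $z_j(\hat b_j,t)$. Write $\Phi_3(t) = c_3(A(t) - B(t))$, where $A$ is the sum over the algorithm's outstanding map tasks and $B$ the sum over $\opt$'s. First I would handle $A(t)$. When $n_f \ge K$, I do not need a negative term from $A$ at all (the $\b1_{\{n_f<K\}}$ factor kills it), so I only need to check $A$ contributes nonpositively — which it does, since the algorithm only processes tasks (decreasing the $h_j$ indicators) and the $\hat b_j$, $z_j$ can only change at discontinuities, already accounted for in Lemma \ref{lem:jump3}. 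When $n_f < K$, the algorithm is processing every free task, in particular every outstanding map task; here I would show that processing map task $j$ at speed $s_k = P^{-1}\!\big(\tfrac{\ell_j + \hat b_j + 1}{\hat b_j}\big)$ drives down $\int g_{z_j}(z_j h_j(t,q))\,dq$ at rate $z_j \cdot s_k \cdot \Delta(z_j \cdot (\text{current level}))$, and by the definition of $g_a$ together with the algorithm's speed choice, the total rate of decrease summed over the $\le K$ map tasks being processed is at least $\hat n(t)$ — i.e. $dA/dt \le -\b1_{\{n_f<K\}} \hat n(t)$. This is where the specific form of $z_j = (\ell_j + \hat b_j)/\hat b_j$ and the "$+1$" in the speed must be matched carefully against $g_a(ai) - g_a(a(i-1)) = \Delta(ai)$; it is essentially a telescoping/counting argument showing each unit of $\ell_j + 1$ (the ``load'' of map task $j$, counting itself and the reduce tasks it blocks) contributes one unit of drift.

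Next I would handle $-B(t)$, the $\opt$ term, which contributes positively to the drift since $\opt$'s processing decreases $B$. Here I would invoke Lemma \ref{lem:bansal} with $x = z_j^o \cdot (\text{level})$ on each map task $j$ that $\opt$ processes: the raw positive contribution $\Delta(x)\tilde s_k$ is bounded by $(\,P^{-1}(x) - \text{anything nonneg}\,)\Delta(x) + P(\tilde s_k) - x$, and since $\opt$ runs SRPT without precedence constraints it need not even be processing map tasks — but when it is, the "$-x$" term combined with the definition of $g$ produces the bound. Summing over $\opt$'s $\le K$ servers gives a $c_3\sum_{k=1}^{m^o(t)} P(\tilde s_k)$ term plus a term I need to argue is at most $c_3(\alpha-1)\hat n^o(t)$. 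The $(\alpha - 1)$ factor should emerge from the gap between $g_a(ai)$ and $i \cdot \Delta(ai)$ — i.e. from the fact that for $P(s) = s^\alpha$, $\Delta(x) = \alpha x^{1-1/\alpha}$ is sub-additive in a controlled way, so $\sum_{i} \Delta(ai) \le$ (roughly) $\frac{\alpha}{?}\cdot$(number of tasks), and the residual after cancelling $x$ is proportional to $(\alpha-1)$ times the count of $\opt$'s non-free-reduce tasks $\hat n^o$. I would make this precise by bounding $g_a(aN) \le $ some explicit expression and comparing to $N\Delta(aN)$.

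I expect the main obstacle to be the $\opt$ bound, specifically extracting the clean $(\alpha-1)\hat n^o(t)$ coefficient. The subtlety is that $\opt$'s map tasks have their own brother-counts $\hat b_j^o$ and loads $\ell_j$, and $z_j^o$ can be large, so $\Delta(z_j^o \cdot \text{level})$ is not small; I need the telescoping structure of $g$ plus Lemma \ref{lem:bansal}'s $-x$ term to absorb the dangerous part, leaving only a lower-order residue. A secondary obstacle is bookkeeping the case $n_f \ge K$ versus $n_f < K$ consistently with $\hat n$ versus $n$, and making sure the discontinuities (task completions changing $\hat b_j$, and reduce tasks becoming free, which removes map-task contributions from $\Phi_3$) are all nonpositive jumps or already charged to Lemma \ref{lem:jump3}. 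Once both pieces are assembled, combining $dA/dt$ and $-dB/dt$ and multiplying by $c_3$ yields the stated inequality.
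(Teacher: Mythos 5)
Your decomposition of $\Phi_3$ into the algorithm's and $\opt$'s sums, and your treatment of the algorithm side, track the paper's proof: for $n_f\ge K$ the algorithm's (nonpositive) contribution is simply dropped, and for $n_f<K$ every outstanding map task is being processed, so with speed $s_j=P^{-1}\bigl(\tfrac{\ell_j+{\hat b}_j+1}{{\hat b}_j}\bigr)$ the per-task drift is at most $-z_j$ (either by Lemma \ref{lem:bansal} with ${\tilde s}_k=0$, or directly since $s_j\,\Delta(z_j)\ge P^{-1}(z_j)\Delta(z_j)=\alpha z_j\ge z_j$), and grouping the ${\hat b}_j$ brother tasks of each job gives $-(\ell_j+{\hat b}_j)$ per job, hence $-{\hat n}(t)$ in total. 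Two of your descriptions of this side are off, though harmlessly so in direction: the rate of decrease of $\int g_{z_j}(z_j h_j(t,q))\,dq$ is $s_j\,\Delta(z_j)$, not $z_j s_j\Delta(\cdot)$, and the per-task drift is $z_j=\tfrac{\ell_j+{\hat b}_j}{{\hat b}_j}$, not $\ell_j+1$; the quantity ${\hat n}(t)$ only emerges after the per-job regrouping.

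The genuine gap is on the $\opt$ side, precisely the step you flag as your main obstacle and leave unresolved, and the mechanism you propose for it does not apply. In \eqref{defn:new:phi3} each $h_j^o(t,\cdot)$ is a $\{0,1\}$ indicator, so $g_{z_j^o}$ is only ever evaluated at level $1$: there is no telescoping over levels $i$, no comparison of $g_a(aN)$ with $N\Delta(aN)$, and no sub-additivity of $\Delta$ in play. The actual argument is a one-liner followed by a counting step. The positive drift from $\opt$ processing map task $j$ is $\Delta(z_j^o)\,{\tilde s}_j$, and Lemma \ref{lem:bansal} with $s_k=0$, $x=z_j^o$ gives $\Delta(z_j^o){\tilde s}_j\le P^{-1}(z_j^o)\Delta(z_j^o)-z_j^o+P({\tilde s}_j)=(\alpha-1)z_j^o+P({\tilde s}_j)$, using $P^{-1}(x)\Delta(x)=\alpha x$ for $P(s)=s^\alpha$; this is where $(\alpha-1)$ comes from, not from any property of $g$ beyond its first increment. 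What you still owe is the conversion of $\sum_j z_j^o$ into ${\hat n}^o(t)$: $\opt$ processes at most ${\hat b}_j^o$ brother map tasks of any one job, and these contribute at most ${\hat b}_j^o\cdot\tfrac{\ell_j^o+{\hat b}_j^o}{{\hat b}_j^o}=\ell_j^o+{\hat b}_j^o$, which is bounded by that job's number of tasks other than free reduce tasks; summing over jobs yields $(\alpha-1){\hat n}^o(t)$. Without this, the bound is not established — note that a per-task estimate such as $z_j^o\le \ell_j^o+1$ multiplied by the number of processed brothers would overshoot ${\hat n}^o(t)$, so the division by ${\hat b}_j^o$ inside $z_j^o$ and the per-job regrouping are essential, exactly mirroring the counting you also glossed over on the algorithm side.
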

\begin{proof}
For notational simplicity, we suppress the time index, since we are considering a fixed time $t$.
  Consider the change in $\Phi_3$ due to $\opt$ ($h_j^o(q) \rightarrow
  h_j^o(q)-1$) for at most $m^o$ map tasks in the interval $[q_j^o, q_j^o-{\tilde s}_jdt]$ where $q_j^o$ is the remaining size of the $j^{th}$ map task with the $\opt$: $ $
  \begin{eqnarray}\nn
    d\Phi_3 &\le & c_3 \sum_{j= 1}^{m^o}
    \biggl[ g_{z_j^o({\hat b}^o)} \left( z_j^o({\hat b}^o) h_j^o( q_j^o)\right)dq \biggr. \\ \nn
    & &   -
     \biggl. g_{z_j^o({\hat b}^o, t)} \left( z_j^o({\hat b}^o) (h_j^o(q_j^o)-1)\right)dq \biggr]{\tilde s}_j dt,  \\ \nn
     &= & 
     c_3\sum_{j= 1}^{m^o}\Delta \left(\frac{\ell_j^o+{\hat b}_j^o }{{\hat b}_j^o}h_j^o(q_j^o)\right){\tilde s}_j dt, \\ \label{eq:dummy766}
     &=&  
     c_3\sum_{j= 1}^{m^o}\Delta \left(\frac{\ell_j^o+{\hat b}_j^o }{{\hat b}_j^o}\right){\tilde s}_j dt, 
     \end{eqnarray}
     since $h_j^o(q_j^o)=n_j^o=1$.
    Using Lemma \ref{lem:bansal} individually on the ${m^o}$ terms of \eqref{eq:dummy766} with $s_k=0$, and noting that  $P^{-1}(x) \Delta(x) = (\alpha-1)x$, we get 
    \begin{eqnarray} \label{eq:dummy767}
    d\Phi_3 & \le & c_3 (\alpha-1) \sum_{j= 1}^{m^o} \left(\frac{\ell_j^o+{\hat b}_j^o}{{\hat b}_j^o}\right) + c_3 \sum_{j =1}^{m^o} P(\tilde{s}_k), \\
   & \le &c_3  (\alpha-1)  ({\hat n}^o(t)) + c_3 \sum_{j =1}^{m^o} P(\tilde{s}_k),
    \end{eqnarray}
    since the number of tasks $n_k$ in job $k$ to which map task $j$ belongs $n_k = \ell_j^o+{\hat b}_j^o$.

Next, we consider the change in $\Phi_3$ because of algorithm's execution only when $n_f < K$ i.e., there are $n_m\le K-1$ outstanding map tasks. Let 
these $n_m$ outstanding map tasks belong to $|J|$ distinct jobs. Because of algorithm's execution, $h_j(q)
\rightarrow h_j(q)-1$ for map task $j$ in the interval $[q_j, q_j-s_j dt]$ where $q_j$ is the remaining size of the 
$j^{th}$ map task with the algorithm. Thus, 
\begin{eqnarray}\nn
  d\Phi_3 &=& c_3 \sum_{j = 1}^{n_m}
  \biggl[g_{z_j({\hat b})} \left( z_j({\hat b}) (h_j(q_j)-1)\right) \biggr. \\ \nn
  &  \biggl. -
    g_{z_j({\hat b}, t)} \left( z_j({\hat b}) (h_j(q_j))\right) \biggr]s_j dt, \\  \nn
 &= & 
     -c_3\sum_{j= 1}^{n_m}\Delta \left(\frac{\ell_j+{\hat b}_j}{{\hat b}_j}h_j(q_j)\right)s_j dt,      \\ \label{eq:dummy866}
      &= &   -c_3\sum_{j= 1}^{n_m}\Delta \left(\frac{\ell_j+{\hat b}_j}{{\hat b}_j}\right) s_j dt,
      \end{eqnarray}
     since $h_j(t, q_j)=1$.
    Using Lemma \ref{lem:bansal} individually on each of the $n_m$ terms with ${\tilde s}_k=0$, we get     
    \begin{eqnarray}\nn
 d\Phi_3 &\le & c_3 \sum_{j = 1}^{n_m}
    \left(-s_j +P^{-1}\left(\frac{\ell_j+{\hat b}_j}{{\hat b}_j}\right)\right)\Delta\left(\frac{\ell_j+{\hat b}_j }{{\hat b}_j}\right)\\  \nn
    & & \quad - c_3 \sum_{j = 1}^{n_m}\left(\frac{\ell_j+{\hat b}_j }{{\hat b}_j}\right) \\ \nn
&\stackrel{(a)}=&  - c_3 \sum_{j = 1}^{n_m}\left(\frac{\ell_j+{\hat b}_j }{{\hat b}_j}\right), \\ \nn
    & \stackrel{(b)}=&  - c_3 \sum_{k=1}^{|J|}\sum_{j = 1, j\in J_k}^{b_j}\left(\frac{\ell_j+{\hat b}_j }{{\hat b}_j}\right), \\ \label{eq:dummy867}
    &=& -c_3  {\hat n}(t),
 \end{eqnarray}   
 where $(a)$ follows since the speed of map task $j$ is 
    $s_j = P^{-1}\left(\frac{\ell_j +{\hat b}_j(t)+1}{{\hat b}_j(t)}\right)$ making the first term $<0$, and in $(b)$ we have separated the contributions from the $|J|$ outstanding jobs, and $J_k$ represents the $k^{th}$ job which has $b_j+\ell_j$ tasks for $j\in J_k$, and the final inequality follows since the total number of map tasks is $\sum_{j\in n_m} b_j \le K-1$. Combining \eqref{eq:dummy767} and \eqref{eq:dummy867}, we get the result.
 \end{proof}
 
Next, we want to show that \eqref{eq:mothereq} holds for an appropriate choice of $c$, using the bounds that we have derived for $d\Phi_1/dt, d\Phi_2/dt$ and $d\Phi_3/dt$. 
We begin with the case that $n_f \geq K$. In this case, we only count the positive terms (because of $\opt$) of $d\Phi_3/dt$ of Lemma \ref{lem:drfitspecial} and ignore the negative contribution because of the execution of the algorithm.  Combining Lemma \ref{lemma:phi1_OPT-SRPT}, Lemma \ref{lemma:phi2_OPT-SRPT}, and Lemma \ref{lem:drfitspecial}, we get $ n + \sum_{k=1}^KP(s_k) + d\Phi_1/dt + d\Phi_2/dt + d\Phi_3/dt $
\begin{align}
& \le  n+ n + \frac{c_1}{\beta} n^o - \frac{c_1}{\beta} n + \frac{c_1}{\beta} \left(\frac{K-1}{2}\right) + c_1 \sum_{k \in \opt} P({\tilde s}_k) \\ \label{dummy127}
&\quad  -c_2 K P^{-1}(1) +  c_2\sum_{k \in \opt} P({\tilde s}_k)  + c_3 (\alpha-1) n^o, \\
&\quad + c_3 \sum_{k \in \opt} P({\tilde s}_k) ,
\end{align}

When $n_f<K$ and $n_m > 0$,  combining Lemma \ref{lemma:phi1_OPT-SRPT}, Lemma \ref{lemma:phi2_OPT-SRPT}, and Lemma \ref{lem:drfitspecial}, $n + \sum_{k=1}^KP(s_k) + d\Phi_1/dt + d\Phi_2/dt + d\Phi_3/dt$ 
\begin{align}\nn
&\le n + {\hat n} + n_{fr} P(P^{-1}(1)) + \frac{c_1}{\beta} n^o - \frac{c_1}{\beta} \frac{n(n+1)}{2K} \\ \nn
& \quad  + c_1 \sum_{k \in \opt} P({\tilde s}_k) -c_2 n_f P^{-1}(1)  \\ \label{dummy128}
&\quad + c_2\sum_{k \in \opt} P({\tilde s}_k) -c_3  {\hat n} + c_3 (\alpha-1)n^o + c_3\sum_{k \in \opt} P({\tilde s}_k).
\end{align}
Thus, for both cases \eqref{dummy127} and \eqref{dummy128}, we get $
 n + \sum_{k=1}^KP(s_k) + d\Phi_1/dt + d\Phi_2/dt + d\Phi_3/dt  \le c \left(n^o + \sum_{k \in \opt} P({\tilde s}_k)\right)$,
for $c = c_1+c_2+(\alpha-1)c_3$ for $c_3 \ge 2$.
The final case is when the algorithm has $n_f<K$ and $n_m =0$, i.e., there are less than $K$ free tasks and all of them are of reduce type, i.e., $n = n_{fr}$, for which 
the proof is provided in Appendix \ref{app:finalcasetaskbased}.
Thus, $c$ for which \eqref{eq:mothereq} holds for all the three cases is $c = 2\beta +2+2(\alpha-1)$. 

Moreover, recall that that $C_{\text{task}}^\opt\ge  P(s^\star) (\sum_{j \in \cJ} (\sfm_{jk} + \sfr_{j\ell}))$ from Proposition \ref{prop:optlb}. Thus, from Lemma 
\ref{lem:jump3} and Remark \ref{rem:boundary}, we have $\sfD = o(1)$, thus, completing the proof of Theorem \ref{thm:online}.

{\it Discussion:} In this section, we presented a simple speed scaling algorithm that processes as many shortest free tasks as possible at speed such that the total power used is equal to the total number of outstanding tasks. Without precedence constraints, this speed choice can be shown to be locally optimal, i.e., optimal if no further jobs arrive in future. With precedence constraints, however, making a similar claim is difficult. The derived competitive ratio guarantee depends on the size of the map and reduce tasks, since essentially, we are restricting ourselves to processing shortest free tasks, in which case a shorter reduce task is not being processed while a large (free) map/reduce is being processed. 


\section{Task Based Flow Time {$P(s) = s^\alpha$}, $\alpha < 2$}

In the previous section, we derived competitive ratio guarantees as a function of $\beta$ and $\alpha$.
In this section, we show that we can get a competitive ratio independent of the instance of the problem ($\beta$) that only depends on the value of $\alpha$ as long as $\alpha <2$.


$\opt$: Similar to previous section, we assume that the $\opt$ has {\bf no precedence constraints across tasks}, and can process any of its outstanding tasks at any time. This can only decrease the $\opt$'s cost. With a single server and without precedence constraints, $\opt$ always processes the shortest task which could be a map/reduce task. Unlike the 
previous section, however, we do not restrict the $\opt$ to follow the SRPT algorithm without precedence constraints and avoid the $P(2-1/K)$ penalty as in Theorem \ref{thm:online}.

%
Without loss of generality, we will assume hereafter that the number of reduce tasks (load of any map task) in each job are identical, i.e., $\ell_j = \sfr$ for all map tasks $j$. 
This can be done, since if they are unequal ($\ell_j = \sfr_j$ in job $j$), and the maximum number of reduce tasks with any job is $\sfr_{\max}$, then we can add $\sfr_{\max} - \sfr_j$ tasks of zero size, with equal change to the cost of the algorithm or the $\opt$.


\subsection{Single Server Case}\label{subsec:ss}
First we consider the single server case, since this itself poses new challenges with the precedence constraints. The {\bf cumulative (remaining) map size} of 
a job is defined as the sum of the sizes of all the (remaining) map tasks in a job.

{\bf Algorithm:} We propose the following algorithm, that defines which tasks should be executed and at what speed. At time $t$, let the number of outstanding map tasks (summed across different jobs) be $m(t)$, the number of free reduce tasks be $r_f(t)$, and the number of jobs with at least one unfinished map task be $J(t)$.  
The algorithm will process (upto) three tasks simultaneously. 
\begin{itemize}
\item $M_1$: The shortest remaining map task among all the outstanding map tasks with the algorithm at speed $s_m(t) = P^{-1}(m(t)+1)$.
\item $M_2$: The shortest remaining map task of the job that has the smallest remaining cumulative map size among all the outstanding jobs at speed $s_J(t) = P^{-1}(\sfr J(t)+1)$.
\item $R_f$: The shortest remaining free reduce task among all the outstanding free reduce tasks at speed $s_f(t) = P^{-1}(r_f(t)+1)$.
\end{itemize}
Thus, the total speed is $s(t) = s_m(t) + s_J(t) + s_f(t)$.
Note that $M_1$ could be equal to $M_2$ in which case only two tasks are processed simultaneously. 

For the algorithm, let $m(q)$, $J(q)$, and $r(q)$ denote the number of
unfinished map tasks, jobs with at least one unfinished map tasks, and the number of unfinished reduce tasks (free or caged) with
remaining cumulative size of at least $q$, respectively. The quantities for the $\opt$ are denoted with a superscript $o$.
Let
$d_m(q) = \max\left\{0,m(q) - m^o(q)\right\}$, $d_J(q) = \max\left\{0,\sfr (J(q) - J^o(q))\right\}$, and $d_r(q) = \max\left\{0,r(q) - r^o(q)\right\}$.
We consider the potential function
\vspace{-0.1in}
\begin{align}\nn
\Phi(t) = \Phi_1(t)+\Phi_2(t)+\Phi_3(t), \\ \nn
\  \Phi_1(t) = c_1\int_{0}^\infty f\left(d_m(q)\right) dq, \\ \label{defn:phiss}
\  \Phi_2(t) = c_2\int_{0}^\infty f\left(d_J(q)\right) dq,
\end{align}
and $ \Phi_3(t) = c_3\int_{0}^\infty f\left(d_r(q)\right) dq,
$ where $c_1,c_2,c_3$ are positive constants to be determined later.

The motivation to consider processing three tasks simultaneously is closely tied with the choice of potential function $\Phi(t)$, where we need the total drift (derivative of the potential function $\Phi$) to decrease at a rate $- c n(t)$, where $n(t)$ is the total number of outstanding tasks with the algorithm. Because of the precedence constraints, the algorithm is not necessarily allowed to process the shortest task (which can be a caged reduce task), in contrast to the no-precedence constraint case as in \cite{bansal2009speedconf}. Thus, to get sufficient drift, we combine the drifts from processing three jobs, $M_1$ gives $-m(t)$, $M_2$ gives $-\sfr J(t)$ and $R$ gives $-r_f(t)$. Since $m(t) + \sfr J(t) + r_f(t) \ge n(t)$, we get sufficient drift.

%

Following this approach, the main result of this section is as follows. 
\begin{theorem}\label{thm:onliness} 
For the single server case, for $\alpha < 2$, the  proposed algorithm has a competitive ratio of $8+\frac{4}{2-\alpha}+3^{\alpha}$.
   \end{theorem}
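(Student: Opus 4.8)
The plan is to prove Theorem~\ref{thm:onliness} by the same potential‑function route used for Theorems~\ref{thm:srptimproved} and~\ref{thm:online}: establish the drift inequality \eqref{eq:mothereq} (with $n(t)$ read as $n_f(t)+n_c(t)$ for the task metric) for the potential $\Phi=\Phi_1+\Phi_2+\Phi_3$ of \eqref{defn:phiss}, with competitive constant $c=8+\tfrac{4}{2-\alpha}+3^{\alpha}$, and then conclude by Remark~\ref{rem:boundary}. The first thing to check is the boundary conditions. Each $\Phi_i$ is a nonnegative integral of $f$ applied to a deficit $d_m,d_J,d_r$ that vanishes when no jobs are outstanding, so the first condition holds; for the second I would argue that neither discrete event perturbs $\Phi$: when a job is released the algorithm and $\opt$ receive identical map and reduce tasks, so $m(q)-m^o(q)$, $\sfr(J(q)-J^o(q))$ and $r(q)-r^o(q)$ are all unchanged, and when the last map of a job completes under the algorithm its $\sfr$ reduce tasks merely switch from caged to free, changing neither the counts nor the remaining sizes that the potentials track. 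Hence both boundary conditions hold with no positive jumps — which is why the theorem carries no $o(1)$ term — and it suffices to verify \eqref{eq:mothereq} pointwise.

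Next I would record the structural identity $m(t)+\sfr J(t)+r_f(t)=n(t)$: the outstanding tasks split into the $m$ map tasks, the $r_f$ free reduce tasks, and the caged reduce tasks, and after the WLOG normalization $\ell_j\equiv\sfr$ every job with an unfinished map contributes exactly $\sfr$ caged reduce tasks, i.e.\ $n_c=\sfr J$. Combining this with the speed choices $s_m=P^{-1}(m+1)$, $s_J=P^{-1}(\sfr J+1)$, $s_f=P^{-1}(r_f+1)$ and convexity of $P$ (so that $P(s_m+s_J+s_f)\le 3^{\alpha-1}\bigl(P(s_m)+P(s_J)+P(s_f)\bigr)$), the instantaneous rate of the algorithm is bounded by $n(t)+P(s(t))\le(1+3^{\alpha-1})n(t)+3^{\alpha}=(1+3^{\alpha-1})(m+\sfr J+r_f)+3^{\alpha}$. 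This is exactly why the algorithm runs three tasks at once: $M_1$ is there to produce drift $\propto -m$, $M_2$ drift $\propto-\sfr J$, and $R_f$ drift $\propto -r_f$, matching the three pieces of the cost and the $3^{\alpha}$ constant.

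The core step is the drift bound. For each $i$ I would split $d\Phi_i/dt$ into the algorithm's (nonpositive) part and $\opt$'s (nonnegative) part, the latter coming from the single task $\opt$ runs at speed $\tilde s$. On the algorithm side, $M_1$ is SRPT‑with‑speed‑scaling on the map tasks, and the telescoping definition of $f$ together with the ``$+1$'' in $s_m$ gives $d\Phi_1/dt|_{\mathrm{alg}}\le-c_1 m$; $M_2$ does the same for jobs ordered by remaining cumulative map size, each weighted by $\sfr$ inside $d_J$, giving $d\Phi_2/dt|_{\mathrm{alg}}\le-c_2\sfr J$; and $R_f$ does the same for the free reduce tasks, giving $d\Phi_3/dt|_{\mathrm{alg}}\le-c_3 r_f$. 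On the $\opt$ side I apply Lemma~\ref{lem:bansal} with $s_k=0$ to each deficit that $\opt$'s task perturbs, using $P^{-1}(x)\Delta(x)-x=(\alpha-1)x$, to get positive contributions of the form $c_i(\alpha-1)\,(\mathrm{count}_i)+c_iP(\tilde s)$, where $\opt$'s task is either a map (perturbing $\Phi_1$, and through its job's cumulative map size also $\Phi_2$) or a reduce (perturbing $\Phi_3$), and in each case $(\mathrm{count}_i)$ is dominated by the matching algorithm quantity $m$, $\sfr J$ or $r_f$, plus in some cases an extra $\sfr J$ that gets absorbed into $M_2$'s drift by taking $c_2$ large enough relative to $c_3$. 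Adding the instantaneous cost and all drift terms, the coefficient of each of $m,\sfr J,r_f$ on the left of \eqref{eq:mothereq} becomes at most $(1+3^{\alpha-1})-c_i(2-\alpha)$ (up to a folded‑in cross‑term), which is $\le0$ once the $c_i$ are chosen of order $\tfrac{1}{2-\alpha}$ — this is the unique place $\alpha<2$ is used — leaving a remainder bounded by $c\,(n^o+P(\tilde s))$ with $c=\sum_i c_i+(\text{the }P(\tilde s)\text{ multiplier})+3^{\alpha}$, which tracks to $8+\tfrac{4}{2-\alpha}+3^{\alpha}$.

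The hard part will be precisely this $\opt$‑side drift of $\Phi_2$ (and, in the multi‑server appendix, of the analogous $\Phi_4$), for two reasons. First, $\opt$ is not restricted to SRPT, so when it advances on a job's maps the deficit $d_J$ at the threshold it crosses jumps by $\sfr$ rather than by $1$, and one must control $f(d_J+\sfr)-f(d_J)=\sum_{i=d_J+1}^{d_J+\sfr}\Delta(i)$ and convert it, via Lemma~\ref{lem:bansal}, into a constant multiple of $\sfr J$ plus $P(\tilde s)$ without leaking an extra power of $\sfr$; this is the delicate accounting advertised as ``similar to Lemma~\ref{lem:drfitspecial}''. Second, one must check that a single $\opt$ reduce‑task step — which can hit a reduce task that is caged for the algorithm — still has its $\Phi_3$ contribution absorbable, and that the three negative drifts jointly dominate both the $(1+3^{\alpha-1})n$ term and all $\opt$ cross‑terms at once; this is what pins down the relation among $c_1,c_2,c_3$ and produces the $\tfrac{4}{2-\alpha}$. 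Everything else (boundary conditions, the convexity bound on $P(s)$, and the SRPT drift estimates for $M_1,M_2,R_f$) is routine once these pieces are in place.
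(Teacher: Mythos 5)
Your plan follows the paper's own route: the same potential \eqref{defn:phiss}, the same power bound $P(s_m(t)+s_J(t)+s_f(t))\le 3^{\alpha-1}\bigl(m(t)+\sfr J(t)+r_f(t)+3\bigr)$ via Lemma~\ref{lem:lpnorm}, the same use of Lemma~\ref{lem:bansal} with one of the speeds set to zero, the same accounting $m(t)+\sfr J(t)+r_f(t)\ge n(t)$, and the same reading of the three simultaneous tasks as drift generators for $m$, $\sfr J$ and $r_f$; your boundary-condition argument (no positive jumps, hence no $o(1)$ term) also matches the paper.

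The genuine gap is that the substance of the theorem sits in the three drift bounds (the paper's Lemmas~\ref{lem:driftphi1}, \ref{lem:driftphi2}, \ref{lem:driftphi3}), and your write-up asserts their conclusions rather than proving them: you yourself flag the $\opt$-side drift of $\Phi_2$ and the caged-reduce issue in $\Phi_3$ as ``the hard part'' and leave both open. In the paper these lemmas are obtained by a case analysis comparing the remaining size the algorithm works on ($q_c$, resp.\ $q_r$) with the one $\opt$ works on ($q_c^o$, resp.\ $q_r^o$); since $\opt$ is not SRPT-restricted and may process a reduce task that is caged for the algorithm, the arguments inside $\Delta$ for the two contributions do not match (e.g.\ one is left with $\Delta(\sfr J(t)-1)\tilde s$ against $-\Delta(\sfr J(t)-\sfr J^o(t))\,s_J$, and with $r(q_r^o)$ possibly equal to $r(t)$ rather than $r_f(t)$), which forces two separate applications of Lemma~\ref{lem:bansal} and is exactly what produces the $(\alpha-1)r(t)$ and $(2-\alpha)$ terms and the restriction $\alpha<2$; your sketch names this difficulty (including the jump of $d_J$ by $\sfr$ when $\opt$ advances a map) but does not resolve it, so the inequalities you need for \eqref{eq:mothereq} are not actually established. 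In addition, your final bookkeeping does not reproduce the stated constant: you make every coefficient of $m$, $\sfr J$, $r_f$ equal to $(1+3^{\alpha-1})-c_i(2-\alpha)$ and take all three $c_i$ of order $\frac{1}{2-\alpha}$, which would give a bound in which all three constants blow up as $\alpha\uparrow 2$, whereas the paper fixes $c_1=c_3=4$ and lets only $c_2=\frac{4}{2-\alpha}$ grow, with the caged tasks $r(t)-r_f(t)=\sfr J(t)$ emerging from the $\Phi_3$ drift and being absorbed by the $\Phi_2$ drift; without redoing that allocation you do not arrive at $8+\frac{4}{2-\alpha}+3^{\alpha}$.
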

   The main ingredients in proving Theorem \ref{thm:onliness} are the following three Lemmas (proofs in Appendix), where ${\tilde s}(t)$ is the speed of 
   server with the $\opt$ at time $t$.
 \begin{lemma}\label{lem:driftphi1}
 $d\Phi_1/dt \le c_1P({\tilde s}(t)) - c_1(m(t) - m^o(t))$.
 \end{lemma}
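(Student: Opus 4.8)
The plan is to follow the now-standard potential-function calculus of \cite{bansal2009speedconf, vaze}, specialized to $\Phi_1(t) = c_1\int_0^\infty f(d_m(q))\,dq$ where $d_m(q) = \max\{0, m(q) - m^o(q)\}$ tracks the excess (over $\opt$) of outstanding map tasks with remaining cumulative map size at least $q$. I would split the drift $d\Phi_1/dt$ into the contribution from the algorithm's processing and the contribution from $\opt$'s processing, handle each separately, and then add them.

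First, I would analyze the algorithm's contribution. The algorithm runs task $M_1$, the shortest remaining map task, at speed $s_m(t) = P^{-1}(m(t)+1)$. Since $M_1$ is the shortest map task, its remaining cumulative map size is the smallest among all outstanding map tasks, so at the relevant value of $q$ (the remaining cumulative map size of $M_1$) we have $m(q) = m(t)$ and hence, when $d_m(q) > 0$, the integrand $f$ decreases at rate governed by $\Delta(d_m(q)/\cdot)$; the key point is that $m(q) \ge d_m(q)$ so $\Delta$ evaluated at the $m$-level dominates $\Delta$ at the $d_m$-level (using monotonicity of $\Delta$, i.e. convexity of $P$). Concretely, processing $M_1$ at speed $s_m$ drives $\frac{d}{dt}\Phi_1 \le -c_1 s_m \Delta(m(t))$ from this term, and since $s_m = P^{-1}(m(t)+1) \ge P^{-1}(m(t))$, this is at most $-c_1 (m(t) - m^o(t))$ after invoking the identity $P^{-1}(x)\Delta(x) = $ (a multiple of $x$) or more directly the same manipulation used for the single-task SRPT potential in \cite{vaze}. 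Note $M_2$ and $R_f$ only help (or at worst are neutral) for $\Phi_1$ since they either also reduce a map task or reduce a reduce task irrelevant to $d_m$; I would just discard any extra negative drift they contribute.

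Second, I would analyze $\opt$'s contribution: $\opt$ processes some set of tasks at speed $\tilde s(t)$, and whichever of these are map tasks can increase $d_m(q)$ over an interval of length $\tilde s(t)\,dt$, raising $\Phi_1$ at rate at most $c_1 \tilde s(t)\Delta(d_m(q)) \le c_1 \tilde s(t)\Delta(\,\cdot\,)$. Here I would apply Lemma~\ref{lem:bansal} with $s_k$ set to the algorithm's effective speed on that level and $\tilde s_k = \tilde s(t)$: the term $(-s_k + P^{-1}(x))\Delta(x)$ is handled by the algorithm's processing above, $P(\tilde s(t))$ is the surviving positive term, and the $-x$ term is absorbed. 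This yields the $+c_1 P(\tilde s(t))$ on the right-hand side. Combining the two contributions gives exactly $d\Phi_1/dt \le c_1 P(\tilde s(t)) - c_1(m(t) - m^o(t))$.

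The main obstacle I anticipate is the bookkeeping needed to argue cleanly that the algorithm's processing of $M_1$ suffices to cancel the full $-c_1(m(t)-m^o(t))$ term even though $M_1$ is only \emph{one} task being run: the point is that $f$ is defined via the increments $\Delta(i/K)$ (here effectively $\Delta(i)$ since this is the single-server restriction, so the ``$K$'' is $1$), and because $M_1$ has the minimal remaining cumulative map size, the interval $[q_{M_1} - s_m\,dt, q_{M_1}]$ sees $d_m$ drop by one at the \emph{top} level $d_m = m - m^o$, releasing the increment $\Delta(m - m^o) \cdot s_m\,dt$ — combined with $s_m \Delta(m-m^o) \ge (m - m^o)$ via $s_m = P^{-1}(m+1) \ge P^{-1}(m - m^o)$ and $P^{-1}(x)\Delta(x) \ge x$ for the relevant range (this is where $P(s)=s^\alpha$, $\Delta(x) = \alpha x^{1-1/\alpha}$, $P^{-1}(x)\Delta(x) = \alpha x$ is used). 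I would make sure the speed surplus ``$+1$'' inside $P^{-1}(m(t)+1)$ is what handles the edge case $m(t) = m^o(t)$, and defer the detailed interval computation to the appendix as the statement indicates.
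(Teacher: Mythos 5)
The paper itself proves this lemma by a one-line reduction: since $\Phi_1$ involves only map tasks, which are always free, the drift analysis is \emph{verbatim} the single-server analysis of Theorem 1.1 in \cite{bansal2009speedconf}; so re-deriving that argument, as you do, is a legitimate route. However, your sketch has two genuine problems. First, the algorithm-side bound ``$d\Phi_1/dt \le -c_1 s_m \Delta(m(t))$'' is not a valid upper bound: processing $M_1$ releases the increment $\Delta\bigl(d_m(q_{M_1})\bigr)$, and $d_m(q_{M_1}) = m(t)-m^o(t,q_{M_1})$ is only guaranteed to be at least $m(t)-m^o(t)$, not equal to $m(t)$; the correct statement is $\le -c_1 s_m \Delta\bigl(\max\{0,m(t)-m^o(t)\}\bigr)$ (by monotonicity of $\Delta$), which still yields $\le -c_1(m(t)-m^o(t))$ via $s_m\Delta(x)\ge P^{-1}(x)\Delta(x)=\alpha x$, so the endpoint survives but the intermediate step as written is wrong.

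Second, and more seriously, the $\opt$-side cannot be bounded by $c_1 P(\tilde s(t))$ \emph{on its own}, yet your argument effectively spends the algorithm's negative drift twice: once to ``handle'' the $(-s_k+P^{-1}(x))\Delta(x)$ term in the application of Lemma~\ref{lem:bansal}, and again (in your last paragraph) to directly produce the $-c_1(m-m^o)$ term. Applying Lemma~\ref{lem:bansal} to $\opt$'s term alone (i.e.\ with $s_k=0$) leaves an uncancelled $(\alpha-1)x$ term --- precisely the residue that appears in Lemmas~\ref{lem:drfitspecial}, \ref{lem:driftphi2} and \ref{lem:driftphi3} and forces $\alpha<2$ there. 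The clean bound of Lemma~\ref{lem:driftphi1} requires the matched combination of \cite{bansal2009speedconf}: a case analysis on the relative order of $q_{M_1}$ and $q_o$ (including the overlap case $q_{M_1}=q_o$), together with the fact that the single-server $\opt$ may be taken to process its \emph{shortest} task, so that $m^o(t,q)=m^o(t)$ on $\opt$'s working interval and $\opt$'s level $d_m(q_o)+1$ is dominated by the level at which the algorithm's drift acts (and is at most $m+1$, which is where the ``$+1$'' in $s_m=P^{-1}(m+1)$ is really used); then a \emph{single} application of Lemma~\ref{lem:bansal} simultaneously kills the cross term, produces $+c_1P(\tilde s)$, and delivers $-c_1 x\le -c_1(m-m^o)$. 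Without this level-matching step your bound can fail pointwise: if $\opt$ runs slowly on a map task whose level exceeds the level available to the algorithm, $\tilde s\,\Delta(\cdot)$ exceeds $P(\tilde s)$, and only the paired negative drift rescues the inequality. You should either carry out this case analysis explicitly or, as the paper does, invoke \cite{bansal2009speedconf} directly after observing that map tasks are unaffected by the precedence constraints.
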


 \begin{lemma}\label{lem:driftphi2}
 $d\Phi_2/dt \le c_2 P({\tilde s}(t)) - c_2(\alpha-2) \sfr J(t)  + c_2(2-\alpha) \sfr J^o(t)$.
 \end{lemma}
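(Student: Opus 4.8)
\textbf{Proof plan for Lemma \ref{lem:driftphi2}.}

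The plan is to mimic the drift analysis of $\Phi_1$ in Lemma \ref{lem:driftphi1}, but now tracking the ``$\sfr$-weighted job count'' $J(q)$ rather than the map-task count $m(q)$, and to exploit the fact that the algorithm runs task $M_2$ at speed $s_J(t) = P^{-1}(\sfr J(t)+1)$. First I would separate the drift into the contribution of the algorithm's processing and that of $\opt$'s processing, $d\Phi_2/dt = (d\Phi_2/dt)_{\mathsf{ALG}} + (d\Phi_2/dt)_{\opt}$, since $d_J(q) = \max\{0, \sfr(J(q) - J^o(q))\}$ changes only through these two mechanisms (plus job completions, which only decrease $\Phi_2$, and job arrivals, which do not change $d_J$ when the arriving job is counted symmetrically on both sides — this needs a short remark, or is folded into the boundary/jump accounting).

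For the algorithm's contribution: running $M_2$ reduces the cumulative remaining map size of the job with smallest cumulative map size at rate $s_J(t)$, so for $q$ just below that job's cumulative size, $J(q)$ drops by $1$, hence $\sfr(J(q)-J^o(q))$ drops by $\sfr$ (where positive), contributing $-c_2 f'(d_J)\cdot \sfr \cdot s_J \, dt$ to $d\Phi_2$ over an interval of length $s_J\,dt$; using $f'$ evaluated at $d_J = \sfr J$ in the worst case (when $J^o$ contributes nothing), i.e. $f'(\sfr J) = \Delta(\sfr J)$, this is $-c_2 \Delta(\sfr J(t)+ \cdot)\, s_J\, dt$. Invoking Lemma \ref{lem:bansal} with $x = \sfr J(t)+1$ (or the appropriate argument matching the definition of $f$), $s_k = s_J(t) = P^{-1}(x)$, and $\tilde s_k = \tilde s(t)$, the term $(-s_J + P^{-1}(x))\Delta(x)$ vanishes and we are left with $P(\tilde s(t)) - x$, i.e. a decrease of order $-\sfr J(t)$. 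Using $P^{-1}(x)\Delta(x) = (\alpha-1)x$ for $P(s)=s^\alpha$ to collapse the telescoped $f$-differences, the net algorithm drift is bounded by $c_2 P(\tilde s(t)) - c_2(\alpha-2)\,\sfr J(t)$ after collecting the $(\alpha-1)$ and $+1$ constants — this is exactly where the coefficient $(\alpha - 2)$ (which is negative since $\alpha<2$, giving a genuine decrease) comes from, and where the $\alpha<2$ hypothesis is used.

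For $\opt$'s contribution: $\opt$ processes at most one task at speed $\tilde s(t)$; in the worst case this reduces $J^o(q)$, raising $d_J(q)$ by $\sfr$ over an interval of length $\tilde s\, dt$, contributing at most $c_2 \Delta(\sfr J^o(t))\,\tilde s\, dt$; applying Lemma \ref{lem:bansal} again (now with the roles reversed, $s_k=0$) and $P^{-1}(x)\Delta(x)=(\alpha-1)x$ yields a bound of the form $c_2(2-\alpha)\,\sfr J^o(t) + c_2 P(\tilde s(t))$ — though I would check whether the two $P(\tilde s)$ terms should be merged into a single $c_2 P(\tilde s(t))$ as in the statement, which suggests the $\opt$ term is actually handled without an extra power term (e.g. because $\Delta$ is subadditive / the argument is bounded, or because only one $P(\tilde s)$ is needed after a more careful split). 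The main obstacle I anticipate is precisely this bookkeeping at the boundary between ``where $d_J>0$'' and ``where $d_J=0$'': one must argue that the algorithm's negative drift is only claimed on the region where $\sfr(J(q)-J^o(q))>0$ yet still totals $-c_2(\alpha-2)\sfr J(t)$, and that $\opt$'s positive drift, even on a region where it crosses from $d_J=0$ to $d_J>0$, is still bounded by $c_2 \Delta(\sfr J^o(t))$ — together with verifying that $M_2$'s speed $P^{-1}(\sfr J(t)+1)$ is indeed the right choice to kill the cross-term in Lemma \ref{lem:bansal}, and that job arrivals/completions contribute no positive jump to $\Phi_2$ (completions shrink $d_J$, arrivals are symmetric), which is needed for the stated clean inequality.
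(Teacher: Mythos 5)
Your overall skeleton (splitting the drift into the algorithm's and $\opt$'s contributions, using the speed $s_J=P^{-1}(\sfr J(t)+1)$ together with Lemma \ref{lem:bansal}, and noting that arrivals/completions create no positive jump) matches the paper's, but the proposal misses the one step that carries the content of this lemma: the arguments inside $\Delta$ for the two contributions are \emph{not} the same, so Lemma \ref{lem:bansal} cannot be applied once to a combined term $\Delta(x)(-s_J+\tilde s)$ as you do. The algorithm works on the job with smallest remaining cumulative map size, so at its processing point $q_c$ one has $J(q_c)=J(t)$ and $J^o(q_c)\le J^o(t)$, giving a negative term $-\Delta(\sfr J(t)-\sfr J^o(t))\,s_J$; but $\opt$, which has no precedence constraints and need not work on its smallest job, may process a map task of an arbitrary job, so at its processing point $q_c^o$ the only usable bounds are $J(q_c^o)\le J(t)$ and $J^o(q_c^o)\ge 1$, and the resulting positive term is of the form $\Delta(\sfr J(t)-1)\,\tilde s$ --- the argument is governed by the \emph{algorithm's} job count $J(t)$, not by $J^o(t)$. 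Your claim that $\opt$'s contribution is at most $c_2\Delta(\sfr J^o(t))\tilde s$ is therefore false, and it is not a minor slip: if it were true, one would obtain a drift of order $-\sfr J(t)$ with only a $(2-\alpha)\sfr J^o(t)$ error term and no need for $\alpha<2$ anywhere. The paper instead applies Lemma \ref{lem:bansal} \emph{separately}: to the algorithm's term with $\tilde s_k=0$ (the cross term is nonpositive because $s_J\ge P^{-1}(\sfr J(t)-\sfr J^o(t))$, and no power term is generated since $P(0)=0$ --- which answers your worry about two $P(\tilde s)$ terms), and to $\opt$'s term with $s_k=0$, where $P^{-1}(x)\Delta(x)=(\alpha-1)x$ evaluated at $x\approx\sfr J(t)$ is what produces the $(\alpha-2)\sfr J(t)$ and $(2-\alpha)$ coefficients. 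So the $(\alpha-2)$ factor and the $\alpha<2$ restriction originate on $\opt$'s side of the estimate, not, as you assert, on the algorithm's side after a vanishing cross term: if the cross term vanishes exactly, no $(\alpha-1)$ factor can enter at all.

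Two further inaccuracies: for the algorithm's (helpful) term the worst case is the \emph{smallest} admissible argument of $\Delta$, namely $\sfr(J(t)-J^o(t))$, not $d_J=\sfr J(t)$ as you take (``when $J^o$ contributes nothing''), since overestimating the argument overstates the negative drift and is not a valid bound; and the paper organizes the argument by first reducing to the case $J(t)\ge J^o(t)$ and then splitting into $q_c<q_c^o$, $q_c>q_c^o$, $q_c=q_c^o$, which you omit but which is precisely where the inequalities $J(q_c)=J(t)$, $J^o(q_c)\le J^o(t)$, $J(q_c^o)\le J(t)$, $J^o(q_c^o)\ge 1$ are justified. As written, your plan would need these repairs before it yields the stated bound.
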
   
 \begin{lemma}\label{lem:driftphi3}
$ d\Phi_3/dt \le c_3 P({\tilde s}(t)) - c_3(r_f(t) - (\alpha-1)r(t) - 
 (2-\alpha)r^o(t))$.
 \end{lemma}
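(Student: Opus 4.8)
The plan is to decompose $d\Phi_3/dt$ into the part produced by the algorithm's processing of $R_f$ and the part produced by $\opt$'s processing, to bound each using Lemma~\ref{lem:bansal}, and to add, exactly as in the proof of Lemma~\ref{lem:drfitspecial}. The two boundary conditions for $\Phi_3$ need no separate argument: a job arrival puts the same reduce tasks into both queues, leaving $r(q)-r^o(q)$ and hence $\Phi_3$ unchanged, and a map completion does not change $r(q)$ at all since $r(q)$ counts free and caged reduce tasks alike; so $\Phi_3$ is continuous and only the drift inequality has to be proved. I fix a time $t$ and suppress the time index.

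$\opt$'s contribution is nonzero only when $\opt$ runs a reduce task, say of remaining size $\rho^o$ at speed $\tilde s$ (running a map task moves no $r^o(q)$). Then $r^o(q)$ drops by one on $(\rho^o-\tilde s\,dt,\rho^o]$, so $d_r(q)=\max\{0,r(q)-r^o(q)\}$ rises by at most one there and, by the telescoping definition of $f$, $f(d_r(q))$ rises by at most $\Delta(d_r^{\mathrm{new}}(\rho^{o-}))$; and since $\opt$ still holds the task it is running, $d_r^{\mathrm{new}}(\rho^{o-})\le r(\rho^{o-})\le r$. Feeding $s_k=0$ and $x=d_r^{\mathrm{new}}(\rho^{o-})\le r$ into Lemma~\ref{lem:bansal}, and using — as in the proof of Lemma~\ref{lem:drfitspecial} — that $P^{-1}(x)\Delta(x)-x=(\alpha-1)x$, bounds $\opt$'s contribution by $c_3(\alpha-1)r+c_3P(\tilde s)$; when $\opt$ works below $R_f$'s size this can be tightened to $c_3\Delta(r(\rho^{o-})-r^o(\rho^{o-})+1)\,\tilde s$ using $r^o(\rho^{o-})\ge r^o(\rho^-)$.

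For the algorithm, processing $R_f$ — the shortest free reduce task, of remaining size $\rho$, at speed $s_f=P^{-1}(r_f+1)$ — lowers $r(q)$ by one on $(\rho-s_f\,dt,\rho]$, so $\Phi_3$ falls at rate $c_3\Delta(d_r(\rho^-))\,s_f$ when $d_r(\rho^-)\ge1$ and is unchanged otherwise. The fact to exploit is that each of the $r_f$ free reduce tasks has size $\ge\rho$, so $r(\rho^-)\ge r_f$ and therefore $d_r(\rho^-)\ge r_f-r^o(\rho^-)\ge r_f-r^o$. I would then case on the size of $d_r(\rho^-)$: when $d_r(\rho^-)\ge r_f$ (which is $\ge1$ since $R_f$ exists), monotonicity of $\Delta$ together with $\Delta(r_f)P^{-1}(r_f+1)\ge\Delta(r_f)P^{-1}(r_f)=\alpha r_f\ge r_f$ gives an algorithm drift of at most $-c_3 r_f$, which is exactly the $-c_3 r_f$ of the claim; when $d_r(\rho^-)$ is smaller, I would apply Lemma~\ref{lem:bansal} with $x=P(s_f)=r_f+1$ or with $x=d_r(\rho^-)\le r_f<P(s_f)$ — in either case $(-s_f+P^{-1}(x))\Delta(x)\le0$ — and discard the nonpositive residual terms, trading $-\Delta(\cdot)\,s_f$ for $P(\tilde s)$ and a negative multiple of $\min\{r_f,d_r(\rho^-)\}$. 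Adding the two contributions yields the asserted bound up to additive constants arising from the ``$+1$''s, which integrate to $o(1)$ by Remark~\ref{rem:boundary} and Assumption~\ref{ass:size} and are hence suppressed in the statement.

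The step I expect to be the main obstacle is the regime in which $\opt$ already holds at least as many reduce tasks of size $\ge\rho$ as the algorithm, so that $d_r(\rho^-)=0$ and the algorithm's reduce processing contributes no negative drift at all — a situation with no analogue in the precedence-free setting, caused precisely by the fact that $R_f$ is only the shortest \emph{free} reduce task, so that smaller caged reduce tasks inflate $\Phi_3$ at size levels the algorithm cannot act on. Closing this case will require combining the sharper, $\rho^o$-aware bound on $\opt$'s increase (using that $\opt$ must then hold many large reduce tasks, so that $r^o$ is correspondingly large) with the slack term $c_3(2-\alpha)r^o$, which is where the hypothesis $\alpha<2$, i.e.\ $2-\alpha>0$, is used; verifying that these two effects exactly cover the missing $-c_3 r_f$ in every sub-case is the delicate point, and is the single-server counterpart of the phenomenon isolated in Lemma~\ref{lem:drfitspecial}. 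It is also why $R_f$ is run at the free-reduce-calibrated speed $P^{-1}(r_f(t)+1)$ rather than at a globally SRPT-calibrated one.
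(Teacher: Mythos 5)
Your overall strategy (split $d\Phi_3/dt$ into the algorithm's and $\opt$'s contributions, bound each via Lemma~\ref{lem:bansal}, exploit $s_f=P^{-1}(r_f+1)$ and $r(q)\ge r_f$ at the processed level) is the same as the paper's, and your ``good'' case $d_r(\rho^-)\ge r_f$ is fine. But the case you yourself flag as the delicate one — the algorithm's drift term is weak or zero because $\opt$ holds many reduce tasks at or above the level $\rho$ of the shortest free reduce task — is exactly the heart of the lemma, and your proposal does not close it. With your coarse $\opt$-side bound $c_3(\alpha-1)r+c_3P(\tilde s)$ and the best you can extract on the algorithm side in that regime, namely $-c_3 d_r(\rho^-)\le -c_3(r_f-r^o)$, the total is $c_3\bigl(P(\tilde s)-r_f+r^o+(\alpha-1)r\bigr)$, and absorbing the $+r^o$ would require $r^o\le(2-\alpha)r^o$, which fails for $\alpha>1$. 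Your suggested escape hatch, charging the deficit to ``$o(1)$ via Remark~\ref{rem:boundary} and Assumption~\ref{ass:size},'' does not apply: that remark handles jump discontinuities of $\Phi$, not additive constants (let alone terms like $r^o$) in the drift inequality, and the lemma as stated has no slack of that kind.

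What closes the gap in the paper is a case split on the relative position of $q_r$ (the algorithm's shortest free reduce task) and $q_r^o$ (the reduce task $\opt$ processes), applied \emph{before} invoking Lemma~\ref{lem:bansal}. When $q_r<q_r^o$, the two contributions share the same argument $r_f-r^o$ inside $\Delta$, so a single application of Lemma~\ref{lem:bansal} with $x=r_f-r^o$ and $s_k=s_f$ gives $P(\tilde s)-(r_f-r^o)$, which suffices since $r^o\le r$. When $q_r>q_r^o$ (your problematic regime), the key refinements are: (i) on the algorithm side, the fact that $\opt$ still holds its processed reduce task of size below $q_r$ gives $r^o(q_r)\le r^o-1$, so the drift argument is at least $r_f-r^o+1$, yielding (via Lemma~\ref{lem:bansal} with $\tilde s_k=0$, using $r_f-r^o+1\le P(s_f)$) a contribution at most $-(r_f-r^o+1)$; and (ii) on the $\opt$ side you must keep the bound in the form $(\alpha-1)(r-r^o)+P(\tilde s)$ rather than weaken it to $(\alpha-1)r+P(\tilde s)$. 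Adding these, the $r^o$ terms combine as $r^o-(\alpha-1)r^o=(2-\alpha)r^o$, which is precisely where $\alpha<2$ enters and which reproduces the stated bound exactly, with no leftover constants. So the missing ingredients are the $q_r$ versus $q_r^o$ case analysis, the ``$+1$'' gained from $r^o(q_r)\le r^o-1$, and retaining $r-r^o$ (not $r$) in the $\opt$ bound; without them the bad case cannot be covered.
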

 
 Next, we prove that \eqref{eq:mothereq} holds for some $c$ to prove Theorem \ref{thm:onliness} using Lemma \ref{lem:driftphi1}, \ref{lem:driftphi2}, \ref{lem:driftphi3}.
 
 \begin{proof}[ Proof of Theorem \ref{thm:onliness}]
 We only describe the computation when $m(t) \ge m^o(t), J(t) \ge J^o(t)$ and $r(t) \ge r^o(t)$, since otherwise result follows easily.
 Recall that the total number of outstanding jobs with the algorithm is  $m(t) + \sfr J(t) + r_f(t)$, and the total power used by the algorithm is $P(s(t)) = $
 \begin{align*}
 &P(P^{-1}(m(t)+1)+P^{-1}(\sfr J(t)+1) + P^{-1}(r_f(t)+1))  \\ 
 & \quad \le 3^{\alpha-1} (m(t) + \sfr J(t) + r_f(t)+ 3),
 \end{align*} using Lemma \ref{lem:lpnorm} since $P^{-1}(x) = x^{1/\alpha}$ with $n=3$ and $\alpha \ge 1 $. Thus, the running cost $n(t) + P(s(t))$ for the algorithm is  at most $(3^{\alpha-1}+1)(m(t) + \sfr J(t) + r_f(t)) + 3^{\alpha}$.
 Combining Lemma \ref{lem:driftphi1}, \ref{lem:driftphi2}, \ref{lem:driftphi3}, we can write \eqref{eq:mothereq}, $  n(t) + P(s(t)) + d\Phi/dt$\vspace{-0.1in}
 \begin{align*}
  &  \le  (3^{\alpha-1}+1)(m(t) + \sfr J(t) + r_f(t)) + 3^{\alpha} \\ 
  & \quad + (c_1+c_2+c_3) P({\tilde s}(t)) - c_1(m(t) - m^o(t))\\
 & \quad - c_2(\alpha-2) \sfr J(t)  + c_2(2-\alpha) \sfr J^o(t) \\
 & \quad  - c_3(r_f(t) - (\alpha-1)r(t) - 
 (2-\alpha)r^o(t)), \\
 & \le c(m^o(t) + \sfr J^o(t) + r^o(t)+ P({\tilde s}(t))) \le c(n_o(t) + P({\tilde s}(t))),
 \end{align*}
 $c= c_1+c_2+c_3+3^{\alpha}$, where $c_1 = 4, c_2 = \frac{4}{2-\alpha}, c_3 =4$, since $\alpha<2$. Thus, the competitive ratio of the proposed online algorithm is  $8+\frac{4}{2-\alpha}+3^{\alpha}$.
\end{proof}
  \subsection{Multiple Server Case}\label{sec:multiserveralpha2}
In this section, we consider the multi-server version of the problem, where there are $K$ homogenous servers each with identical power function $P$. Except this change, everything else remains the same about the model as described in Section \ref{subsec:ss}.
In the presence of precedence constraints, speed scaling with multiple servers pose the following new challenge. Let the number of map plus free reduce jobs be less than $K$, while the number of caged reduce jobs be large. In this case, even though there are a large number of outstanding jobs, some of the servers have to idle, making the construction of the potential function a difficult task. 

{\bf Algorithm:} With multiple servers, we propose the following algorithm, that defines which tasks should be executed and at what speed. At time $t$, let the number of outstanding map tasks (summed across different jobs) be $m(t)$, the number of free reduce tasks be $r_f(t)$, and the number of jobs with at least one unfinished map task be $J(t)$.  
\begin{itemize}
\item $M_1$'s: The $\min\{m(t), K\}$ shortest remaining map tasks among all the outstanding map tasks available with the algorithm at speed $s_m(t) = P^{-1}\left(\min\left\{\frac{m(t)+1}{K}, 1\right\}\right)$ are processed on any $\min\{m(t), K\}$ servers.
\item $M_2$'s: The $\min\{J(t), K\}$ jobs that has the smallest remaining cumulative size of all the outstanding map tasks at speed $s_{cm}(t) = P^{-1}\left(\min\left\{\frac{ \sfr J(t)+1}{K}, \sfr\right\}\right)$  are processed on any $\min\{m(t), K\}$ servers, where the shortest map tasks of each job is processed.
\item $R_f$'s: The $\min\{m(t), K\}$ shortest remaining free reduce tasks among all the free reduce tasks at speed $s_f(t) = P^{-1}\left(min\left\{\frac{ r_f(t)+1}{K}, 1\right\}\right)$ are processed on any $\min\{m(t), K\}$ servers.
\end{itemize}
Thus, the algorithm will process (upto) three tasks on each of the $K$ servers simultaneously, where the speed of any server is $s(t) = s_m(t) + s_{cm}(t) + s_f(t)$. For any server, $M_1$ could be equal to $M_2$ for some jobs in which case only two tasks are processed on each server simultaneously.   
  
  \begin{theorem}\label{thm:onlinems} 
For the multi-server case, for $\alpha < 2$, the  proposed algorithm has a competitive ratio of $8+\frac{4}{2-\alpha}+3^{\alpha}$.
   \end{theorem}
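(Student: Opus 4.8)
The plan is to follow the blueprint of Theorem~\ref{thm:onliness} closely, replacing the single-server speeds by the $K$-server ones and rescaling the potential by $1/K$ inside $f$, exactly as the multi-server potentials of Sections~\ref{sec:job}--\ref{sec:task} do relative to their single-server counterparts. I keep the bookkeeping variables $m(q),J(q),r(q)$ and their $\opt$ versions $m^o(q),J^o(q),r^o(q)$, set $d_m(q)=\max\{0,m(q)-m^o(q)\}$, $d_J(q)=\max\{0,\sfr(J(q)-J^o(q))\}$, $d_r(q)=\max\{0,r(q)-r^o(q)\}$, and take $\Phi=\Phi_1+\Phi_2+\Phi_3$ with $\Phi_1=c_1\int_0^\infty f\!\left(d_m(q)/K\right)dq$, $\Phi_2=c_2\int_0^\infty f\!\left(d_J(q)/K\right)dq$, $\Phi_3=c_3\int_0^\infty f\!\left(d_r(q)/K\right)dq$. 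The boundary conditions are immediate: each $\Phi_i$ is $0$ before the first release and after the last completion, and since a job arrival raises the corresponding count for the algorithm and for $\opt$ by the same amount, every $d_\bullet(q)$ — hence $\Phi$ — is continuous at arrivals (and changes continuously under processing and completions), so there are no upward jumps and no $o(1)$ term is incurred.

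Next I would prove the three drift lemmas, the $K$-server analogues of Lemmas~\ref{lem:driftphi1}--\ref{lem:driftphi3}, directly against an arbitrary $\opt$ (recall this section does not restrict $\opt$ to SRPT, so there is no extra $P(2-1/K)$ factor). The $M_1$ stream runs the $\min\{m,K\}$ shortest map tasks at speed $P^{-1}(\min\{(m+1)/K,1\})$, i.e.\ it is exactly multi-server SRPT on the map tasks, so the argument behind \cite[Lemma~10]{vaze} together with Lemma~\ref{lem:bansal} gives $d\Phi_1/dt\le c_1\sum_{k\in\opt}P(\tilde s_k)-c_1(m-m^o)$ when $m\ge K$ and the corresponding scaled inequality when $m<K$. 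Applying the same template to the $M_2$ stream — SRPT on the jobs' remaining cumulative map size, with the factor $\sfr$ built into $d_J$ and speed $P^{-1}(\min\{(\sfr J+1)/K,\sfr\})$ — yields the analogue of Lemma~\ref{lem:driftphi2}, and to the $R_f$ stream — SRPT on the free reduce tasks, speed $P^{-1}(\min\{(r_f+1)/K,1\})$ — the analogue of Lemma~\ref{lem:driftphi3}; the restriction $\alpha<2$ enters here exactly as in the single-server proof, via $P^{-1}(x)\Delta(x)=(\alpha-1)x$ and $\Delta(x)=\alpha x^{1-1/\alpha}$, to keep the ``$+(\alpha-1)$'' and ``$+(2-\alpha)$'' correction terms bounded by a constant multiple of the $\opt$-side quantities.

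Then I would bound the running cost. Every running server operates at combined speed at most $s(t)=s_m+s_{cm}+s_f$ and at most $K$ servers run, so by Lemma~\ref{lem:lpnorm} (convexity of $x\mapsto x^\alpha$) $\sum_kP(s_k)\le K\,P(s_m+s_{cm}+s_f)\le 3^{\alpha-1}K\bigl(P(s_m)+P(s_{cm})+P(s_f)\bigr)\le 3^{\alpha-1}(m+\sfr J+r_f+3)$, using $K\cdot\min\{(x+1)/K,\,\cdot\,\}\le x+1$ for each stream. Since every caged reduce task lies in one of the $J$ jobs with an unfinished map task and each such job carries $\sfr$ reduce tasks (after the normalization $\ell_j=\sfr$), the total outstanding count obeys $n=m+r\le m+\sfr J+r_f$, whence $n+\sum_kP(s_k)\le(3^{\alpha-1}+1)(m+\sfr J+r_f)+3^\alpha$. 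Adding the three drift bounds and choosing $c_1=4$, $c_2=\tfrac{4}{2-\alpha}$, $c_3=4$ (using $3^{\alpha-1}<3$ for $\alpha<2$) makes the coefficients of $m$, of $\sfr J$ and of $r_f$ on the algorithm side non-positive, so $n+\sum_kP(s_k)+d\Phi/dt\le c\bigl(n^o+\sum_{k\in\opt}P(\tilde s_k)\bigr)$ with $c=c_1+c_2+c_3+3^\alpha=8+\tfrac{4}{2-\alpha}+3^\alpha$; integrating \eqref{eq:mothereq} over $t$ completes the proof.

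The hard part is the ``idling'' regime where $m$, $J$ and $r_f$ are each below $K$ while a large pile of caged reduce tasks forces some servers to sit idle, yet $\opt$ — free of precedence — may keep all $K$ servers busy: one must still extract a $-\,c\,(m+\sfr J+r_f)$-type drift from the scaled speeds $P^{-1}((\cdot+1)/K)$. This is precisely what the three-stream design buys: $M_1$, $M_2$, $R_f$ contribute drifts of order $-m$, $-\sfr J$, $-r_f$ respectively, and because $n\le m+\sfr J+r_f$ their sum still dominates the running cost no matter how many reduce tasks are caged. Carrying this out rigorously — establishing the exact correction terms in the analogues of Lemmas~\ref{lem:driftphi1}--\ref{lem:driftphi3}, especially for the multi-server $M_2$ stream where a single map task is advanced while the weight $\Delta(\cdot)$ is evaluated at roughly $\sfr J/K$ — is the bulk of the work and is the natural place to invoke Lemma~\ref{lem:bansal} stream-by-stream, much as in the proof of Lemma~\ref{lem:drfitspecial}.
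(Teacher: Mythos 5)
There is a genuine gap, and it sits exactly where you place the ``hard part.'' Your potential is the single-server one with a $1/K$ rescaling inside $f$, i.e.\ $\Phi_2=c_2\int f\bigl(d_J(q)/K\bigr)dq$ for the job stream, and you assert that the drift lemmas carry over ``directly.'' They do not in the sub-$K$ regimes. Consider $J(t)<K$ with a large pile of caged reduce tasks, so the running cost contains a term of order $\sfr J(t)$. The algorithm processes all $J(t)$ jobs, each shortest map task at speed $s_{cm}=P^{-1}\bigl(\min\{(\sfr J+1)/K,\sfr\}\bigr)$, and the decrease of your $\Phi_2$ is of order $\sum_j \Delta\bigl(\sfr (J-J^o)/K\bigr)\,s_{cm}$; with, say, $J-J^o=1$ this is roughly $\alpha\,\sfr J/K$, short of the needed $-\sfr J(t)$ by a factor of about $K$. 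The same deficiency appears for the map and free-reduce streams when $m<K$ or $r_f<K$: the $f$-based term alone does not yield a negative drift there (in the paper's Lemma \ref{lem:msphi1} the $f$-part even contributes $+c_{11}(2-\alpha)m/2$ in that regime), which is why the paper augments $\Phi_1$ and $\Phi_3$ with linear integrals $c_{12}\int(m(q)-m^o(q))dq$ and $c_{32}\int(r(q)-r^o(q))dq$ (the analogue of $\Phi_2$ from Section \ref{sec:job} and \cite{vaze}), whose drift supplies the $-c\,\min(K,\cdot)$ terms.

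The paper's proof of Theorem \ref{thm:onlinems} therefore uses a four-part potential \eqref{defn:phims}: besides the scaled $f$-terms and the two linear correction terms, it introduces $\Phi_4(t)=c_4\bigl(\sum_{j\in J(t)}\int f(\sfr\, n_j(q))dq-\sum_{j\in J^o(t)}\int f(\sfr\, n_j^o(q))dq\bigr)$, a per-job term \emph{not} scaled by $K$, whose drift (Lemma \ref{lem:msphi4}, proved like Lemma \ref{lem:drfitspecial}) delivers exactly the missing $-\b1_{\{J(t)<K\}}\,\sfr J(t)$ when fewer than $K$ jobs are available but caged reduce tasks keep $n(t)$ large; the paper flags this as the sole novelty over the single-server case. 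Your proposal omits both $\Phi_4$ and the linear correction terms, so the case analysis $m<K$, $J<K$, $r_f<K$ cannot be closed with the stated constants; the rest of your outline (running-cost bound via Lemma \ref{lem:lpnorm}, $n\le m+\sfr J+r_f$, stream-by-stream use of Lemma \ref{lem:bansal}, final constants $c_1=4$, $c_2=\tfrac{4}{2-\alpha}$, $c_3=4$) does match the paper. Also note your claim that no $o(1)$ boundary term is needed is fine for the paper's $\Phi_1$--$\Phi_3$ but would have to be re-checked once $\Phi_4$ is added (there it still holds because arrivals add equal terms to both sums and the per-job terms vanish continuously), so the statement survives, but only after the potential is repaired.
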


 Proof is provided in the Appendix \ref{app:onlinems}, where the novelty over the single server case is the drift $d\Phi_4(t)/dt$ that is needed only when $K >1$. This 
 distinction is required since when number of jobs is less than $K$, but there are a large number of caged reduce jobs, all jobs of $J(t)$ are being processed by the algorithm, thus giving the negative drift of $- \sum_{j\in J(t)} \Delta(\sfr) s_j(t)$ where $s_j(t)$ is the speed at which each of the jobs's shortest map task is being executed.




\section{Simulations}
In this section, we present simulation results for both the job-based flow time + energy and task-based flow time + energy problem. Since the $\opt$ is unknown, for benchmarking the performance, we remove the MapReduce constraints, and use the same speed scaling choice as that of the algorithms presented in the paper. For the task-based flow time + energy problem, without the MapReduce constraints the algorithm is the same the multi-server SRPT algorithm with speed choice prescribed in \cite{vaze}, that is constant competitive with respect to the $\opt$. Thus, our benchmark is meaningful. For the job-based flow time + energy problem, without the MapReduce constraints, there is no guarantee known on the Job-SRPT algorithm as far as we know, or for that matter any algorithm. 

For all the plots, we use $\alpha=2.5$, and number of servers $K=5$.
In the simulation setup, we consider a slotted time where in each slot, the number of jobs arriving is Poisson distributed with mean $\lambda$. For each job, the number of map and reduce tasks are Poisson distributed  with mean $2$ and $2$ (reasonable with $K=5$), and the size of a map job and a reduce job is exponentially distributed with mean $3$ and $5$, respectively. For each iteration, we generate jobs for $1000$ slots, and count its flow time + energy, and iterate over 1000  iterations. The performance of the two algorithms is compared for the same realizations of the random variables, and then averaged over iterations. 

In Fig. \ref{fig:job}, we plot the comparison of the job-based flow time + energy with and without MapReduce constraints as a function of $\lambda$, where even though in theory we consider that all jobs are available at time $0$, in simulation we let jobs arrive over time. Clearly, the ratio of the average costs is at most $5$. Moreover, from simulations we observed  that even the maximum ratio of the two costs remain below $5$. 
Similarly, in Fig. \ref{fig:task}, we plot the comparison of the task-based flow time + energy with and without MapReduce constraints as a function of $\lambda$, and observe similar behaviour.

Recall that the competitive ratio bound for the  proposed algorithm for the task-based flow time + energy problem is a function of the ratio of the largest map/reduce task and the smallest reduce task $\beta$. To illustrate that, in Fig. \ref{fig:beta}, we 
plot the task-based flow time + energy with and without MapReduce constraints with increasing sizes of map tasks while holding the size of the reduce tasks constant. In particular, we let the size of the reduce tasks to be exponentially distributed with mean $1$, and the size of the map tasks to be exponentially distributed with variable mean $\mu_m$. 
For Fig. \ref{fig:beta}, the number of jobs arriving is Poisson distributed with mean $.5$, and for each job, the number of map and reduce slots are Poisson distributed  with mean $2$ and $5$, respectively.
We see that unlike the theoretical results, the competitive ratio between the algorithm with and without enforcing the MapReduce constraints, does not increase with the ratio of the largest map/reduce task and the smallest reduce task.

\begin{figure}
\centering
\begin{tikzpicture}
    \begin{axis}[
        width  = 0.45*\textwidth,
        height = 8cm,
        major x tick style = transparent,
        ybar,
        bar width=14pt,
        ymajorgrids = true,
        ylabel = {Job Flow Time + Energy ($\times 10^5$)},
        symbolic x coords={$\lambda=.5$, $\lambda=1$, $\lambda=1.5$, $\lambda=2$, $\lambda=2.5$},
        xtick = data,
        scaled y ticks = false,
        legend cell align=left,
        legend style={
                at={(1,1.05)},
                anchor=south east,
                column sep=1ex}
    ]
        \addplot[style={bblue,fill=bblue,mark=none}]
            coordinates {($\lambda=.5$, .1728) ($\lambda=1$,.5725) ($\lambda=1.5$,1.294) ($\lambda=2$,2.1422) ($\lambda=2.5$,3.5429)};

        \addplot[style={rred,fill=rred,mark=none}]
            coordinates {($\lambda=.5$, .0532) ($\lambda=1$,.1639) ($\lambda=1.5$,.3308) ($\lambda=2$,.6397) ($\lambda=2.5$,1.0593) };

%

        \legend{With Precedence Const., Without Precedence Const.}
    \end{axis}
\end{tikzpicture}
\caption{Comparison of job-based flow time + energy with and without precedence constraints.}
\label{fig:job}
\end{figure}
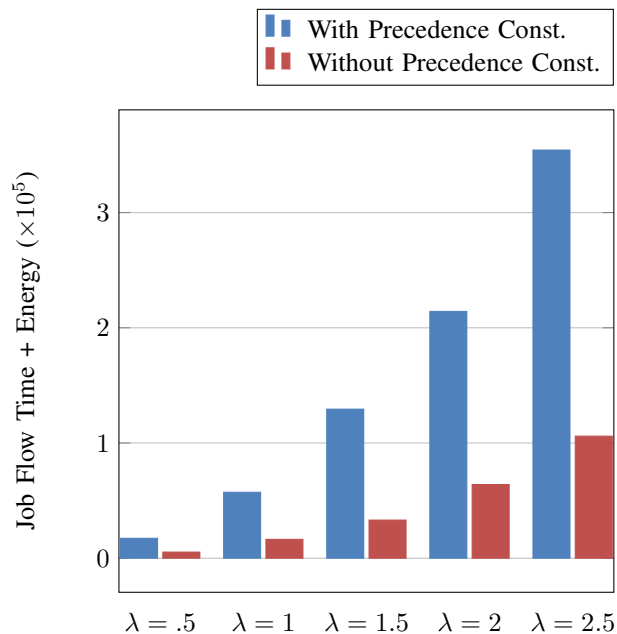

\begin{figure}
\centering
\begin{tikzpicture}
    \begin{axis}[
        width  = 0.45*\textwidth,
        height = 8cm,
        major x tick style = transparent,
        ybar,
        bar width=14pt,
        ymajorgrids = true,
        ylabel = {Job Flow Time + Energy ($\times 10^6$)},
        symbolic x coords={$\lambda=.5$, $\lambda=1$, $\lambda=1.5$, $\lambda=2$, $\lambda=2.5$},
        xtick = data,
        scaled y ticks = false,
        legend cell align=left,
        legend style={
                at={(1,1.05)},
                anchor=south east,
                column sep=1ex}
    ]
        \addplot[style={bblue,fill=bblue,mark=none}]
            coordinates {($\lambda=.5$, .07359) ($\lambda=1$,.9652) ($\lambda=1.5$,1.3569) ($\lambda=2$,2.4310) ($\lambda=2.5$,3.0096)};

        \addplot[style={rred,fill=rred,mark=none}]
            coordinates {($\lambda=.5$, .0453) ($\lambda=1$,.2062) ($\lambda=1.5$,.4686) ($\lambda=2$,.9588) ($\lambda=2.5$,1.4979) };

%

        \legend{With Precedence Const., Without Precedence Const.}
    \end{axis}
\end{tikzpicture}
\caption{Comparison of task-based flow time + energy with and without precedence constraints.}
\label{fig:task}
\end{figure}

\begin{figure}
\centering
\begin{tikzpicture}
    \begin{axis}[
        width  = 0.45*\textwidth,
        height = 8cm,
        major x tick style = transparent,
        ybar,
        bar width=14pt,
        ymajorgrids = true,
        ylabel = {Job Flow Time + Energy ($\times 10^5$)},
        symbolic x coords={$\mu_m=2$, $\mu_m=4$, $\mu_m=6$, $\mu_m=8$},
        xtick = data,
        scaled y ticks = false,
        legend cell align=left,
        legend style={
                at={(1,1.05)},
                anchor=south east,
                column sep=1ex}
    ]
        \addplot[style={bblue,fill=bblue,mark=none}]
            coordinates {($\mu_m=2$, .05135) ($\mu_m=4$,.8918) ($\mu_m=6$,1.0697) ($\mu_m=8$,1.4754) };

        \addplot[style={rred,fill=rred,mark=none}]
            coordinates {($\mu_m=2$, .04985) ($\mu_m=4$,.7204) ($\mu_m=6$,.7992) ($\mu_m=8$,1.0225)  };

%

        \legend{With Precedence Const., Without Precedence Const.}
    \end{axis}
\end{tikzpicture}
\caption{Comparison of task-based flow time + energy with and without precedence constraints as a function of size of map task being exponentially distributed with mean $\mu_m$, while the size of map task is exponentially distributed with mean $\mu_r=1$.}
\label{fig:beta}
\end{figure}
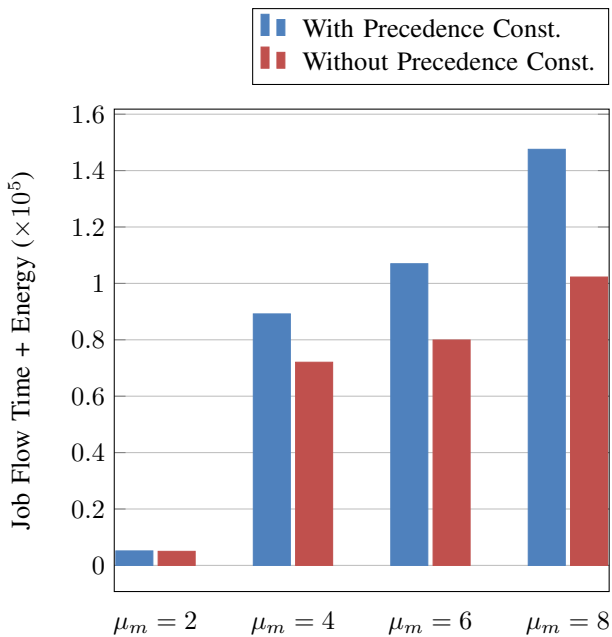
\section{Conclusions}
In this paper, we have made progress in deriving  online algorithms for scheduling and speed scaling jobs with MapReduce precedence constraints. Without the precedence constraints there is a large body of literature on efficiently solving the flow time plus energy problem, however, very little is known with precedence constraints, that is inherently a hard problem. The algorithms we proposed followed a simple SRPT policy, (process as many shortest jobs/tasks as possible)  
and the total power used by all the servers is equal to the total number of outstanding tasks, which has a intuitive appeal in the sense that it balances the energy and the delay cost effectively. The derived competitive ratio guarantee depends on the power function $P$, 
and/or the ratio of the size of the largest map/reduce task and the smallest reduce task, that is typically small for MapReduce applications. 

\bibliographystyle{IEEEtran}
\bibliography{../../refs, ../../extrarefs, ../../refsnew}

\section{Proof of Lemma \ref{lem:optjobsrpt}}\label{sec:App:JobFlowTime}
An algorithm is called {\it job-work conserving} algorithm, if it processes as many distinct jobs as possible, and multiple tasks of any one job are processed on 
multiple machines only if the number of jobs is less than $K$. 
Thus, if $n(t)$ is the number of outstanding job, where each job has multiple map and reduce tasks, then the algorithm definitely processes some task of $\min\{n(t),K\}$ distinct jobs on the $\min\{n(t),K\}$ servers.

Let $A(j, t)$ denote the amount of work (total volume of jobs processed) completed by a job-work conserving algorithm $A$ for job $j$ by time $t$. 
For set of jobs $J$, $A(J,t) = \sum_{j\in J} A(j, t)$. 

\begin{lemma}\label{lem:scalingsrpt} For a job arrival sequence $\sigma$, at any time $t$, 
let $1\le \rho \le (2-1/K)$, and $\gamma= \frac{2-1/K}{\rho}$. For any job-work conserving algorithm $A$ with $K$ servers having speed $\gamma$, $A(\sigma,\rho t) \ge A'(\sigma, t)$ for any algorithm $A'$ (including $\opt$) using $K$ speed $1$ servers.
\end{lemma}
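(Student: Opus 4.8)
The plan is to follow the classical "speedup" argument of Phillips et al.\ for SRPT, adapted to the job-work-conserving setting. Fix the arrival sequence $\sigma$ and let $A$ be a job-work-conserving algorithm running $K$ servers at speed $\gamma = (2-1/K)/\rho$, and let $A'$ be any algorithm (in particular $\opt$) running $K$ servers at speed $1$. I would compare the two executions on the common (rescaled) time axis: $A$ observed at time $\rho t$ versus $A'$ observed at time $t$. The key quantity to track is, for every time $t$, whether the \emph{total volume of work done} satisfies $A(\sigma,\rho t)\ge A'(\sigma,t)$. First I would reduce to showing that at any instant where the inequality is tight (i.e.\ $A(\sigma,\rho t)=A'(\sigma,t)$), the \emph{rate} at which $A$ accumulates work on its time axis is at least the rate at which $A'$ does on its axis; since both start at $0$ volume at the first arrival and the inequality holds trivially before any arrival, this rate comparison plus a standard continuity/induction-over-breakpoints argument gives the claim.

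The rate comparison is where the job-work-conserving property and the speedup factor $2-1/K$ enter. On $A'$'s side, with $K$ unit-speed servers the instantaneous rate of volume completion is at most $K$ (and at most the number of non-empty jobs it can run in parallel). On $A$'s side: if $A$ currently has at least $K$ distinct outstanding jobs, it runs all $K$ servers at speed $\gamma$, so its rate on its own axis is $\gamma K$, and on the rescaled axis (a unit of $A'$-time corresponds to $\rho$ units of $A$-time) it is $\rho\gamma K = (2-1/K)K \ge K$, which dominates. The delicate case is when $A$ has fewer than $K$ outstanding jobs, say $n(t)<K$: then $A$ may be forced to idle some servers or to double up tasks of the same job, and its aggregate completion rate could be only $n(t)\gamma$ rather than $K\gamma$. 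Here I would argue that in this regime $A'$ cannot be doing more volume than $A$ has \emph{already} done — intuitively, if $A$ has fewer than $K$ jobs alive it has finished "enough" work that $A'$, being only speed $1$, cannot have caught up; more precisely I would show the tight-inequality case simply cannot occur when $A$ has $<K$ outstanding jobs, by a volume-counting argument against the total work released so far.

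The main obstacle I anticipate is exactly this last case: with precedence constraints and the possibility of multiple tasks of one job running in parallel, the "number of outstanding jobs times speed" rate bound for $A$ is not obviously enough, and the classical SRPT proof's monotonicity (more work done by the faster machine at every point) must be re-established carefully. Concretely, I expect the crux to be a lemma of the form: \emph{at any time $t$, the number of jobs outstanding under $A$ (run at speed $\gamma$, observed at $\rho t$) is at most the number outstanding under $A'$ (at $t$)}, or a volume-version thereof, proved by contradiction at the first violation time and exploiting that $A$ is job-work-conserving so it never idles a server while $\ge K$ jobs are available. Once that structural fact is in hand, integrating the rate inequality over $[0,t]$ and handling the finitely many breakpoints (arrivals and job completions) where rates change gives $A(\sigma,\rho t)\ge A'(\sigma,t)$, completing the proof. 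I would then note that Lemma~\ref{lem:monotone} (stated later) is the companion fact needed to convert this work-domination statement into the flow-time domination asserted in Lemma~\ref{lem:optjobsrpt}.
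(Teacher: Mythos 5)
The paper does not actually re-derive this lemma: its ``proof'' is a one-line appeal to the single-task result of \cite{phillips1997optimal}, plus the observation that makes the transfer legitimate, namely that every outstanding job always has at least one \emph{free} task (a map task if any remain, otherwise a free reduce task), so a job-work-conserving algorithm really does process $\min\{n(t),K\}$ \emph{distinct} jobs at every instant. Your outer structure — compare completed volume on the rescaled axes, note the effective speedup $\rho\gamma=2-1/K$, and split on whether $A$ has at least $K$ or fewer than $K$ outstanding jobs — is the same classical resource-augmentation argument the paper invokes, and your $\ge K$ case is handled correctly.

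The gap is in your $<K$ case. You propose (i) a first-violation/tight-point rate comparison, (ii) the claim that a tight point with fewer than $K$ outstanding jobs under $A$ ``simply cannot occur,'' and (iii) an anticipated crux lemma that the number of outstanding jobs under $A$ is at most that under $A'$. Claims (ii) and (iii) are false: tightness with $n(t)<K$ occurs trivially (once both schedules have exhausted all released work) and non-trivially ($A$ left with one long job while $A'$ holds many short ones of the same total remaining volume — a volume count against the released work cannot exclude this, since at a tight point the remaining released volume is identical for both); and for an \emph{arbitrary} job-work-conserving $A$ (the lemma is stated for any such $A$, not an SRPT-like one) $A$ can certainly have more outstanding jobs than $A'$. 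Consequently a purely local rate comparison cannot close the argument: just after such a tight point $A'$ may momentarily complete work at rate up to $\min\{n^{A'},K\}>n(t)\gamma$. What actually rescues this case in \cite{phillips1997optimal} is a global, per-job charging that exploits the constraint that $A'$ can devote at most one unit-speed server to any single job: for every job $A$ has not yet finished, $A'$'s work on it is at most its elapsed time since release, whereas $A$ — being job-work-conserving, and always having a free task of that job available — has processed it at (rescaled) speed $2-1/K\ge 1$ during every instant at which $A$ was not fully busy; the fully-busy instants are paid for by the surplus $K(2-1/K)-K=K-1$. This interplay, which is exactly where the factor $2-1/K$ and the ``no job on two servers simultaneously'' restriction on $A'$ enter, is absent from your sketch, so as written the delicate case does not go through.
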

 Thus, the amount of work completed by any job-work conserving algorithm by time $(2-1/K)t$ is at least as much as work done by $\opt$ by time $t$.  
 The proof is identical to \cite{phillips1997optimal}, that proves an identical result when each job has a single task. The key idea that makes this work is that the algorithm is job-work conserving, to ensure that the algorithm processes as many distinct jobs as possible at any given time. Clearly, it does not hold if an algorithm processes more than one task of a job on multiple servers simultaneously and provides no processing for any task of an outstanding job.

With all jobs being available at time $0$, the input sequence $\sigma$ is essentially the set of jobs (that consists of map and reduce tasks). For the job-SRPT algorithm the following property is easy to prove. 
\begin{proposition}\label{lem:monotone} If the number of jobs finished completely by job-SRPT algorithm is $k$ by time $t$ with input sequence $\sigma'$. Then the number of jobs finished completely by job-SRPT algorithm is at least $k$ by time $t$ with input sequence $\sigma'$, where $\sigma' \subseteq \sigma''$.
\end{proposition}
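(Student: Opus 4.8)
The plan is to reduce the statement (with the evident correction that the second input should read $\sigma''$) to adding one job at a time and then to a standard SRPT displacement argument. Since all jobs are released at time $0$, an input is just a multiset of jobs, so $\sigma''$ is $\sigma'$ together with finitely many extra jobs; as ``at least $k$ completions by $t$'' is monotone and transitive in the input, it suffices to prove the claim when $\sigma'' = \sigma' \cup \{e\}$ for a single extra job $e$ and then compose over the jobs of $\sigma'' \setminus \sigma'$ one at a time. Writing $N^\sigma(t)$ for the number of jobs job-SRPT completes on input $\sigma$ by time $t$, the assertion ``$N^{\sigma'}(t) \ge k \Rightarrow N^{\sigma''}(t) \ge k$ for every $k$'' is the same as $N^{\sigma''}(t) \ge N^{\sigma'}(t)$ for all $t$; and since $N^\sigma(t) = |\sigma| - n^\sigma(t)$, for a single extra job this is exactly
\[
   n^{\sigma''}(t) \ \le\ n^{\sigma'}(t) + 1 \qquad \text{for all } t,
\]
i.e.\ that inserting one job raises the number of outstanding jobs by at most one at every instant.

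To prove this I would run job-SRPT on $\sigma'$ and on $\sigma''$ in lockstep and track a coupling of the two profiles of remaining cumulative work. The clean regime is when at least $K$ jobs are outstanding: since every unfinished job always has a free task (a map task while map tasks of it remain, a reduce task once they are all done), its cumulative remaining work drains at exactly unit speed per server assigned to it, so job-SRPT coincides with classical $K$-server SRPT run on abstract ``jobs'' of size equal to the cumulative (map $+$ reduce) size. For that process the extra job $e$ occupies at most one server at any instant, hence displaces at most one other job at a time; one maintains the invariant that, relative to the $\sigma'$-execution, at most one job of $\sigma'$ is ever behind its $\sigma'$-trajectory, and then by at most the current remaining size of $e$. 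Consequently at most one job of $\sigma'$ that is finished by time $t$ in the $\sigma'$-run is still unfinished at $t$ in the $\sigma''$-run, and whenever this occurs $e$ has itself already completed by $t$, so $N^{\sigma''}(t) \ge N^{\sigma'}(t)$; once $e$ completes the two runs become identical on the jobs of $\sigma'$.

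The part that needs real care --- and the main obstacle --- is the under-loaded regime, when fewer than $K$ jobs are outstanding and job-SRPT fills the idle servers with additional free tasks of the jobs it is already serving. There the dynamics is no longer literally $K$-server SRPT on cumulative sizes (a job with many short free tasks can be driven to completion ahead of a job with larger cumulative work but fewer tasks), so the ``one displaced job'' bookkeeping has to be re-derived by a separate case analysis over which of the two executions is over- or under-loaded at each of the (finitely many) completion epochs. The facts to exploit are that this extra service can only ever \emph{accelerate} completions and never creates a delay in the $\sigma''$-run that is not present in the $\sigma'$-run, and that job-SRPT is job-work-conserving (it serves $\min\{n,K\}$ distinct jobs and keeps every server with available free work busy), so the $\sigma''$-run never falls more than one outstanding job behind the $\sigma'$-run; an alternative is to argue that job-SRPT is optimal for the number of completed jobs at every time and then note that the policy on $\sigma''$ which runs job-SRPT on $\sigma'$ and idles the jobs of $\sigma'' \setminus \sigma'$ is dominated by job-SRPT on $\sigma''$, though this optimality itself needs a short argument because of the no-task-splitting constraint. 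Finally, it is exactly here that all jobs being present at time $0$ is essential: with online arrivals a new job can arrive while some single outstanding job is being processed on several servers at once, and the work ``spent ahead'' on that job cannot be re-attributed, which is why the monotonicity property fails in the online case --- as already remarked after Lemma~\ref{lem:optjobsrpt}.
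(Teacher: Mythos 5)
Your reduction to a single added job and the lockstep coupling in the regime where both runs have at least $K$ outstanding jobs are fine (there job-SRPT really is classical $K$-server SRPT on cumulative sizes, and the one-displaced-job bookkeeping goes through). The problem is that the step you explicitly defer --- the under-loaded regime --- is not a technicality to be ``re-derived by a separate case analysis''; it is exactly where the bridging facts you propose to exploit break down. The assertion that the extra parallel service ``can only ever accelerate completions and never creates a delay in the $\sigma''$-run that is not present in the $\sigma'$-run'' is backwards: the acceleration happens in the \emph{smaller} instance $\sigma'$, and the presence of the extra job in $\sigma''$ removes it. Concretely, take $K=2$, all jobs at time $0$, $\sigma'=\{A\}$ and $\sigma''=\{A,E\}$, where $A$ has two map tasks of size $1$ and a reduce task of size $0.1$, and $E$ has one map task of size $10$ and a reduce task of size $0.1$. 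On $\sigma'$, job-SRPT is under-loaded, runs $A$'s two maps in parallel, and finishes $A$ at time $1.1$. On $\sigma''$ we have $n=K$, so by definition each server runs a task of a \emph{different} job; $A$'s maps are processed sequentially and $A$ finishes only at $2.1$, while $E$ finishes much later. At $t=1.5$ the $\sigma'$-run has one completion and the $\sigma''$-run has none, so your claimed invariant $n^{\sigma''}(t)\le n^{\sigma'}(t)+1$ fails ($2>0+1$), and so does the conclusion $N^{\sigma''}(t)\ge N^{\sigma'}(t)$. The same example kills your proposed alternative route: on $\sigma''$ the schedule that parallelizes $A$'s two maps completes $A$ by time $1.1$, so job-SRPT does \emph{not} maximize the number of completions at every time, precisely because of its one-task-per-job rule when $n\ge K$.

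Note also that the paper itself supplies no argument here (the property is asserted to be ``easy to prove''), and its remark after Lemma~\ref{lem:optjobsrpt} attributes the failure of monotonicity solely to online arrivals; the example above shows the same phenomenon already occurs with all jobs released at time $0$, because the smaller instance $\sigma'$ (in the application, the set $\sigma_t$ that $\opt$ has finished by time $t$) can be under-loaded from the start. So your plan cannot be completed as written: any correct treatment must either modify the algorithm/statement (e.g.\ restrict how idle servers are filled, or prove a weaker displacement bound that still suffices in the proof of Lemma~\ref{lem:optjobsrpt}) rather than rely on the ``extra service only helps the $\sigma''$-run'' heuristic or on per-time optimality of job-SRPT, neither of which is true.
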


\begin{proof}[Proof of Lemma \ref{lem:optjobsrpt}]
First note that the job-SRPT algorithm is a job-work conserving algorithm.
Let $\sigma_t\subseteq\sigma$ be the set of jobs that the $\opt$ completely finishes by time $t$ when the input sequence is $\sigma$. Therefore from Lemma \ref{lem:scalingsrpt},  
job-SRPT algorithm finishes all $|\sigma_t|$ jobs by time $t (2-1/K)/\gamma$ given speed $\gamma$ for each server if the input sequence is $\sigma_t$. Now using Lemma \ref{lem:monotone}, we let $\sigma_t = \sigma'$ and $\sigma = \sigma''$ to conclude that for any time $t$, at least $|\sigma_t|$ jobs will be completed by time $(2-1/K) t/\gamma $ with the job-SRPT algorithm with input sequence $\sigma$. 
This implies that for any $k$, the completion time of the $k^{th}$ job with  job-SRPT algorithm is no later than the $(2-1/K)/\gamma$ times the completion time of the  $k^{th}$ job with the $\opt$ for any job arrival sequence $\sigma$. Recall that the flow time of job $j$ is $F_j = T_j-a_j$, where $T_j$ is the departure time of the last reduce 
task of job $j$ and $a_j$ is the arrival time of the job $j$.
Note that the order of departure of jobs with the job-SRPT algorithm and $\opt$ might be different, but since 
flow time is $\sum_{j \in \sigma} (T_j - a_j)$ it is sufficient to show that  
$\sum_{j \in \sigma} F_j(\opt) =  \sum_{j \in \sigma} F_j(\text{job-SRPT})$ to claim that $F_{\text{job}}(\opt) = F_{\text{job}}(\text{job-SRPT})$. 
Choosing $\gamma=2-1/K$, we get that $\sum_{j \in \sigma} F_j(\opt) =  \sum_{j \in \sigma} F_j(\text{job-SRPT})$ and the claim follows.
\end{proof}

\section{Proof of Lemma \ref{lem:jump1}}
\label{app:lemjump1}
\begin{proof}
Let the total size of a job $j$ be $w_j$ which is the sum of the sizes of all its map and reduce tasks. Recall that map/reduce tasks of at most the last $K-1$ jobs among the total $\sfJ$ are deleted with the $\opt$ that is executing job-SRPT algorithm. Thus, on any map/reduce tasks deletion of job $j$, the increase in $d(t,q) \le 1 $ for $q\le w_j$. Thus, the increase in $\Phi_1(t)$ is at most 
$c_1 \int_0^{w_j} f (d(t,q)+1) dq - \int_0^{w_j} f (d(t,q)) dq \le c_1 \int_0^{w_j} \Delta(d(t,q)) dq\le c_1  \Delta(\frac{\sfJ}{K}) \int_0^{w_j} dq \le c_1 w_j \Delta(\frac{\sfJ}{K})$. 
Since this happens for at most $K-1$ jobs, the total increase in $\Phi_1(t)$ at points of discontinuities is $\sum_{j=1}^{K-1} c_1 w_j \Delta(\frac{\sfJ}{K})$.
The increase in $\Phi_2(t)$ because of deletion of map and reduce task of the $j^{th}$ among the last $K-1$ jobs is simply $c_2 \int_0^{w_j} 1 dq \le c_2 w_j$.  
Since this happens for at most $K-1$ jobs, the total increase in $\Phi_2(t)$ at points of discontinuities is $\sum_{j=1}^{K-1}c_2 w_j $. 
\end{proof}
\section{Proof of Lemma \ref{lem:jump3}}
\begin{proof}
On an arrival of a new job, the new addition to the two terms of \eqref{defn:new:phi3} corresponding to the new map tasks is identical, thus keeping 
$\Phi_3(t)$ unchanged. 

Whenever a map task $j$ is completed by the algorithm or $\opt$, 
  $n(t,q)$ or $n_j^o(t,q)$ is changed for only a single point of $q=0$, keeping the integral unchanged, and the term corresponding to map task $j$ vanishes continuously in \eqref{defn:new:phi3}, and there is no discontinuity. 
  Map task $j$'s departure with the algorithm ($\opt$), however, reduces ${\hat b}_k(t)$ 
  (${\hat b}^o_k(t)$) by $1$ for map task $k$ that belong to the same job as map task $j$ if $b_k(t)\le K$ 
  ($b^o_k(t)\le K$). 
  If a map task departs with the $\opt$, the change in $\Phi_3$ is negative, and we disregard this change, and only consider the increase because of a map task departure with the algorithm.
For a job $i$, departure of its at most $K$ map tasks with the algorithm result in discontinuities (since $b_k(t)\le K$ for any discontinuity to arise), where for the $k^{th}$ discontinuity, there are at most $k$ map tasks of job $i$ remaining in the system. 
Thus, the increase in $\Phi_3(t)$ for any such discontinuity is bounded as follows. Thus, only when $b_k(t)\le K$,
$$\Phi_3(t^+) - \Phi_3(t) = \sum_{k=1}^{b_k(t)} c_3 \delta_k,$$ where 
 \begin{align*}
 \delta_k&=  
  \int g_{z_k({\hat b}, t)} \left( z_k({\hat b}, t) h_k(t, q)\right) dq, \\
  & \quad - c_3  \int g_{z_k({\hat b}+1, t)} \left( z_k({\hat b}+1, t) h_k(t, q)\right) dq, \\
    &\stackrel{(a)}\leq  \Delta\left(\frac{\ell_j +b_k(t)}{b_k(t)}\right)\int_0^{w_{\max}^t} 1 dq, \\
    & = \Delta\left(\frac{\ell_j +b_k(t)}{b_k(t)}\right)w_{\max}^t,
  \end{align*}
  where $(a)$ follows by dropping the second term, and using the definition of $f$ and fact that  $h_k(t, q)\le 1$ for all $q\le w_{\max}^t$, where $w_{\max}^t$ is an upper bound on the size of any map task.
  
Thus, for any one map task departure for a fixed value of $b_k(t)$, $$ \Phi_3(t^+) - \Phi_3(t) = c_3w_{\max}^t b_k(t) \Delta \left(\frac{\ell_j +b_k(t)}{b_k(t)}\right),$$ where $\Delta(x) = o(x)$ by Assumption \ref{ass:Pconvex}. In particular,  $\Delta(x) = \alpha x^{1-1/\alpha}$ for $P(s) = s^\alpha$. Since at most $K$ such map task departures corresponding to $b_k(t)=1,\dots, K$ give rise to discontinuties, counting for all $K$ such events we have that for any one job, the maximum increase in $\Phi_3(t)$ is  
$c_3w_{\max}^t \sum_{b_k(t)=1}^K b_k(t) \Delta \left(\frac{\ell_j +b_k(t)}{b_k(t)}\right) = O(c_3 w_{\max}^t K^2  \Delta \left(\ell_j +1\right))$. Thus, counting for all jobs, we get that 
total increase in $\Phi_3(t)$ is $O(\sfJ c_3 w_{\max}^t K^2  \Delta \left(\ell_j +1\right))$, where $\sfJ$ is the total number of jobs. Following Assumption \ref{ass:size}, we get that 
total increase in $\Phi_3(t)$ is $ o(\sum_{j \in \cJ} (\sfm_{jk} + \sfr_{j\ell}))$, as $w_{\max}^t, K$ is fixed, and $\sum_{j=1}^{\sfJ}\ell_j = \sfn_r$.
\end{proof} 

\section{Proof for remaining case of Theorem \ref{thm:online}}\label{app:finalcasetaskbased}
When the algorithm has $n_f<K$ and $n_m =0$, i.e., there less than $K$ free tasks and all of them are of reduce type, i.e., $n = n_{fr}$,   we only count the positive terms (because of $\opt$) of $d\Phi_3/dt$ of Lemma \ref{lem:drfitspecial} and ignore the negative contribution because of the execution of the algorithm.
Combining Lemma \ref{lemma:phi1_OPT-SRPT}, Lemma \ref{lemma:phi2_OPT-SRPT}, and Lemma \ref{lem:drfitspecial}, 
\begin{align*}
& n + \sum_{k=1}^KP(s_k) + d\Phi_1/dt + d\Phi_2/dt + d\Phi_3/dt   \\ 
&\quad \le  n_{fr}+ n_{fr} P(P^{-1}(1)) +c_1 n^o - c_1 \frac{n(n+1)}{2K}, \\ 
& \quad + c_1 \sum_{k \in \opt} P({\tilde s}_k)  -c_2 n_{fr} P^{-1}(1) +  c_2\sum_{k \in \opt} P({\tilde s}_k) \\
& \quad  + c_3 (\alpha-1)n^o+ c_3 \sum_{k \in \opt} P({\tilde s}_k), \\
& \le c \left(n^o + \sum_{k \in \opt} P({\tilde s}_k)\right),
\end{align*}
for $c = c_1+c_2+(\alpha-1)c_3$, where $c_2\ge 2$.
Thus, $c$ for which \eqref{eq:mothereq} holds for all the three cases is $c = 2\beta +2+2(\alpha-1)$, completing the proof of Theorem \ref{thm:online}.
\section{Proofs of Lemma \ref{lem:driftphi1}, \ref{lem:driftphi2}, and \ref{lem:driftphi3}}
 \begin{proof} [Proof of Lemma \ref{lem:driftphi1}]
Since there are only map tasks in $\Phi_1(t)$, this proof is identical to that of Theorem 1.1 \cite{bansal2009speedconf} for single server speed scaling with no precedence constraints, since map tasks are always free.
  \end{proof}

 \begin{proof} [Proof of Lemma \ref{lem:driftphi2}]
 We consider the case when $J(t)\ge J^o(t)$ since otherwise we can show that $d\Phi_2/dt\le 0$ similar to \cite{bansal2009speedconf}. 
 With $J(t)\ge J^o(t)$, for the algorithm, let $q_c$ be the sum of the sizes of all the map tasks for the job with the  smallest cumulative map size. The $\opt$ might be processing a map or a reduce task. If it is a reduce task, then it does not affect $d\Phi_2/dt$. Otherwise, let $\opt$ process a map task $k$ of job $j$, where the sum of the sizes of the map tasks of job $j$ is $q_c^o$. Note that job $j$ may not be the job with the smallest sum of the map tasks, i.e., $J^o(q_c^o) \ne J^o(t)$. Now we need to separate, the three cases 
 $q_c < q_c^o, q_c > q_c^o, q_c = q_c^o$. 
 
 {\bf Case I: $q_c < q_c^o$}.
 The algorithm is processing the shortest (cumulative map task size) job,  thus making $J(q) = J(q)-1$ for $q \in [q_c - s_J(t) dt, q_c]$. Thus, $d\Phi_2(t)/dt$ because of processing by the algorithm is $f(\sfr J(q_c) - 
 \sfr J^o(q_c)-1) - f(\sfr J(q_c) - \sfr J^o(q_c)) s_J = -\Delta(\sfr J(q_c) - \sfr J^o(q_c))s_J$. Now note that $J(q_c) = J(t)$ since the algorithm is processing the shortest (cumulative map task size) job, while $J^o(q_c)\le J^o(t)$ always. Hence 
 $$d\Phi_1(t)/dt \le \Delta(\sfr J(q_c) - \sfr J^o(q_c))s_J  = -\Delta(\sfr J(t) - \sfr J^o(t))s_J.$$
 
Processing by $\opt$ makes $J^o(q) = J^o(q)-1$ for $q \in [q_c^o - {\tilde s}(t) dt, q_c^o]$.
Hence, the $\opt$'s contribution, 
\begin{align*} 
d\Phi_1(t)/dt & \le f(\sfr J(q_c^o) - \sfr J^o(q_c^o)+1) - f(\sfr J(q_c^o) - \sfr J^o(q_c^o)) {\tilde s}, \\
&  = \Delta(\sfr J(q_c^o) - \sfr J^o(q_c)+1){\tilde s}. 
\end{align*} Since $\opt$ is not necessarily processing the map task for a job
 that has the smallest cumulative map task size of all the outstanding map jobs, $J^o(q_c^o) \ne J^o(t)$ and the best bound we can get is  $J^o(q_c^o) \ge 1$. Moreover, since, $J(q_c^o) \le J(t)$ always, we get the $\opt$'s contribution as
 $d\Phi_2(t)/dt \le \Delta(\sfr J(t) - 1)(-s_J+ {\tilde s})$. 
 
 Hence, the argument inside $\Delta$ function for the algorithms' and the $\opt$'s contribution are not identical and we have to apply Lemma \ref{lem:bansal} separately on the algorithms' and the $\opt$'s contribution for $d\Phi_2/dt$ to get $ d\Phi_2/dt \le c_2 P({\tilde s}(t)) - c_2(\alpha-2) \sfr J(t)  + c_2(2-\alpha) \sfr J^o(t)$ by using the fact that the speed of the algorithm is $s_J(t) = P^{-1}(\sfr J(t)+1)$.

The case of $q_c > q_c^o$ and  $q_c = q_c^o$ follows similarly. 

\end{proof}
 
 \begin{proof}[Proof of Lemma \ref{lem:driftphi3}]
 We consider the case when $r(t)\ge r^o(t)$ since otherwise we can show that $d\Phi_3/dt\le 0$ similar to \cite{bansal2009speedconf}. 
 With $r(t)\ge r^o(t)$, let $q_r$ ($q_r^o$) 
 be the size of the smallest free reduce task remaining with the algorithm ($\opt$). Note that with $\opt$, all reduce jobs are always free. 
 Now we need to separate, the three cases 
 $q_r < q_r^o, q_r > q_r^o, q_r = q_r^o$. 
 
 {\bf Case I: $q_r < q_r^o$}.
 The algorithm is processing the shortest free reduce task, thus making $r(q) = r(q)-1$ for $q \in [q_r - s_r(t) dt, q_r]$ while the processing by $\opt$ makes $r^o(q) = r^o(q)-1$ for $q \in [q_r^o - {\tilde s}(t) dt, q_r^o]$ if the $\opt$ is processing a reduce task. If $\opt$ is processing a map task, then it does not affect  $d\Phi_3(t)/dt$. Thus, $d\Phi_3(t)/dt$ because of processing by the algorithm is $\left(f(r(q_r) - r^o(q_r)-1) - f(r(q_r) - r^o(q_r))\right) s_r = -\Delta(r(q_r) - r^o(q_r))s_r$. Now note that $r(q_r) \ge r_f(t)$ since the algorithm is processing the shortest free reduce task, while $r^o(q_r)\le r^o(t)$ always. Hence 
 $$d\Phi_3(t)/dt \le \Delta(r(q_r) - r^o(q_r))s_r  = -\Delta(r_f(t) - r^o(t))s_r.$$
 
 Similarly for the $\opt$'s contribution, 
 \begin{align*} d\Phi_3(t)/dt & \le \left(f(r(q_r^o) - r^o(q_r^o)+1) - f(r(q_r^o) - r^o(q_r^o))\right) {\tilde s}, \\
 &  = \Delta(r(q_r^o) - r^o(q_r)+1){\tilde s},\\  
 &\le \Delta(r_f(t) - r^o(t)){\tilde s},
 \end{align*} as $r(q_r^o)\le r(t)-1$ since $q_r < q_r^o$. Therefore, combining the algorithm's and $\opt$'s contribution, we get 
 $d\Phi_3(t)/dt \le \Delta(r_f(t) - r^o(t))(-s_r+ {\tilde s})$. Using Lemma \ref{lem:bansal}, and the fact that 
 $s_r = P^{-1}(r_f(t)+1)$, we get $d\Phi_3(t)/dt \le c_3 P({\tilde s}(t)) - c_3(r_f(t) - r^o(t))$ as required. 
 
{\bf Case II $q_r > q_r^o$:} The proof for this case is more non-trivial, and essentially reflects why the competitive ratio guarantee holds only for $\alpha <2$. 
In this case, the algorithm's contribution is 
 $$d\Phi_3(t)/dt \le \Delta(r(q_r) - r^o(q_r))s_r  = -\Delta(r_f(t) - r^o(t)+1)s_r,$$
 since $r^o(q_r) \le r^o(t)-1$ as $q_r > q_r^o$.
The $\opt$'s contribution 
\begin{align*} d\Phi_3(t)/dt & \le \left(f(r(q_r^o) - r^o(q_r^o)+1) - f(r(q_r^o) - r^o(q_r^o))\right) {\tilde s}, \\
& = \Delta(r(q_r^o) - r^o(t)+1){\tilde s} .\end{align*} Ideally we would want $r(q_r^o) = r_f(t)$, however, that need not be true, since $q_r > q_r^o$ and there can be caged reduce tasks that are smaller in size than the free reduce jobs in which case $r(q_r^o) = r(t)$, where $r(t)$ is the total number of reduce jobs with the algorithm. Thus, the $\opt$'s contribution is 
$d\Phi_3(t)/dt \le \Delta(r(t) - r^o(t)){\tilde s}$. Hence, the argument inside $\Delta$ function for the algorithms' and the $\opt$'s contribution are not identical and we have to apply Lemma \ref{lem:bansal} separately on the algorithms' and the $\opt$'s contribution for $d\Phi_3/dt$ to get $ d\Phi_3/dt \le c_3 P({\tilde s}(t)) - c_3(r_f(t) - (\alpha-1)r(t) - 
 (2-\alpha)r^o(t))$, where we have used the fact that the speed of the algorithm for processing the shortest free reduce task is $P^{-1}(r_f(t)+1)$.

The case for $q_r = q_r^o$ follows similarly. 
 \end{proof}
\section{Lemma \ref{lem:lpnorm}}
\begin{lemma}\label{lem:lpnorm}
For $\alpha \ge 1$, and $x_i\ge 0$, $\left(\sum_{i=1}^n x_i^{1/\alpha}\right)^\alpha \le n^{\alpha-1} \sum_{i=1}^n x_i.$
\end{lemma}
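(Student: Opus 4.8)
The plan is to reduce the claim to H\"older's inequality (equivalently, to convexity of $t \mapsto t^{\alpha}$). First I would substitute $y_i = x_i^{1/\alpha} \ge 0$, so that $x_i = y_i^{\alpha}$ and the inequality to be proved becomes
\[
\Bigl(\sum_{i=1}^n y_i\Bigr)^{\alpha} \le n^{\alpha-1}\sum_{i=1}^n y_i^{\alpha},
\]
that is, $\bigl(\tfrac{1}{n}\sum_{i=1}^n y_i\bigr)^{\alpha} \le \tfrac{1}{n}\sum_{i=1}^n y_i^{\alpha}$, which is just the statement that the arithmetic mean of the $y_i$ is dominated by their $\ell^{\alpha}$-mean.

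Next I would make this precise via H\"older's inequality with conjugate exponents $\alpha$ and $\alpha/(\alpha-1)$, which is legitimate since $\alpha \ge 1$ (the boundary case $\alpha = 1$ is an equality and requires no argument). Writing $\sum_i y_i = \sum_i y_i\cdot 1$ and applying H\"older gives
\[
\sum_{i=1}^n y_i \le \Bigl(\sum_{i=1}^n y_i^{\alpha}\Bigr)^{1/\alpha}\Bigl(\sum_{i=1}^n 1\Bigr)^{1-1/\alpha} = n^{1-1/\alpha}\Bigl(\sum_{i=1}^n y_i^{\alpha}\Bigr)^{1/\alpha}.
\]
Raising both sides to the power $\alpha$ yields $\bigl(\sum_i y_i\bigr)^{\alpha} \le n^{\alpha-1}\sum_i y_i^{\alpha}$, and resubstituting $y_i^{\alpha} = x_i$ recovers exactly the claim. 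An equivalent route is to invoke Jensen's inequality for the convex function $\varphi(t) = t^{\alpha}$ on $[0,\infty)$ evaluated at the uniform average of the points $y_1,\dots,y_n$, or the power-mean inequality directly; any of these would close the argument in one line.

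I do not expect any genuine obstacle here: the lemma is a routine application of a standard convexity/norm-comparison inequality. The only points worth a word of care are the degenerate exponent $\alpha = 1$, where the inequality is an equality, and the hypothesis $x_i \ge 0$, which is precisely what allows taking the real $1/\alpha$-th powers in the substitution and applying H\"older without any sign complications.
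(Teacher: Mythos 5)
Your proof is correct and, at bottom, is the same argument as the paper's: the paper phrases it as Jensen's inequality for the convex map $t\mapsto t^{\eta/\nu}$ applied to a uniformly distributed random variable on $\{x_1,\dots,x_n\}$ (i.e., the power-mean inequality), while you reach the identical inequality via H\"older with exponents $\alpha$ and $\alpha/(\alpha-1)$ after the substitution $y_i=x_i^{1/\alpha}$, and you even note the Jensen/power-mean route as an equivalent one-liner. The handling of $\alpha=1$ and the nonnegativity of the $x_i$ is fine, so there is nothing to fix.
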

\begin{proof}
Let $X$ be a random variable with distribution $D$ over the support $x_1, \dots, x_n$. Let $\eta > \nu > 0$. 
Then
$\bbE\{|X|^\eta\}^{1/\eta} = \bbE\{(|X|^\nu)^{\eta/\nu}\}^{1/\eta} \ge \bbE\{(|X|^\nu)\}^{\frac{\eta}{\nu} \frac{1}{\eta}} = \bbE\{(|X|^\nu)\}^{1/\nu}$
where the inequality follows from Jensen's inequality since $x^{\eta/\nu}$ is convex. Choosing $D$ to be the uniform distribution, and $\eta =1$ and $\nu = 1/\alpha$ with $\alpha\ge 1$, we get the inequality.
\end{proof} 
\section{Proof of Theorem \ref{thm:onlinems}}\label{app:onlinems}
The potential function we construct as follows is essentially the multi-server version of \eqref{defn:phiss} as proposed in \cite{vaze} for analyzing the multi-server SRPT algorithm, except for the 
$\Phi_4$ term that is needed to handle the unique challenge posed by the precedence constraints described in the previous paragraph. 
Let
$d_m(q) = \max\left\{0,\frac{m(q) - m^o(q)}{K}\right\}$, $d_J(q) = \max\left\{0,\sfr \frac{(J(q) - J^o(q))}{K}\right\}$, $d_r(q) = \max\left\{0,\frac{r(q) - r^o(q)}{K}\right\}$.
We consider the potential function
\begin{equation}\label{defn:phims}
\Phi(t) = \Phi_1(t)+\Phi_2(t)+\Phi_3(t)+\Phi_4(t),
\end{equation}
where $\Phi_i(\cdot)$ are defined as follows.
\begin{align*}\label{defn:phims1} 
  \Phi_1(t) &= c_{11} \int_{0}^\infty f\left(d_m(q)\right) dq + 
 c_{12} \int_{0}^\infty (m(q) - m^o(q)) dq,  \\
 \Phi_2(t) &= c_{21}\int_{0}^\infty f\left(d_J(q)\right) dq
  \end{align*}
\begin{equation*}\label{defn:phims3} 
  \Phi_3(t) = c_{31}\int_{0}^\infty f\left(d_r(q)\right) dq + c_{32} \int_{0}^\infty (r(q) - r^o(q)) dq.
\end{equation*}
The sub-potential function $\Phi_4(t)$ is special and needed to handle the case when the number of jobs with the algorithm is less than $K$.
For the $j^{th}, j=1, \dots, J(t) (J^o(t))$ job, let $q_j$ ($q_j^o$) be the sum of the remaining size of all its map tasks with the algorithm and the $\opt$, respectively. Then for the $j^{th}$ job $n_j(t,q)=1$ ($n_j^o(t,q)=1$) for $q \le q_j$ ($q \le q_j^o$) 
and zero otherwise.

Let $ \Phi_4(t) $
\begin{equation*}\label{defn:phims4} 
 = c_4 \left(\sum_{j \in J(t)} \int_{0}^\infty f\left(\sfr n_j(q)\right) dq -  \sum_{j \in J^o(t)} \int_{0}^\infty f\left(\sfr n_j^o(q)\right) dq\right).
\end{equation*}
$c_{11},c_{12}, c_{21}, c_{22}, c_{31}, c_{32},c_4$ are positive constants to be determined later.

The drift of $\Phi_1, \Phi_2, \Phi_3$ can be computed similar to Lemma 10 and 11 \cite{vaze}, when $\opt$ has no precedence constraints but is not restricted to perform multi-server SRPT scheduling. 
\begin{lemma}\label{lem:msphi1} $d\Phi_1/dt $
  For $m(t) \geq K$, \begin{align*} 
  \le &c_{11} m^o -
  c_{11}(2-\alpha) m + c_{11} (2-\alpha)\left(\frac{K-1}{2}\right)\\ &\qquad
  + c_{11}\sum_{k \in \opt} P({\tilde s}_k) -c_{12} \min(K,m) + c_{12} \sum_{k \in \opt} P({\tilde s}_k) \end{align*} while for $m(t) <
  K,$ \begin{align*} d\Phi_1/dt &\le c_{11} m^o
  + \frac{c_{11}(2-\alpha)m}{2} + c_{11}\sum_{k \in \opt} P({\tilde
  s}_k), \\
  &\quad  -c_{12} \min(K,m) + c_{12}\sum_{k \in \opt} P({\tilde s}_k) \end{align*}
\end{lemma}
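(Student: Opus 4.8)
\textbf{Proof plan for Lemma \ref{lem:msphi1}.} The plan is to follow the multi-server SRPT drift calculation of \cite[Lemmas 10 and 11]{vaze} essentially verbatim, after recording two structural facts that make the MapReduce setting no harder for $\Phi_1$ than the unconstrained one. First, $\Phi_1$ depends only on the map tasks, and map tasks are always free (Definition \ref{defn:free}); the algorithm's $M_1$-rule processes the $\min\{m(t),K\}$ shortest map tasks at a common speed $s_m(t)=P^{-1}(\min\{(m(t)+1)/K,1\})$, which is precisely multi-server SRPT restricted to the map sub-instance. Second, in this section $\opt$ carries no precedence constraints and is not assumed to run SRPT, so its processing of map tasks may be taken to be an arbitrary schedule on at most $K$ servers, to be bounded generically via Lemma \ref{lem:bansal}. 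I would therefore write $d\Phi_1/dt$ as the sum of the algorithm's contribution (nonpositive) and $\opt$'s contribution (nonnegative), and within each separate the $c_{11}$-weighted $f$-term from the $c_{12}$-weighted linear term.

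For the algorithm's part I would use the telescoping identity $f(i/K)-f((i-1)/K)=\Delta(i/K)$: while the $\ell$-th shortest map task of remaining size $q_\ell$ is processed, $m(q)$ decreases by one on a $q$-interval of length $s_m(t)\,dt$, hence $d_m(q)$ and $f(d_m(q))$ decrease accordingly, and $m(q_\ell)=m-\ell+1$ by the SRPT ordering. Summing over $\ell=1,\dots,\min\{m,K\}$ and using that $P^{-1}(\cdot)\Delta(\cdot)$ is linear in its argument yields the negative $f$-terms $-c_{11}(2-\alpha)m$ when $m\ge K$ (together with the $c_{11}(2-\alpha)(K-1)/2$ correction accounting for $m^o(q)$ being as large as $m(q)-1$ on the processed range), and only the weaker $+c_{11}(2-\alpha)m/2$ when $m<K$. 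For the linear term, each processed map task decreases $\int_0^\infty m(q)\,dq$ at rate $s_m(t)$, giving at most $-c_{12}\min(K,m)$ (any extra decrease from the $M_2$-component acting on map tasks is simply discarded, being negative). For $\opt$, I would apply Lemma \ref{lem:bansal} server by server, pairing an $\opt$-server of speed $\tilde s_k$ against the algorithm's speed $s_m(t)$: the cross term $(-s_m+P^{-1}(x))\Delta(x)$ is controlled (nonpositive on the relevant range of arguments $x$) by the speed choice, and summing the remaining $P(\tilde s_k)-x$ contributions over $\opt$'s at most $K$ servers produces $c_{11}m^o+c_{11}\sum_{k\in\opt}P(\tilde s_k)$ for the $f$-term and $c_{12}\sum_{k\in\opt}P(\tilde s_k)$ for the linear term. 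Adding the algorithm and $\opt$ contributions gives the two stated inequalities; and $\Phi_1$ satisfies the boundary conditions, since it vanishes before the first arrival and after all completions and is continuous (a job arrival raises $m(q)$ and $m^o(q)$ by the same amount, so $d_m$ and $m(q)-m^o(q)$ do not jump).

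The step I expect to be the main obstacle is the bookkeeping around the capped speed $s_m(t)=P^{-1}(\min\{(m(t)+1)/K,1\})$: one must verify that in the regime $m\ge K$ this speed still dominates $P^{-1}(x)$ in the Lemma \ref{lem:bansal} step and still extracts the claimed $-c_{11}(2-\alpha)m$ from the telescoped $f$-sum, and that the regime $m<K$ (where the speed is $P^{-1}((m+1)/K)<1$) yields exactly the weaker $+c_{11}(2-\alpha)m/2$ bound with the linear term $-c_{12}\min(K,m)=-c_{12}m$ still valid. Everything else is a direct transcription of the argument in \cite{vaze}, precisely because $\Phi_1$ never interacts with a caged reduce task.
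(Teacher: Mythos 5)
Your plan coincides with the paper's own treatment: the paper justifies Lemma \ref{lem:msphi1} only by remarking that the drift of $\Phi_1$ is computed exactly as in Lemmas 10--11 of \cite{vaze}, since $\Phi_1$ involves only map tasks (which are always free, so the algorithm's $M_1$-rule is plain multi-server SRPT on the map sub-instance) and $\opt$ is unconstrained and handled generically through Lemma \ref{lem:bansal} --- precisely the transcription you outline, including the boundary-condition check and discarding the extra negative $M_2$-contribution. One caution on the point you flag as the main obstacle: since the per-server speed is capped at $P^{-1}(1)$, the cross term $(-s_m+P^{-1}(x))\Delta(x)$ is \emph{not} nonpositive when $m\ge K$ and $x=d_m(q)>1$; the stated $(2-\alpha)$-weighted drift is obtained instead by dropping the $-s_m\Delta(x)$ term and using $P^{-1}(x)\Delta(x)=(\alpha-1)x$, which is the mechanism you correctly invoke earlier in your sketch, so the argument goes through as claimed.
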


\begin{lemma}\label{lem:msphi2}
  For $J(t) \geq K$, \begin{align*} 
  d\Phi_2/dt \le &c_{21} \sfr J^o -
  c_{21}(2-\alpha) \sfr J + c_{21} (2-\alpha)\left(\frac{K-1}{2}\right) \\ 
  & \quad + c_{21}\sum_{k \in \opt} P({\tilde s}_k)  \end{align*} 
  while for  $J(t) < K$, (by disregarding the algorithm's contribution)
  \begin{align*} 
  d\Phi_2/dt \le &c_{21} \sfr J^o
   + c_{21} (2-\alpha)\left(\frac{K-1}{2}\right)+ c_{21}\sum_{k \in \opt} P({\tilde s}_k)  \end{align*}
\end{lemma}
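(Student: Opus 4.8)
The plan is to prove Lemma~\ref{lem:msphi2} by transcribing the multi-server SRPT drift computation of \cite[Lemma~10]{vaze}, now with $\opt$ unrestricted. The dictionary is this: $\Phi_2$ is the weighted ``$f$-potential'' of the set of jobs that still have an outstanding map task, each such job being regarded as a single token whose size is its remaining cumulative map size and whose weight is $\sfr$ (this weight is exactly what produces the factor $\sfr$ inside $d_J$ and in the final bound). Before the drift computation I would record three structural points: (i) as the algorithm or $\opt$ works on a job's map tasks, that job's cumulative map size decreases continuously, so $J(\cdot,q)$ and $J^o(\cdot,q)$ vary continuously in $t$ and $\Phi_2$ has no jump discontinuities — a job leaving the set of jobs-with-outstanding-maps happens at $q=0$ and is invisible to $\int_0^\infty(\cdot)\,dq$; (ii) since $\sfr$ and all the counts are integers, $\tfrac{\sfr(J(q)-J^o(q))}{K}$ always lands on the grid $\{i/K\}$ on which $f$ is defined; (iii) on $q$-intervals where $J(q)\le J^o(q)$ one has $d_J(q)=0$, so those intervals contribute nonpositively and are discarded — in particular if $J(t)\le J^o(t)$ throughout then $\Phi_2$ is locally nonincreasing and the claim is trivial. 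Finally, when $\opt$ assigns a server to a reduce task it changes no $J^o(q)$, so only $\opt$'s map processing enters $d\Phi_2/dt$.

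For $J(t)\ge K$ I would first note $\tfrac{\sfr J(t)+1}{K}\ge\sfr+\tfrac1K>\sfr$, so the $M_2$-speed is the constant $s_{cm}(t)=P^{-1}(\sfr)$, i.e.\ $P(s_{cm})=\sfr$. Letting $q_1\le\cdots\le q_K$ be the cumulative map sizes of the $K$ smallest jobs, the $M_2$-component drives $J(\cdot,q)$ from $J(t)-i+1$ to $J(t)-i$ on a length-$s_{cm}\,dt$ interval around each $q_i$, while the $M_1$-component can only decrease $J(\cdot,q)$ further; hence the algorithm's contribution is at most $-c_{21}\sum_{i=1}^K\Delta\!\bigl(\tfrac{\sfr(J(t)-i+1-J^o(t))}{K}\bigr)s_{cm}$, using monotonicity of $\Delta$, $J^o(q_i)\le J^o(t)$, and $f(b/K)-f(a/K)\ge\Delta(b/K)$ for integers $a<b$. $\opt$'s at most $K$ map-processing servers, at speeds $\tilde s_k$, can only raise $d_J$, contributing at most $c_{21}\sum_k\Delta(x_k)\tilde s_k$ with $x_k=\tfrac{\sfr(J(q_k^o)-J^o(q_k^o))}{K}\le\tfrac{\sfr J(t)}{K}$. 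I would then apply Lemma~\ref{lem:bansal} term by term on the $\opt$ side (pairing against an algorithm term when the $\Delta$-argument coincides, otherwise against the algorithm speed $s_{cm}$), invoke the $P(s)=s^\alpha$ identity $P^{-1}(x)\Delta(x)=(\alpha-1)x$ and $P(s_{cm})=\sfr$, sum $\sum_{i=1}^K(J(t)-i+1-J^o(t))=K(J(t)-J^o(t))-\tfrac{K(K-1)}2$, and collect the constants; this yields the stated bound $c_{21}\sfr J^o-c_{21}(2-\alpha)\sfr J+c_{21}(2-\alpha)\tfrac{K-1}{2}+c_{21}\sum_{k\in\opt}P(\tilde s_k)$.

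For $J(t)<K$ every job still holding a map task is being processed, so the algorithm's contribution is nonpositive wherever $d_J>0$; following the lemma statement I would simply drop it (``disregarding the algorithm's contribution''). What remains is $\opt$'s positive contribution, bounded by the same application of Lemma~\ref{lem:bansal} (now with $s=0$) by $c_{21}\sfr J^o+c_{21}(2-\alpha)\tfrac{K-1}{2}+c_{21}\sum_{k\in\opt}P(\tilde s_k)$, where the $\tfrac{K-1}{2}$ term is carried along only so the two cases read uniformly.

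I expect the main obstacle to be the argument mismatch: the algorithm processes the $K$ jobs of smallest cumulative map size, so its $\Delta$-arguments $\tfrac{\sfr(J(t)-i+1-J^o)}{K}$ are small, while $\opt$ may process jobs whose cumulative map size is large, pushing its $\Delta$-arguments up to $\tfrac{\sfr J(t)}{K}$. Because the arguments need not coincide, Lemma~\ref{lem:bansal} does not produce a clean cancellation, and one is left to control residual terms of the form $\Delta(x)\bigl(P^{-1}(x)-s_{cm}\bigr)$; keeping these sublinear is precisely what forces $1<\alpha<2$ and produces the coefficient $(2-\alpha)$ rather than $1$. Tracking the factor-$\sfr$ weight bookkeeping throughout is the only other nonroutine point; everything else is the same summation as in \cite[Lemma~10]{vaze}.
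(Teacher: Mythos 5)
Your high-level plan --- redoing the multi-server SRPT drift computation of \cite{vaze} with each job-with-outstanding-maps treated as a token of weight $\sfr$, and applying Lemma~\ref{lem:bansal} separately to the algorithm's and $\opt$'s contributions --- is exactly what the paper itself does (it gives no detailed proof of Lemma~\ref{lem:msphi2}, deferring to Lemmas 10--11 of \cite{vaze} and mirroring the single-server proof of Lemma~\ref{lem:driftphi2}). The gap is in your execution of the $J(t)\ge K$ case. You lock in the literal reading $s_{cm}(t)=P^{-1}(\sfr)$, a speed that does not grow with $J(t)$, and then assert the computation ``yields the stated bound.'' It cannot: with a $J$-independent speed, the decrease of $\Phi_2$ available per unit time is at most of order $c_{21}\,\sfr\,\Delta\bigl(\sfr J/K\bigr)\,K\,s_{cm}=O\bigl(\sfr^2K^{1/\alpha}J^{1-1/\alpha}\bigr)$, which is sublinear in $J$, whereas the bound to be proved contains the linear term $-c_{21}(2-\alpha)\sfr J$. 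Concretely, take $\opt$ idle with $J^o(t)=0$, $\tilde s_k=0$, and $J(t)$ large: the true drift is at least $-c_{21}\alpha\sfr^2K^{1/\alpha}(\sfr^{1/\alpha}+1)J^{1-1/\alpha}$ while the claimed right-hand side is $-c_{21}(2-\alpha)\sfr J+c_{21}(2-\alpha)\tfrac{K-1}{2}$, so under your reading the inequality itself fails and no pairing inside Lemma~\ref{lem:bansal} can rescue it. Your closing remark that the residuals $\Delta(x)\bigl(P^{-1}(x)-s_{cm}\bigr)$ stay sublinear for $\alpha<2$ is also wrong: with $x\approx \sfr J/K$ and $s_{cm}=P^{-1}(\sfr)$ fixed, this residual equals $(\alpha-1)x-\alpha\sfr^{1/\alpha}x^{1-1/\alpha}$, which grows linearly in $x$; the role of $\alpha<2$ is elsewhere (it makes the $(\alpha-2)x$ term from the $s=0$ application of Lemma~\ref{lem:bansal} nonpositive and keeps $c_2=\tfrac{4}{2-\alpha}$ finite). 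The derivation only closes if, for $J(t)\ge K$, the $M_2$-speed scales as $P^{-1}\bigl(\tfrac{\sfr J(t)+1}{K}\bigr)$, so that the algorithm's speed dominates $P^{-1}$ of every $\Delta$-argument on its own side and Lemma~\ref{lem:bansal} with $\tilde s_k=0$ gives the clean $-\Delta(x)s\le -x$ per grid step; the $\min$ in the algorithm's description is evidently intended as a $\max$ (this is what matches the single-server choice $s_J=P^{-1}(\sfr J+1)$ and what Lemma~\ref{lem:msphi4} needs when $J(t)<K$). You needed to either flag this or carry out the computation with the $J$-scaling speed.

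A secondary bookkeeping issue: when the algorithm completes $dq$ of map work on one of the $K$ smallest jobs, $d_J(q)$ drops by $\sfr/K$, so $f$ decreases by $\sum_{j=0}^{\sfr-1}\Delta\bigl(d_J(q)-j/K\bigr)$, i.e.\ $\sfr$ grid increments; retaining only the single increment $\Delta\bigl(\tfrac{\sfr(J-i+1-J^o)}{K}\bigr)$, as you do, discards the factor $\sfr$ that must ultimately appear in $-c_{21}(2-\alpha)\sfr J$ (your sum $\sum_{i=1}^{K}(J-i+1-J^o)=K(J-J^o)-\tfrac{K(K-1)}{2}$ carries no $\sfr$). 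The $J(t)<K$ case, where you drop the algorithm's contribution and bound $\opt$'s side via Lemma~\ref{lem:bansal} with $s=0$, is in the spirit of the paper, but even there the passage from the per-increment bounds to the precise form $c_{21}\sfr J^o+c_{21}\sum_{k\in\opt}P(\tilde s_k)$ needs to be spelled out rather than asserted.
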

\begin{lemma}\label{lem:msphi3}
For $r_f(t) \geq K$,  $  d\Phi_3/dt$ \begin{align*} 
 \le &c_{31} r^o -
  c_{31}(2-\alpha) r_f + c_{31} (2-\alpha)\left(\frac{K-1}{2}\right)\\ &
  + c_{31}\sum_{k \in \opt} P({\tilde s}_k) -c_{32} \min(K,m) + c_{32} \sum_{k \in \opt} P({\tilde s}_k) \end{align*} while for 
  $r_f(t) <K,$ \begin{align*} d\Phi_3/dt &\le c_{31} r^o
  + \frac{c_{31}(2-\alpha)r_f}{2} + c_{31}\sum_{k \in \opt} P({\tilde
  s}_k), \\
  & \quad -c_{32} \min(K,m) + c_{32}\sum_{k \in \opt} P({\tilde s}_k) \end{align*}
\end{lemma}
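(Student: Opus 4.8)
To bound $d\Phi_3/dt$ the plan is to transcribe the multi-server SRPT drift computation of \cite[Lemmas 10--11]{vaze}, now with $q$ measuring the remaining size of a reduce task and the free reduce tasks playing the role of the jobs scheduled by the algorithm. Write $\Phi_3 = \Phi_{31} + \Phi_{32}$ for the two summands, $\Phi_{31}(t) = c_{31}\int_0^\infty f(d_r(q))\,dq$ and $\Phi_{32}(t) = c_{32}\int_0^\infty (r(q)-r^o(q))\,dq$, and split $d\Phi_3/dt$ into the contribution of the algorithm's $R_f$-processing and the contribution of $\opt$'s processing. Observe first that, unlike $\Phi_4$, this $\Phi_3$ has no jump discontinuities: a job arrival creates the same number of reduce tasks for the algorithm and for $\opt$ (so neither $r(q)-r^o(q)$ nor $d_r(q)$ moves), and a map-task completion merely relabels caged reduce tasks as free without changing $r(q)$; hence it suffices to bound $d\Phi_3/dt$ wherever it exists.

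For the algorithm's contribution, when $R_f$ is active it runs the $\min\{r_f,K\}$ shortest free reduce tasks, of remaining sizes $q_1 \le \dots \le q_{\min\{r_f,K\}}$, all at the common speed $s_f(t)$, so at each $q_i$ the quantities $r(q)$, $d_r(q)$ and $r(q)-r^o(q)$ decrease over an interval of length $s_f(t)\,dt$. The single observation that keeps the bound free of any dependence on the size ratio $\beta$ is that $q_i$ being the $i$-th smallest \emph{free} reduce task forces at least $r_f-i+1$ free reduce tasks (plus however many caged ones) to have remaining size $\ge q_i$, so $r(q_i) \ge r_f-i+1$ no matter how small the caged reduce tasks are. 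Feeding this into the telescoping sum $\sum_i \Delta(d_r(q_i))\,s_f(t)$ and using $\Delta$ increasing, exactly as in \cite[Lemma 10]{vaze}, gives for $r_f \ge K$ (where $s_f = P^{-1}(1)$) the terms $-c_{31}(2-\alpha)r_f$ and $c_{31}(2-\alpha)\frac{K-1}{2}$, while for $r_f < K$ (where only $r_f$ servers run $R_f$ and $s_f = P^{-1}((r_f+1)/K)$) the shorter sum yields the weaker term $\frac{c_{31}(2-\alpha)}{2}r_f$ (positive, but outweighed below by the $\Phi_{32}$ drift once $c_{32}$ is large enough). From $\Phi_{32}$ the algorithm contributes $-c_{32}\min(K,r_f)\,s_f(t)$.

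For $\opt$'s contribution, each server of $\opt$ running a reduce task raises $r^o(q)$ on a short interval, which can only raise $d_r(q)$ and $r(q)-r^o(q)$. As in the single-server Lemma~\ref{lem:driftphi3}, the point where $\opt$ acts need not coincide with a point where the algorithm acts --- a caged reduce task may be shorter than every free reduce task --- so Lemma~\ref{lem:bansal} has to be applied term by term to $\opt$'s servers rather than paired with the algorithm's processing. Using $P^{-1}(x)\Delta(x) = (\alpha-1)x$, Lemma~\ref{lem:bansal} with $s_k=0$ turns $\Delta(x)\tilde s_k$ into $(\alpha-2)x + P(\tilde s_k)$, and since $\alpha<2$ the term $(\alpha-2)x$ is harmless; this produces the surviving $c_{31}\sum_{k\in\opt}P(\tilde s_k)$ and $c_{32}\sum_{k\in\opt}P(\tilde s_k)$ terms and, combined with the algorithm's own (negative) contribution, fixes the coefficients (and signs) of $r^o$ and $r_f$. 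Collecting the algorithm and $\opt$ contributions separately in the regimes $r_f \ge K$ and $r_f < K$ gives the two stated inequalities.

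The step I expect to be the main obstacle is the $r_f<K$ regime. There the algorithm runs only $r_f$ servers on reduce work at speed $s_f < P^{-1}(1)$, so the ``clean'' pairing in Lemma~\ref{lem:bansal} (which would give a full $-r(q)$ drift per processed task) is not available and one must lean on the linear sub-potential $\Phi_{32}$ and on the slack from $\alpha<2$ to absorb $\opt$'s reduce-processing; choosing $c_{31},c_{32}$ and the factor $\frac{2-\alpha}{2}$ so that, together with Lemmas~\ref{lem:msphi1}, \ref{lem:msphi2} and the new $d\Phi_4/dt$ bound, the total drift beats $-c\,n(t)$ and the competitive ratio $8+\frac{4}{2-\alpha}+3^\alpha$ of Theorem~\ref{thm:onlinems} falls out, is the delicate bookkeeping. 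Everything else is a direct transcription of the SRPT drift argument of \cite{vaze}.
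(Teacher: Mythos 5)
Your overall route is the one the paper itself intends (the paper gives no proof of Lemma~\ref{lem:msphi3}; it only points to Lemmas 10--11 of \cite{vaze}), and several of your structural observations are exactly right and worth stating explicitly: $\Phi_3$ has no jump discontinuities because $r(q)$ counts caged as well as free reduce tasks; the algorithm's $R_f$-processing of the $i$-th shortest \emph{free} reduce task guarantees $r(q_i)\ge r_f-i+1$ (hence $d_r(q_i)\ge \max\{0,(r_f-i+1-r^o)/K\}$), which is why no $\beta$-factor enters; and $\opt$'s terms must be handled by applying Lemma~\ref{lem:bansal} term by term rather than paired with the algorithm, exactly as in the single-server Lemma~\ref{lem:driftphi3}.

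However, there are two concrete gaps at the crux. First, the claim that the telescoping sum ``exactly as in \cite[Lemma 10]{vaze}'' produces the terms $-c_{31}(2-\alpha)r_f + c_{31}(2-\alpha)\frac{K-1}{2}$ is unsupported: in \cite{vaze} the analogous terms carry coefficient $1$, and they are obtained by pairing the algorithm's speed with $\opt$'s inside Lemma~\ref{lem:bansal}, which requires the algorithm's speed to dominate $P^{-1}$ of the argument of $\Delta$. Here the reduce-speed is $s_f=P^{-1}(\min\{(r_f+1)/K,1\})$ while $d_r(q_i)$ counts caged reduce tasks and can be far larger than $1$, so the raw algorithmic decrease is only $\sum_i\Delta(d_r(q_i))\,s_f$, which is \emph{sublinear} in the backlog; your sketch never explains how a negative term linear in $r_f$ (the whole content of the lemma) is extracted from it, and this is precisely the step that needs an argument (or a correction of the speed definition), not the $r_f<K$ bookkeeping you flag as the obstacle. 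Second, your use of Lemma~\ref{lem:bansal} with $s_k=0$ is algebraically wrong: since $P^{-1}(x)\Delta(x)=\alpha x$ for $P(s)=s^\alpha$ (the paper misstates this identity in words, but its displayed bound \eqref{eq:dummy767} is the correct one), the lemma gives $\Delta(x)\tilde s_k\le(\alpha-1)x+P(\tilde s_k)$, not $(\alpha-2)x+P(\tilde s_k)$. Consequently $\opt$'s contribution is \emph{not} ``harmless'' for $\alpha<2$: the positive $(\alpha-1)x$ terms, with $x$ of order $(r-r^o)/K$ because caged tasks sit inside $r(q)$, must be carried through — they are exactly the source of the $(\alpha-1)r$ and $r^o$ terms in the single-server Lemma~\ref{lem:driftphi3} — and dropping them invalidates the derivation of the stated coefficients of $r^o$ and $r_f$. (Minor: the $-c_{32}\min(K,m)$ term in the statement should presumably be $-c_{32}\min(K,r_f)\,s_f$, as your $\Phi_{32}$ computation in fact produces; if you silently substitute $r_f$ for $m$ you should say so.)
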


The only non-trivial drift is for $\Phi_4(t)$ which is needed when $J(t) < K$  is described as follows.

\begin{lemma}\label{lem:msphi4} When $$d\Phi_4(t)/dt \le c_4 ( - \b1_{J(t) < K}  \ \sfr J(t) + \sfr \min\{J^o(t), K)). $$
\end{lemma}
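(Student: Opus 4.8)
The plan is to split $d\Phi_4/dt$ into the part produced by the algorithm's processing of map tasks and the part produced by $\opt$'s processing of map tasks, and to bound these by the two terms on the right-hand side of the claimed inequality. First I would record the elementary identity $\int_0^\infty f(\sfr n_j(q))\,dq = q_j f(\sfr)$, where $q_j$ is the total remaining map work of job $j$; thus $\Phi_4(t) = c_4 f(\sfr)\bigl(\sum_{j\in J(t)} q_j - \sum_{j\in J^o(t)} q_j^o\bigr)$. This representation already shows that $\Phi_4$ has no jump discontinuities and satisfies both boundary conditions: an arriving job enters both sums with equal remaining map work, and a job whose last map task is completed (by either side) leaves a sum through a term that has continuously shrunk to $0$. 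Hence, unlike the earlier $\Phi_3$, no jump accounting is needed and only the drift must be controlled.

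For the algorithm's contribution, the key point is that whenever $J(t)<K$ the rule $M_2$ processes the shortest remaining map task of \emph{every} one of the $J(t)$ outstanding jobs, each at speed $s_{cm}(t)=P^{-1}\!\bigl((\sfr J(t)+1)/K\bigr)$ (the minimum with $\sfr$ in the definition of $s_{cm}$ is inactive since $J(t)\le K-1$). Consequently each $q_j$ decreases at rate at least $s_{cm}(t)$, processing via $M_1$ only accelerating this, so the algorithm's contribution to $d\Phi_4/dt$ is at most $-c_4 f(\sfr) J(t) s_{cm}(t)$. It then suffices to verify the numerical inequality $f(\sfr)\,s_{cm}(t)\ge \sfr$, which I would get from $s_{cm}(t)\ge P^{-1}(1/K)=K^{-1/\alpha}$ together with the crude bound $f(\sfr)=\sum_{i=1}^{\sfr K}\Delta(i/K)\ge \sfr K\,\Delta(1/K)=\alpha\,\sfr\,K^{1/\alpha}$, giving $f(\sfr)s_{cm}(t)\ge \alpha\sfr\ge\sfr$. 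When $J(t)\ge K$ the indicator $\b1_{J(t)<K}$ vanishes and the algorithm's (still non-positive) contribution is simply discarded.

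For $\opt$'s contribution, note that only $\opt$'s map-task processing changes $\sum_{j\in J^o(t)} q_j^o$, and it can only decrease it, so $\opt$ raises $\Phi_4$ at rate $c_4 f(\sfr)$ times the rate at which $\opt$ depletes its total outstanding map work. Only the servers of $\opt$ currently running map tasks contribute, their number is at most $K$, and they touch at most $J^o(t)$ outstanding jobs; proceeding exactly as in the proofs of Lemmas~\ref{lem:msphi1}--\ref{lem:msphi3}, I would apply Lemma~\ref{lem:bansal} server by server (with the algorithm's speed set to $0$ on this branch) to trade each $\tilde s_k$ for the corresponding energy term, which is collected into the global accounting of Theorem~\ref{thm:onlinems}; what survives is at most $c_4\,\sfr\,\min\{J^o(t),K\}$, yielding the stated bound.

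The hard part will be the algorithm side: one must recognise that the scheduling rule $M_2$ is precisely what forces \emph{all} outstanding jobs to make simultaneous map progress in the regime $J(t)<K$ where $\Phi_1,\Phi_2,\Phi_3$ supply no useful drift, and then check that $s_{cm}(t)$ has been chosen just large enough that the per-job rate $f(\sfr)\,s_{cm}(t)$ still dominates the load $\sfr$ even after $s_{cm}(t)$ has been reduced by the $1/K$ factor forced by having $K$ servers. The $\opt$ side is routine bookkeeping, essentially identical to that in the other drift lemmas.
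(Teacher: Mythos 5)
Your treatment of the algorithm's side is sound, and in fact cleaner than the sketch the paper gestures at: writing $\Phi_4(t)=c_4 f(\sfr)\bigl(\sum_{j\in J(t)}q_j-\sum_{j\in J^o(t)}q_j^o\bigr)$, observing that under $M_2$ every one of the $J(t)<K$ outstanding jobs has its shortest map task served at rate at least $s_{cm}(t)\ge P^{-1}(1/K)$, and checking $f(\sfr)\ge \sfr K\,\Delta(1/K)=\alpha\sfr K^{1/\alpha}$ so that $f(\sfr)\,s_{cm}(t)\ge\alpha\sfr\ge\sfr$, does yield the term $-c_4\sfr J(t)$ without invoking Lemma \ref{lem:bansal} at all; the boundary/no-jump discussion is also correct under your reading of $f$.

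The gap is on the $\opt$ side, and it is not mere bookkeeping. With your representation, $\opt$'s map processing increases $\Phi_4$ at rate exactly $c_4 f(\sfr)\sum_k \tilde s_k$, the sum over $\opt$'s servers currently on map tasks. You assert that applying Lemma \ref{lem:bansal} server by server leaves at most $c_4\sfr\min\{J^o(t),K\}$ once the energy terms are shipped to the accounting in Theorem \ref{thm:onlinems}. But Lemma \ref{lem:bansal} gives $\Delta(x)\tilde s_k\le(\alpha-1)x+P(\tilde s_k)$; to match your coefficient you must take $x=\Delta^{-1}(f(\sfr))$, and the surviving term per server is then of order $\bigl(f(\sfr)/\alpha\bigr)^{\alpha/(\alpha-1)}\sim\bigl(K\sfr^{2-1/\alpha}\bigr)^{\alpha/(\alpha-1)}$, which grows with $K$ and superlinearly in $\sfr$ --- nowhere near $\sfr$ per server. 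Taking $x=\sfr$ instead only controls $\Delta(\sfr)\tilde s_k$, and $f(\sfr)\gg\Delta(\sfr)$; indeed no bound of the form $f(\sfr)\tilde s\le C_1\sfr+C_2P(\tilde s)$ with $C_1,C_2$ independent of $K$ and $\sfr$ can hold (optimize over $\tilde s$). The same large multiplier $f(\sfr)$ that you exploited to beat the $1/K$-reduced speed on the algorithm's side inflates the $\opt$ side by exactly the same factor, so your two halves are in conflict. The paper's intended route (the analogue of Lemma \ref{lem:drfitspecial}, with a $g$-type construction as in \eqref{defn:new:phi3} so that the per-unit-work increment is $\Delta(\sfr)$ rather than $f(\sfr)$) is set up precisely so that Lemma \ref{lem:bansal} with $x=\sfr$ converts each $\opt$ server's contribution into $(\alpha-1)\sfr+P(\tilde s_k)$; your argument replaces that increment by the much larger $f(\sfr)$ and never pays for it on the $\opt$ side. (You are right that, as literally stated, the lemma needs an additive $\sum_k P(\tilde s_k)$-type term since $\opt$'s speed is unbounded; the problem is that even after extracting such a term, the residual you claim does not follow from your setup.)
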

Proof is similar to Lemma \ref{lem:drfitspecial} and hence omitted. 

\begin{proof}[Proof of Theorem \ref{thm:onlinems}]
To prove the Theorem we will show that \eqref{eq:mothereq} holds for an appropriate choice of $c$ using Lemma \ref{lem:msphi1}, \ref{lem:msphi2}, \ref{lem:msphi3} and \ref{lem:msphi4}.

For each server, the energy cost is similar to the single server case and is given by
$P(P^{-1}(\frac{m(t)+1}{K})+P^{-1}(\frac{\sfr J(t)+1}{K}) + P^{-1}(\frac{r_f(t)+1}{K})) \le \frac{3^{\alpha-1}}{K} (m(t) + \sfr J(t) + r_f(t)+ 3)$ using Lemma \ref{lem:lpnorm} since $P^{-1}(x) = x^{1/\alpha}$ with $n=3$ and $\alpha \ge 1 $. Thus, summing across all the $K$ servers, the total energy cost $\sum_{k=1}^KP(s_k(t))$ at any time $\le 3^{\alpha-1} (m(t) + \sfr J(t) + r_f(t)+ 3)$. 

Thus, the running cost $n(t) + \sum_{k=1}^KP(s_k(t))$ for the algorithm is  at most $3^{\alpha-1}(m(t) + \sfr J(t) + r_f(t)) + 3^{\alpha}$.
For the case when  $J(t) < K$, $m(t)\ge K$, $r_f\ge K$, combining Lemma \ref{lem:msphi1}, \ref{lem:msphi2}, \ref{lem:msphi3}, \ref{lem:msphi4} we can write \eqref{eq:mothereq}, $  n(t) + \sum_{k=1}^KP(s_k(t)) + d\Phi/dt$
 \begin{align*}
  &  \le  (3^{\alpha-1}+1)(m(t) + \sfr J(t) + r_f(t)) + 3^{\alpha}+ c_{11} m^o\\
  &\quad -
  c_{11}(2-\alpha) m + c_{11} (2-\alpha)\left(\frac{K-1}{2}\right)\\ 
  &\qquad
  + c_{11}\sum_{k \in \opt} P({\tilde s}_k) -c_{12} \min(K,m) + c_{12} \sum_{k \in \opt} P({\tilde s}_k), \\
 & \qquad+ c_{31} r^o -
  c_{31}(2-\alpha) r_f + c_{31} (2-\alpha)\left(\frac{K-1}{2}\right)\\ 
  & \qquad+ c_{31}\sum_{k \in \opt} P({\tilde s}_k) -c_{32} \min(K,m) + c_{32} \sum_{k \in \opt} P({\tilde s}_k),\\ 
  & \qquad + c_4 ( - \b1_{J(t) < K}  \ \sfr J(t) + \sfr \min\{J^o(t), K)),\\
 &  \le c (n^o(t) +  \sum_{k \in \opt} P({\tilde s}_k)),
 \end{align*}
 for $c= c_1+c_2+c_3+3^{\alpha}$, where $c_1 = 4, c_2 = \frac{4}{2-\alpha}, c_3 =4$, since $\alpha<2$
Hence, the competitive ratio of the proposed online algorithm is  $8+\frac{4}{2-\alpha}+3^{\alpha}$.
Other case depending on $J(t)\ge K$ and $r_f<K$ follow similarly.
\end{proof}

\end{document}